\theoremstyle{plain}
\newcommand{\defparproblem}[4]{
  \vspace{3mm}
\noindent\fbox{
  \begin{minipage}{.95\textwidth}
  \begin{tabular*}{\textwidth}{@{\extracolsep{\fill}}lr} \textsc{#1}\\ \end{tabular*}
  {\bf{Input:}} #2  \\
  {\bf{Parameter:}} #3 \\
  {\bf{Question:}} #4
  \end{minipage}
  }
  \vspace{2mm}
}
\def\mypara#1{\sbox0{\parbox{\linewidth}{%
  \parindent0pt
 #1\par\xdef\myparasize{\the\prevgraf}}}%
\ifnum\myparasize=1
{\parindent0pt #1\par}%
\else
#1
\fi}
\newcommand{\AAA}{{\mathcal A}}
\newcommand{\OO}{{\mathcal O}}
\newcommand{\FF}{{\mathcal F}}
\newcommand{\II}{{\mathcal I}}
\newcommand{\ccP}{{\mathcal P}}
\newcommand{\pionetopiddeletion}{{\sc Finite ($\Pi_1, \Pi_2, \dotsc , \Pi_d$) Vertex Deletion}}
\newcommand{\gpionetopiddeletion}{{\sc ($\Pi_1, \Pi_2, \dotsc , \Pi_d$) Vertex Deletion}}
\newcommand{\indpionetopiddeletion}{{\sc Individually tractable ($\Pi_1, \Pi_2, \dotsc , \Pi_d$) Vertex Deletion}}
\newcommand{\pionetopidmodulator}{($\Pi_1, \Pi_2, \dotsc , \Pi_d$)-modulator }
\newcommand{\disjointpionetopiddeletioncomp}{{\sc Disjoint Finite ($\Pi_1, \Pi_2, \dotsc , \Pi_d$)-VDC}}
\newcommand{\disjointpionetopiddeletioncompu}{{\sc Disjoint Finite ($\Pi_1, \Pi_2, \dotsc , \Pi_d$)-VDC with Undeletable Vertices}}
\newcommand{\sv}[1]{}
\newtheorem{reduction rule}{Reduction Rule}%[subsection]
\newtheorem{branching rule}{Branching Rule}
\newtheorem{definition}{Definition}
\newtheorem{lemma}{Lemma}
\newtheorem{proposition}{Proposition}
\newtheorem{theorem}{Theorem}
\newtheorem{claim}{Claim}
\title{Deletion to Scattered Graph Classes I - case of finite number of graph classes\footnote{Preliminary version of the paper appeared in proceedings of IPEC 2020}}
\date{}
\author[1]{Ashwin Jacob}
\author[2]{Jari J. H. de Kroon}
\author[3]{Diptapriyo Majumdar}
\author[1]{Venkatesh Raman}
\affil[1]{The Institute of Mathematical Sciences, HBNI, Chennai, India
  \texttt{\{ajacob|vraman\}@imsc.res.in}}
\affil[2]{Eindhoven University of Technology, The Netherlands
    \texttt{j.j.h.d.kroon@tue.nl}}
\affil[3]{Indraprastha Institute of Information Technology Delhi, India
    \texttt{diptapriyo@iiitd.ac.in}}
\begin{document}

\maketitle

\begin{abstract}
Graph-modification problems, where we modify a graph by adding or deleting vertices or edges or contracting edges to obtain a graph in a {\it simpler} class, is a well-studied optimization problem
in all algorithmic paradigms including classical, approximation and parameterized complexity. Specifically, graph-deletion problems, where one needs to delete a small number of vertices to make the resulting graph to belong to a given non-trivial hereditary graph class, captures several well-studied problems including {\sc Vertex Cover}, {\sc Feedback Vertex Set}, {\sc Odd Cycle Transveral}, {\sc Cluster Vertex Deletion}, and {\sc Perfect Deletion}. Investigation into these problems in parameterized complexity has given rise to powerful tools and techniques.
%While a precise characterization of the graph classes for which the problem is {\it fixed-parameter tractable} (FPT) is elusive, it has long been known that if the graph class is characterized by a {\it finite} set of forbidden graphs, then the problem is FPT.

%In this paper, 
We initiate a study of a natural variation of the problem of deletion to {\it scattered graph classes}. We want to delete at most $k$ vertices so that in the resulting graph, each connected component belongs to one of a constant number of graph classes. 
%A simple hitting set based approach is no longer feasible even if each of the graph classes is characterized by finite forbidden sets. 
As our main result, we show that this problem is fixed-parameter tractable (FPT) when the deletion problem corresponding to each of the finite number of graph classes is known to be FPT and the properties that a graph belongs to any of the classes is expressible in Counting Monodic Second Order (CMSO) logic. While this is shown using some black box theorems in parameterized complexity, we give a
faster FPT algorithm when each of the graph classes has a finite forbidden set.  %This algorithm uses the well-known techniques of iterative compression and important separators, and follows the approach of Ganian et al [Transaction on Algorithms 2017] for a similar problem on constraint satisfaction problems (CSPs).

\end{abstract}

\newpage

%\clearpage
%\pagenumbering{arabic} 

\section{Introduction}
\label{sec:intro}
Graph modification problems, where we want to modify a given graph by adding or deleting vertices or edges to obtain a {\it simpler} graph are well-studied problems in algorithmic graph theory.
%Specifically, graph-deletion problems where we want to delete the minimum number of vertices to obtain a simpler graph, encompasses \textsf{vertex cover} (when the resulting graph is an edgeless graph), 
%\textsf{feedback vertex set} (when the resulting graph is an acyclic graph) and \textsf{odd cyclic transversal} (when the resulting graph is a bipartite graph). 
A classical work of Lewis and Yannakakis~\cite{LewisY80} (see also~\cite{Yannakakis81a}) showed the problem {\sc NP-complete} for the resulting simpler graph belonging to any non-trivial hereditary graph class. A graph property is simply a collection of graphs and is non-trivial if the class and its complement contain infinitely many graphs. A graph class is hereditary if it is closed under induced subgraphs.  Since the work of Lewis and Yannakakis, the complexity of the problem has been studied in several algorithmic paradigms including approximation and parameterized complexity.
%~\cite{Cai96} and has led to several powerful techniques and tools.
Specifically, deleting at most $k$ vertices to a fixed hereditary graph class is an active area of research in parameterized complexity over the last several years yielding several powerful tools and techniques.  Examples of such problems include {\sc Vertex Cover}, {\sc Cluster Vertex Deletion}, {\sc Feedback Vertex Set} and {\sc Chordal Vertex Deletion}.

It is well known that any hereditary graph class can be described by a forbidden set of graphs, finite or infinite, that contains all minimal forbidden graphs in the class.
In parameterized complexity, it is known that the deletion problem is fixed-parameter tractable ({FPT}) as long as the resulting hereditary graph class has a finite forbidden set~\cite{Cai96}. This is shown by an easy reduction to the {\sc Bounded Hitting Set} problem. This includes, for example, deletion to obtain a split graph or a cograph. We also know {FPT} algorithms for specific graph classes defined by infinite forbidden sets like {\sc Feedback Vertex Set} and {\sc Odd Cycle transversal}~\cite{cygan2015parameterized}.
While the precise characterization of the class of graphs for which the deletion problem is FPT is elusive, there are graph classes for which the problem is W-hard~\cite{Lokshtanov08-wheel,HHJKV13}.

Recently, some stronger versions have also been studied, where the problem is to delete
%One such class of problem is {\sc Strong-$({\FF} + \ell{\sf e})$-Deletion} where delete 
at most $k$ vertices to get a graph such that every connected component of the resulting graph is at most $\ell$ edges away from being a graph in a graph class $\FF$ (see~\cite{RaiS18,PhilipRS18,0001R16}).
Some examples of $\FF$ that have been studied in this stronger version include {\it forest, pseudo-forest or bipartite}. 

%\subsection{Our results}

\noindent
{\bf Our results:}
In this paper, we address the complexity of a very natural variation of the graph deletion problem, where in the resulting graph, each connected component belongs to one of the finitely many graph classes.
For example, we may want the connected components of the resulting graph to be a clique or a biclique (a complete bipartite graph). It is known that cliques forbid exactly $P_3$s, the induced paths of 
length $2$, and 
bicliques forbid $P_4$ and triangles.  So if we just want every connected component to be a clique or every connected component to be a biclique, then one can find appropriate constant sized subgraphs in the given graph and branch on them (as one would in a hitting set instance). However, if we want each connected component to be a clique or a biclique, such a simple approach by branching over $P_3$, $P_4$, or $K_3$ would not work. Notice that triangles are allowed to be present in clique components and $P_3$s are allowed to be present in biclique components. It is not even clear that there will be a finite forbidden set for this resulting graph class.

Let us formally define the deletion problem below where we want every connected component of the resulting graph to belong to at least one of the graph classes $\Pi_i$ with $i \in [d]$ for some finite integer $d$.

\defparproblem{{\gpionetopiddeletion}}{An undirected graph $G = (V, E)$, an integer $k$, and $d$ graph classes $\Pi_1,\ldots,\Pi_d$.}{$k$}{Is there a subset $Z \subseteq V(G), |Z| \leq k$ such that every connected component of $G - Z$ is in at least one of the graph classes $\Pi_1,\ldots,\Pi_d$?}

Many computational problems that are NP-hard in general graphs get solvable in polynomial time when the graph is restricted to some graph class $\Pi$. If a graph $G$ is such that each connected component belongs to at least one of the graph classes $\Pi_i$ for $i \in [d]$ where a problem is solvable in polynomial time, then in most cases, the problem is solvable in the entire graph $G$ as well. Vertex Deletion problems can be viewed as detecting a few outliers of a graph $G$ so that the graph after removing such outliers belongs to a graph class $\Pi$ where problems are efficiently solvable. Since problems get tractable even when the graph is such that its components belong to efficiently tractable graph classes, the vertex deletion problem corresponding to such scattered graph class is interesting as well.

%Let us try to motivate why we study deletion problems to such graph classes. Many computational problems that are NP-hard in graphs can be solved in polynomial time when we restrict the input graph to belong to some graph class. In most cases, the problem remains solvable in polynomial time even when the graph is such that each of its connected components belongs to some graph class where the problem is solvable in polynomial time. For example, let us look at {\sc Vertex Cover} problem when the input graph is such that each connected component is chordal or bipartite. The {\sc Vertex Cover} problem is polynomial time solvable in chordal graphs as well as bipartite graphs. Hence the problem is also solvable when each connected component is chordal or bipartite as we can solve the problem on each individual connected component.

%The above example illustrates how many problems that are polynomial time solvable for each graph class $\Pi_i$ for $i \in [d]$ is also polynomial time solvable when each of the connected components is in one of the components $\Pi_i$.

We look at the case when each problem {\sc $\Pi_i$ Vertex Deletion} is known to be FPT and the property that ``graph $G$ belongs to $\Pi_i$" is expressible in CMSO logic (See Section \ref{sec:prelims} for formal definitions). We call this problem {\indpionetopiddeletion}. We show that this problem is fixed parameter tractable.

\begin{theorem} \label{theorem:ind-trac-general-fpt}
{\indpionetopiddeletion} is FPT with respect to solution size $k$.
\end{theorem}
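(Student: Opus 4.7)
My plan is to combine three ingredients: the CMSO expressibility of the target property, the assumed FPT algorithms for the individual $\Pi_i$-Vertex Deletion problems, and a recursive understanding / unbreakability framework in the spirit of Chitnis et al. First, I would verify that the property $\Phi(H) := $ ``every connected component of $H$ belongs to some $\Pi_i$'' is itself CMSO-expressible: connectedness and quantification over connected components are CMSO-definable, and $\Phi(H)$ is a universal quantification over components of a finite disjunction of the given CMSO formulas for $\Pi_i$. Thus \indpionetopiddeletion{} is a CMSO-definable vertex deletion problem.

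Next, I would apply recursive understanding to $(G, k)$: one either finds a small $(s, k)$-balanced separator $S$ of size bounded by $g(k)$, or the remaining instance is $(q, k)$-unbreakable for $q = h(k)$. In the breakable case, branch over the $2^{|S|}$ choices of $S \cap Z$, recurse on the two sides, and combine subsolutions via a Courcelle-style tabulation on the bounded interaction across $S$; this gluing step crucially uses CMSO-expressibility of $\Phi$ to keep the number of relevant partial-solution types bounded in terms of $k$.

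Third, in the unbreakable case, any solution $Z$ of size at most $k$ leaves at most one ``big'' connected component $C^*$ of $G - Z$ (of size greater than $q$), and the rest of $V(G) \setminus Z$ has total size at most $q = h(k)$. I would enumerate, in $f(k)^{O(1)}$ many ways, the partition of this small remainder into small components with their assigned classes in $[d]$ together with the vertices of $Z$ lying in it, and guess the class $j \in [d]$ of $C^*$. What remains is to delete vertices on the ``big'' side so that the resulting single component lies in $\Pi_j$ with a prescribed boundary behaviour towards the chosen small pieces. This is an annotated instance of $\Pi_j$-Vertex Deletion, solvable by invoking the assumed FPT algorithm for $\Pi_j$-Deletion as a black box.

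The main obstacle, in my view, is the boundary bookkeeping in the unbreakable case: the black-box $\Pi_j$-Deletion solver must be forced to respect the constraints coming from the guessed small components (certain vertices undeletable, certain adjacencies leaving the big part required or forbidden). I would handle this by augmenting $G$ with bounded-size gadgets that locally encode these constraints while preserving CMSO-expressibility, or equivalently by invoking an undeletable-vertex variant of $\Pi_j$-Deletion. Correctness then follows from the unbreakability structure together with the design of the gadgets, and the overall running time stays FPT because the recursion depth and branching factor at each node are bounded by functions of $k$ alone.
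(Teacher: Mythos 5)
Your high-level skeleton matches the paper's: express the target property in CMSO, invoke the reduction to $(s,c)$-unbreakable graphs (the paper uses the Lokshtanov--Ramanujan--Saurabh--Zehavi theorem as a black box rather than redoing recursive understanding, and your ``Courcelle-style tabulation across $S$'' in the breakable case is precisely the nontrivial content of that cited theorem, so you should simply invoke it), and then exploit the fact that in an unbreakable graph a size-$k$ solution $Z$ leaves at most one component of $G-Z$ of size exceeding $s(k)$, with the rest of $V(G)\setminus Z$ of size at most $s(k)$. However, your treatment of the unbreakable case has a genuine gap. You propose to ``enumerate, in $f(k)^{O(1)}$ many ways, the partition of this small remainder \ldots together with the vertices of $Z$ lying in it,'' but the small remainder is an unknown subset of an $n$-vertex graph: the number of candidate vertex sets of size $\le s(k)+k$ is $n^{\Theta(s(k))}$, which is XP, not FPT. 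The missing idea is the paper's localization step: since $X\cup R$ (solution plus small side) is a $\Pi_j$-modulator of size at most $g(k)=s(k)+k$, one can call the \emph{assumed} FPT algorithm for $\Pi_j$-Vertex Deletion with this inflated budget to compute \emph{some} modulator $S$; every component of $G-X$ not in $\Pi_j$ must intersect $S$ (hereditarity), and each such component is connected, has size at most $s(k)$ and neighborhood of size at most $k$, so the Fomin--Villanger bound of $\binom{b+f}{b}$ on connected sets with bounded neighborhood lets one branch on the candidates for these components in FPT time. Without something of this kind your algorithm has no FPT means of locating the small side or the solution vertices within it. (You also silently assume a big component exists; if none does, the whole graph has size bounded in $k$ and brute force applies --- minor, but it should be said.)

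The second gap is your final step: you reduce to an ``annotated instance of $\Pi_j$-Vertex Deletion'' with undeletable vertices and prescribed boundary behaviour, and propose to handle this either by an undeletable-vertex variant or by gadgets encoding the constraints. Neither is available: the hypothesis of \indpionetopiddeletion{} only grants an FPT algorithm for the plain $\Pi_j$-Vertex Deletion problem, and you cannot design correctness-preserving gadgets for an arbitrary hereditary, CMSO-expressible class $\Pi_j$ that is given purely as a black box. The paper's algorithm is arranged so that only the vanilla problem is ever invoked: after the branching step deletes the candidate small components not in $\Pi_j$ wholesale (adjusting the budget), the residual task is exactly ``does $G'$ have a $\Pi_j$-deletion set of size at most $k'$,'' with no annotations. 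As written, your proof therefore relies on an unavailable primitive at its last step and on a non-FPT enumeration at its middle step; both can be repaired, but only by essentially the paper's modulator-plus-connected-set-branching argument.
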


The problem {\indpionetopiddeletion} covers a wide variety of collections of popular graph classes. Unfortunately, the running time of the algorithm from Theorem \ref{theorem:ind-trac-general-fpt} has gargantuan constant overheads. Hence we look at the special case of {\indpionetopiddeletion} named {\pionetopiddeletion} where each of the graph classes is characterized by a finite forbidden set. 
%Recall that when $i=1$, we have a $2^{\OO(k)}n^{\OO(1)}$ algorithm from \cite{Cai96}. 
We get a faster FPT algorithm for {\pionetopiddeletion} using the well-known techniques in parameterized complexity -- iterative compression and important separators.

\begin{theorem} \label{theorem:main-result}
{\pionetopiddeletion} can be solved in time $2^{poly(k)}n^{\OO(1)}$.
\end{theorem}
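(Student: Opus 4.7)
The plan is to combine iterative compression with two branching mechanisms: the classical important-separators technique and branching on the finite forbidden sets. First I would process the vertices in order and, after adding each new vertex, compress the current solution of size $k+1$ down to size $k$. Guessing the intersection $Y$ of the new solution with the old one (at most $2^{k+1}$ options) reduces the compression step to a disjoint variant: given a graph $G$, a modulator $W$ of size at most $k+1$ with every component of $G - W$ already in some $\Pi_j$, find $X \subseteq V(G) \setminus W$ with $|X| \le k - |Y|$ such that every component of $G - X$ lies in some $\Pi_j$.

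Next I would exploit two structural observations. Since each class $\Pi_j$ has a finite forbidden set $F_j$ it is hereditary, so any component of $G - X$ disjoint from $W$ is automatically in some $\Pi_j$ (being an induced subgraph of a component of $G-W$). Hence the nontrivial requirement concerns only components touching $W$. Guessing, for each $w \in W$, the index $i(w) \in [d]$ of the class to which the component of $w$ in $G - X$ should belong (at most $d^{k+1}$ options) and writing $W = W_1 \cup \cdots \cup W_d$, the disjoint problem reduces to finding $X$ disjoint from $W$ of size at most $k - |Y|$ such that (a) no component of $G - X$ contains vertices of two different $W_j$'s, and (b) the component of each $w \in W_j$ in $G - X$ contains no induced copy of any $F \in F_j$.

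I would then run a bounded-depth branching algorithm on this labelled instance with measure $k' = k - |Y| - |X|$. While condition (b) is violated for some $w \in W_j$ by an induced copy of $F \in F_j$ inside the component of $w$, branch on which of the at most $f := \max_{j, F \in F_j} |V(F)|$ non-$W$ vertices of that copy to add to $X$; this is a constant branching factor. While condition (a) is violated by a pair $u \in W_j, v \in W_{j'}$ with $j \neq j'$ lying in the same component of $G - X$, enumerate the at most $4^{k'}$ important $u$-$v$ separators of size at most $k'$ in $G - (W \setminus \{u,v\})$ and branch on which of them to add to $X$. Each branch strictly decreases $k'$, so the total number of leaves of the search tree is $2^{poly(k)}$, which together with the $d^{k+1} \cdot 2^{k+1}$ overhead for iterative compression gives the claimed running time.

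The main obstacle is justifying the important-separator branch rigorously: one must argue that whenever condition (a) is violated, some optimum solution uses an important $u$-$v$ separator in $G - (W \setminus \{u, v\})$, by the standard ``pushing lemma'' for minimum cuts. Additional care is needed because adding an important separator can introduce new forbidden-subgraph violations of condition (b), and conversely a forbidden-subgraph deletion may disconnect some $W_j$ vertex from others, so the two branching rules must be interleaved and their combined effect on the measure carefully accounted for. Once this is handled as in the classical applications of important separators to cut-style problems, the overall algorithm runs in time $2^{poly(k)} n^{O(1)}$.
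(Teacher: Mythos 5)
There is a genuine gap, and it sits exactly where you flag ``the main obstacle'': the pushing argument for important separators does not go through in the separating case, and repairing it is the actual content of the paper's algorithm. The standard pushing lemma works for cut-type constraints because the requirement on the retained side (``no path to the other terminal'') is monotone under replacing a minimal $u$--$v$ separator $S \subseteq X$ by an important separator $S^*$ with $R(u,S) \subseteq R(u,S^*)$ and $|S^*| \le |S|$. Here the requirement on the $u$-side is ``the component of $u$ belongs to some $\Pi_{i}$'', and this is \emph{not} monotone: after pushing, the component of $u$ in $G - ((X \setminus S) \cup S^*)$ absorbs vertices that previously lay in other components; each of those pieces was individually in some class, but their union with $u$'s component can contain forbidden sets for every class, so the replacement need not be a solution at all. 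Heredity only rescues you when the enlarged side contains no $W$-vertices --- that is precisely the non-separating situation, where the paper does use plain important separators (Lemma~\ref{lemma:non-sep-important-sep} pushes a $\{v\}$--$W$ separator away from $W$). For separators between two vertices of $W$ the paper instead introduces $(\ell,U)$-good separators and well-domination, proves the weaker exchange statement of Lemma~\ref{lemma:l-imp-sep}, and --- because $(\ell,U)$-important separators admit no $4^{k}$-type enumeration --- has to find a hitting set for them via tight separator sequences, a binary search for the good/bad boundary, and the recursive gadget-gluing construction of Lemma~\ref{lemma:the-big-one} (the {\sc Branching-Set} procedure). None of this is replaced by your interleaving remark; as written, your separator branch has no correctness guarantee, and if it did work it would yield a $2^{\OO(k)}$ algorithm, which even the paper does not claim for the separating case.

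A secondary, more easily fixable issue: a violation of your condition (b) can consist entirely of $W$-vertices carrying the same label $j$ (e.g.\ an induced forbidden graph of $\FF_j$ inside $W$). Then there is no non-$W$ vertex to branch on, and your rule (a) only separates differently labelled pairs, so the search can stall; the solution must \emph{disconnect} such a set (cf.\ Lemma~\ref{lemma:hit_or_disconnect}), which forces separator branching also between equally labelled $W$-vertices. That again funnels you back into the problematic pushing step above, so the proposal does not establish the theorem without essentially redoing the paper's Section~\ref{section:general-instance}.
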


Here, $poly(k)$ denotes a polynomial in $k$. 

\noindent
{\bf Previous Work:}
While there has been a lot of work on graph deletion and modification problems, one work that comes close to ours is the work by Ganian, Ramanujan and Szeider~\cite{GanianRS17} where they consider the parameterized complexity of finding strong backdoors to a scattered class of CSP instances.  In fact, in their conclusion, they remark that

`Graph modification problems and in particular the study of efficiently
computable modulators to various graph classes has been an integral part of parameterized complexity and
has led to the development of several powerful tools and techniques. We believe that the study of modulators
to ‘scattered graph classes’ could prove equally fruitful and, as our techniques are mostly graph based, our
results as well as techniques could provide a useful starting point towards future research in this direction'.

Our work is a starting point in addressing the parameterized complexity of the problem they suggest.
%
%{\bf Related Work:} Kolay et al.~\cite{KolayPS19} have studied deletion to a graph that can be partitioned into a pair of families whose intersection is finite family. 
%
%{\bf The work by Fahad, Saket and Sudeshna? communication complexity and graph families? I am not sure that it fits in here.}
%
%Maybe Ashutosh's work and thesis on graph deletion problems?

%\subsection{Our techniques}
\noindent
{\bf Our Techniques:}
%Explain why we could not go all the way.
The FPT algorithm for \indpionetopiddeletion\ in Theorem \ref{theorem:ind-trac-general-fpt} relies on a result by Lokshtanov et al. \cite{lokshtanov2018reducing} that allows one to obtain a (non-uniform) FPT algorithm for CMSO-expressible graph problems by designing an FPT algorithm for the problem on a well-connected class of graphs called unbreakable graphs. For the latter, using the observation that only one connected component after deleting the solution is large, which belongs to some particular class $\Pi_i$, we use the FPT algorithm for {\sc $\Pi_i$-Vertex Deletion} to obtain a modulator to the graph class $\Pi_i$ of size $s(k)$ for a function $s$. Then we use a branching rule to remove the components that are not in $\Pi_i$ in the modulator ``revealing" the solution to the problem. See Section \ref{section:general-graphs} for more details.

We now give a brief summary of the FPT algorithm of Theorem \ref{theorem:main-result} for the problem {\pionetopiddeletion}. 
Using the standard technique of iterative compression, we can assume that we have a solution $W$ of size $k+1$ for our problem. The problem can be divided into two cases depending on whether $W$ gets disconnected by the solution or not. If it does not, a simple algorithm via branching on vertices of the finite forbidden graphs plus some important separators \cite{marx2006parameterized} of the graph solves the problem. Else, we can assume that the solution contains a ``special" important separator. In this case, we come up with a recursive procedure to find a set $\mathcal{R}$ of $2^{poly(k)}$ vertices, at least one of which hits the solution. Hence we devise a branching rule on $\mathcal{R}$ solving the problem. See Section \ref{section:finite-forbidden-classes} for more details.

In the procedure to obtain $\mathcal{R}$, the graph in the recursive procedure is obtained by gluing a graph of $poly(k)$ vertices to a subgraph of $G$ along with a set of `boundary' vertices. The techniques we use here is very similar to the one used by Ganian et al. \cite{GanianRS17} where they studied a similar problem to identify outlier variable to a collection of easy Constraint Satisfaction Problems (CSPs). Similar such techniques involving tight separator sequences are used to give FPT algorithms for problems such as {\sc Parity Multiway Cut} \cite{lokshtanov2012parameterized}, {\sc Directed Feedback Vertex Set} \cite{lokshtanov2016linear}, {\sc Subset Odd Cycle Transversal} \cite{lokshtanov2017hitting} and {\sc Saving Critical Nodes with Firefighters} \cite{choudhari2017saving}.

There is a crucial distinction of our algorithm from the problems listed above solved via similar techniques. In the latter, adding and removing edges and vertices are possible. This is used to create gadgets that preserve some properties in the recursive input graph such as connectivity and parity of paths between pairs of vertices. In {\pionetopiddeletion}, we are not allowed to add or remove edges or vertices as doing so might create or destroy forbidden graphs corresponding to graph classes $\Pi_i$ for $i \in [d]$ changing the problem instance. We circumvent this difficulty by just doing contractions of degree two paths in the graph up to a certain constant. 

\section{Preliminaries}
\label{sec:prelims}

{\bf Graph Theory:}
For $\ell \in \mathbb{N}$, we use $P_{\ell}$ to denote the path on $\ell$ vertices. We use standard graph theoretic terminology from Diestel's book~\cite{diestel-book}.
A {\em tree} is a connected graph with no cycles.
A {\em forest} is a graph, every connected component of which is a tree.
A {\em paw} is a graph $G$ with vertex set $V(G) = \{x_1, x_2, x_3, x_4\}$ and edge set $E(G) = \{x_1 x_2, x_2 x_3, x_3 x_1, x_3 x_4\}$.
A graph is a {\em block graph} if all its biconnected components are cliques.
%Given a graph $G = (V, E)$, a set of vertices $S \subseteq V(G)$ is called a {\em separator} if the number of connected components in $G - S$ is more than that of $G$.
For a set $X \subseteq G$, we use $G[X]$ to denote the graph induced on the vertex set $X$ and we use $G - X$ (or $G \setminus X)$ to denote the graph induced by the vertex set $V(G) \setminus X$.
We say that a subset $Z \subseteq V(G)$ {\em disconnects} a subset $S \subseteq V(G)$ if there exists $v,w \in S$ with $v\neq w$ such that $v$ and $w$ occur in different connected components of the graph $G \setminus Z$. We call a path in a graph $P$ as a degree $2$ path if all the internal vertices of the path have degree $2$ in the  graph $G$.

\begin{definition} Let $G$ be a graph and disjoint subsets $X,S \subseteq V(G)$. We denote by $R_G(X,S)$ the set of vertices that lie in the connected component containing $X$ in the graph $G \setminus S$. We denote $R_G[X,S] = R_G(X,S) \cup S$. Finally we denote $NR_G(X,S) = V(G) \setminus R_G[X,S]$ and $NR_G[X,S] = NR_G(X,S) \cup S$. We drop the subscript $G$ if it is clear from the context.
\end{definition}

\begin{definition} \cite{marx2006parameterized} Let $G$ be a graph and $X,Y \subseteq V(G)$.
\begin{itemize}
\item A vertex set $S$ disjoint from $X$ and $Y$ is said to disconnect $X$ and $Y$ if $R_G(X,S) \cap Y = \phi$. We say that $S$ is an $X-Y$ \textbf{separator} in the graph $G$.
\item  An $X-Y$ separator is \textbf{minimal} if none of its proper subsets is an $X-Y$ separator.
\item  An $X-Y$ separator $S_1$ is said to \textbf{cover} an $X-Y$ separator $S$ with respect to $X$ if $R(X,S) \subset R(X,S_1)$.
\item Two $X-Y$ separators $S_1$ and $S_2$ are said to be incomparable if neither covers the other.
\item In a set $\mathcal{H}$ of $X-Y$ separators, a separator $S$ is said to be component-maximal if there is no separator $S'$ in $\mathcal{H}$ which covers $S$. Component-minimality is defined analogously.
\item An $X-Y$ separator $S_1$ is said to dominate an $X-Y$ separator $S$ with respect to $X$ if $|S_1| \leq |S|$ and $S_1$ covers $S$ with respect to $X$.
\item We say that $S$ is an important $X-Y$ separator if it is minimal and there is no $X-Y$ separator dominating $S$ with respect to $X$.
\end{itemize}
\end{definition}

For the basic definitions of Parameterized Complexity, we refer to \cite{cygan2015parameterized}.
%We move the definitions of basic definitions of Parameterized Complexity to the Appendix.

\noindent
{\bf Parameterized Complexity:}
A parameterized problem $L$ is a subset of 
$\Sigma^* \times \mathbb{N}$ for some finite alphabet $\Sigma$.
An instance of a parameterized problem is denoted by $(x, k)$ where $x \in \Sigma^*, k \in \mathbb{N}$.
We assume that $k$ is given in unary and without loss of generality $k \leq |x|$.

\begin{definition}[Fixed-Parameter Tractability]
\label{defn:fpt}
A parameterized problem $L \subseteq \Sigma^* \times \mathbb{N}$ is said to be {\em fixed-parameter tractable} (FPT) if there exists an algorithm $\AAA$, a computable function $f: \mathbb{N} \rightarrow \mathbb{N}$ and a constant $c$ independent of $f, k, |x|$, such that given input $(x, k)$, runs in time $f(k)|x|^{c}$ and correctly decides whether $(x, k) \in L$ or not.
\end{definition}

%\begin{definition}[Kernelization]
%\label{defn:kernel}
%Let $L \subseteq \Sigma^* \times \mathbb{N}$ be a parameterized problem and $g: \mathbb{N} \rightarrow \mathbb{N}$ be a computable function.
%We say that $L$ admits a {\em kernel} if there exists an algorithm that runs in time polynomial in $|x| + k$, and replaces the input instance $(x, k)$ by $(x', k')$ such that {\bf (1)} $(x,k) \in L$ if and only if $(x',k') \in L$, and {\bf (2)} $|x'| + k' \leq g(k)$ for some function $g$ depending only on $k$.
%If $g(k)$ is $k^{\OO(1)}$, then $L$ is said to admit a {\em polynomial kernel}.
%\end{definition}
%
%It is well-known (see Cygan et al.~\cite{CFKLMPPS15}) that a decidable problem is FPT if and only if it admits a kernel.
See Cygan et al.~\cite{CFKLMPPS15} for more details on Parameterized Complexity.

\noindent
\textbf{Counting Monadic Second Order Logic.} The syntax of Monadic Second Order Logic
(MSO) of graphs includes the logical connectives $\lor, \land, \lnot, \leftrightarrow , \implies$, variables for vertices, edges, sets of vertices and sets of edges, the quantifiers $\forall$ and $\exists$, which can be applied to these variables, and five binary relations:
\begin{enumerate}
\item $u \in U$, where $u$ is a vertex variable and $U$ is a vertex set variable;
\item $d \in D$, where $d$ is an edge variable and $D$ is an edge set variable;
\item $inc(d, u)$, where $d$ is an edge variable, $u$ is a vertex variable, and the interpretation is
that the edge $d$ is incident to $u$;
\item $adj(u, v)$, where $u$ and $v$ are vertex variables, and the interpretation is that $u$ and $v$ are
adjacent;
\item equality of variables representing vertices, edges, vertex sets and edge sets.
\end{enumerate}
Counting Monadic Second Order Logic (CMSO) extends MSO by including atomic sentences testing whether the cardinality of a set is equal to $q$ modulo $r$, where $q$ and $r$ are integers such that $0 \leq q < r$ and $r \geq 2$. That is, CMSO is MSO with the following atomic sentence: $card_{q,r}(S) = true$ if and only if $|S| \equiv q \mod r$, where $S$ is a set. We refer to \cite{courcelle1990monadic,arnborg1991easy} for a detailed introduction to CMSO.

\section{FPT Algorithm for {\indpionetopiddeletion}} \label{section:general-graphs}

We first formally define the problem. 

\defparproblem{{\indpionetopiddeletion}}{An undirected graph $G = (V, E)$, an integer $k$, and $d$ hereditary graph classes $\Pi_1,\ldots,\Pi_d$ such that for all $i \in [d]$, {\sc $\Pi_i$ Vertex Deletion} is FPT and properties $P_i(H)$ for input graph $H$ is CMSO expressible.}{$k$}{Is there a subset $Z \subseteq V(G), |Z| \leq k$ such that every connected component of $G - Z$ is in at least one of the graph classes $\Pi_1,\ldots,\Pi_d$?}

We recall the notion of unbreakable graphs from \cite{lokshtanov2018reducing}.
\begin{definition}
\label{defn:unbreakable-graph}
A graph $G$ is {\em $(s, c)$-unbreakable} if there does not exist a partition of the vertex set into three sets $X, C$ and $Y$ such that
(a) $C$ is an $(X, Y)$-separator: there are no edges from $X$ to $Y$ in $G \setminus C$, (b) $C$ is {\em small}: $|C| \leq c$, and (c) $X$ and $Y$ are {\em large}: $|X|, |Y | \geq s$.
\end{definition}

We now use the following theorem from \cite{lokshtanov2018reducing} which says that if the problem is FPT in unbreakable graphs, then the problem is FPT in general graphs. Let $CMSO[\psi]$ denote the problem with graph $G$ as an input, and the objective is to determine whether $G$ satisfies $\psi$.

\begin{theorem}\cite{lokshtanov2018reducing} \label{theorem:cmso-unbreakable}
Let $\psi$ be a CMSO sentence.  For all $c \in \mathbb{N}$, there exists $s \in \mathbb{N}$ such that if there exists an algorithm that solves $CMSO[\psi]$ on $(s, c)$-unbreakable graphs in time $\OO(n^d)$ for some $d > 4$, then there exists an algorithm that solves $CMSO[\psi]$ on general graphs in time $O(n^d)$.
\end{theorem}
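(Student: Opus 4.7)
The plan is to prove the theorem by the \emph{recursive understanding / protrusion replacement} meta-framework: iteratively reduce a general input $G$ to an equivalent $(s,c)$-unbreakable graph $G^\star$ and then apply the hypothesized algorithm to $G^\star$. Since each reduction step will strictly shrink $G$ and the assumed algorithm handles unbreakable instances in $O(n^d)$ time, the whole procedure will run in $O(n^d)$ as well, provided we can afford the cost of locating separations and amortize it against the shrinkage.

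The first ingredient is a Myhill--Nerode style fact for CMSO on graphs of bounded boundary. For fixed $\psi$ and any $c$, define the equivalence $\equiv_{\psi,c}$ on $c$-boundaried graphs by $H_1 \equiv_{\psi,c} H_2$ iff for every $c$-boundaried graph $F$, the gluing $H_1 \oplus F$ satisfies $\psi$ exactly when $H_2 \oplus F$ does. Using the standard translation of CMSO to tree automata on graphs of bounded treewidth (\`a la Courcelle), one shows that $\equiv_{\psi,c}$ has a finite number of classes and that each class admits a canonical representative whose size is bounded by some computable $\eta_\psi(c)$. I would then fix $s := \eta_\psi(c) + c + 1$, which determines the $s$ promised in the statement.

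The replacement step goes as follows. Suppose $G$ is \emph{not} $(s,c)$-unbreakable, witnessed by a partition $V(G) = X \sqcup C \sqcup Y$ with $|C|\le c$, $|X|,|Y|\ge s$, and no edges between $X$ and $Y$ in $G \setminus C$. View $G[X\cup C]$ as a $c$-boundaried graph with boundary $C$. Because $|X\cup C| \ge s > \eta_\psi(c)$, the canonical representative $H^\star$ of its $\equiv_{\psi,c}$-class is strictly smaller; replacing $G[X\cup C]$ by $H^\star$ (identifying boundaries along $C$) yields a graph $G'$ with $|V(G')|<|V(G)|$ and, by definition of $\equiv_{\psi,c}$, $G \models \psi \iff G' \models \psi$. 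Iterating this reduction until no witness exists leaves an $(s,c)$-unbreakable graph on which we invoke the given $O(n^d)$ algorithm.

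The main obstacle is the algorithmic one: efficiently deciding whether $G$ is $(s,c)$-unbreakable and, if not, producing a valid witness $(X,C,Y)$ fast enough that the iterative loop still fits in $O(n^d)$. Here I would use the deterministic separator-finding subroutine from the recursive-understanding toolkit (Chitnis--Cygan--Hajiaghayi--Pilipczuk--Pilipczuk and its derivatives), which, for parameters $s$ and $c$ depending only on $\psi$, runs in time $f(c,s)\cdot n^{O(1)}$ with exponent roughly $4$, and either certifies unbreakability or returns the desired separation. Since each call strictly decreases $|V(G)|$, the loop executes at most $n$ times; the delicate point is to amortize so that the total cost remains within $O(n^d)$ for $d>4$. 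This is where the threshold $d>4$ in the statement comes from, and it is the step I expect to require the most care: one either reuses structural information between consecutive separator searches, or invokes the separator subroutine only on a bounded-size ``active'' region around the most recent change so that the per-step cost sums up telescopically rather than being paid $n$ times from scratch.
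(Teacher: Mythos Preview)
The paper does not prove this theorem at all: it is quoted verbatim from Lokshtanov, Ramanujan, Saurabh, and Zehavi~\cite{lokshtanov2018reducing} and used as a black box. There is therefore nothing in the paper to compare your attempt against.

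That said, your sketch is in the spirit of the actual proof in~\cite{lokshtanov2018reducing}, but it has a real gap. In your replacement step you say ``replace $G[X\cup C]$ by the canonical representative $H^\star$ of its $\equiv_{\psi,c}$-class,'' but you never explain how you \emph{compute} which class $G[X\cup C]$ belongs to. Deciding $\psi$ on $G[X\cup C]$ is not enough: you need its full behaviour against all possible complements $F$, i.e., its CMSO type relative to the boundary. This is strictly more information than a single bit, and obtaining it is precisely the crux of the argument. The actual proof handles this by recursion: it computes an \emph{unbreakable tree decomposition} (every torso is $(s,c)$-unbreakable, adhesions have size $\le c$) and processes it bottom-up, at each node invoking the assumed unbreakable-graph algorithm not for $\psi$ itself but for an enriched CMSO sentence that encodes the full boundary type; only then can the subtree be replaced by a bounded-size representative before moving to the parent. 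Your iterative ``find a separation, shrink, repeat'' loop is missing this recursive type-computation, and without it the replacement step is not algorithmic.
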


We prove the following lemma.

\begin{lemma}\label{lemma:problem-in-cmso}
\indpionetopiddeletion\ is CMSO expressible.
\end{lemma}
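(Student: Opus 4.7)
The plan is to exhibit, for each fixed $k \in \mathbb{N}$, an explicit CMSO sentence $\psi_k$ that holds on $G$ iff $(G,k)$ is a yes-instance of \indpionetopiddeletion. The existence of a deletion set of size at most $k$ will be encoded by $k$ vertex existential quantifiers $\exists v_1 \cdots \exists v_k$ defining $Z := \{v_1,\ldots,v_k\}$, with repetitions allowed so that the effective size is at most $k$. Expressing ``every connected component of $G - Z$ belongs to some $\Pi_i$'' then reduces to asserting that for every vertex set $C$ that is a connected component of $G \setminus Z$, the induced subgraph $G[C]$ lies in at least one $\Pi_i$.

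To make this precise, I will write a standard CMSO sub-formula $\operatorname{Comp}(C,v_1,\ldots,v_k)$ saying that $C$ is a connected component of $G \setminus Z$: $C$ is non-empty and disjoint from $Z$; $C$ is connected in $G - Z$ (every bipartition of $C$ into two non-empty parts has a crossing edge whose endpoints avoid $Z$); and $C$ is maximal (no vertex of $V \setminus (C \cup Z)$ has a neighbour in $C$). This is well known to be MSO-expressible. Separately, for each $i$ the hypothesis supplies a CMSO sentence $\varphi_i$ with $H \models \varphi_i$ iff $H \in \Pi_i$. I will form the relativization $\varphi_i^C$ by restricting every vertex quantifier to $C$, every edge quantifier to edges with both endpoints in $C$, and every set quantifier to subsets of $C$ or of the edges of $G[C]$, while leaving the $\operatorname{card}_{q,r}$ predicates in place (now counting cardinalities of the restricted sets). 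A routine structural induction then shows that $\varphi_i^C$ is a CMSO formula with one free set variable and that $G \models \varphi_i^C[C \mapsto S]$ iff $G[S] \in \Pi_i$.

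Combining the ingredients, the final sentence is
\[
\psi_k \;\equiv\; \exists v_1 \cdots \exists v_k\; \forall C \;\Bigl( \operatorname{Comp}(C, v_1, \ldots, v_k) \;\Rightarrow\; \bigvee_{i=1}^{d} \varphi_i^C \Bigr),
\]
which is CMSO by construction and captures the problem exactly. Since $\psi_k$ depends on $k$, this produces a family $\{\psi_k\}_{k \in \mathbb{N}}$ of sentences, which matches the non-uniform setting used by Theorem~\ref{theorem:cmso-unbreakable} in the subsequent argument.

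The only point requiring care is that relativization of a CMSO formula to a free set variable is again CMSO, including the counting predicates (which relativize correctly because the sets they count are themselves restricted), and that ``is a connected component'' is MSO-definable. Both are textbook and reduce to bookkeeping on the syntax, so no real obstacle is expected.
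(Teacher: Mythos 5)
Your proposal is correct and follows essentially the same route as the paper: both encode the instance by existentially guessing the deletion set (your $k$ vertex variables versus the paper's set variable $X$ with an explicit ``$|X|=k$'' formula), universally quantifying over connected pieces of $G-Z$, and plugging in the given CMSO sentences $\psi_i$ for membership in $\Pi_i$. The only minor difference is that the paper quantifies over \emph{all} connected subsets $C \subseteq V(G)\setminus X$ (which is equivalent to the component condition because the classes $\Pi_i$ are hereditary), whereas you quantify exactly over connected components via a maximality clause and make the relativization $\varphi_i^C$ explicit — a slightly more careful variant that does not rely on heredity, but not a genuinely different argument.
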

\begin{proof}
We use $conn(X)$ which verifies that a subset $X$ of a graph $G$ induces a connected subgraph. It is known that $conn(X)$ is expressible by an MSO formula \cite{cygan2015parameterized}. Also for $X \subseteq V(G)$ , we can express the sentence ``$|X| = k$'' as $\exists x_1, \dotsc, x_k \forall u \in V(G) (u \in X) \implies (\lor_{i \in [k]}u = x_i)$%  u = x_2, \dotsc, u = x_k$.

%We first prove that the sentence ``graph $G$ is in $\Pi_i$" can be expressible by an CMSO formula. 

Recall that $P_i(G)$ denote the graph property ``graph $G$ is in $\Pi_i$" for $i \in [d]$ and input graph $G$. Let the CMSO sentences for properties $P_i(G)$ be $\psi_i(G)$. The overall CMSO sentence for our problem $\psi$ is $\exists X \subseteq V(G), |X| = k,    \forall C \subseteq V(G) \setminus X : conn(C) \implies (\lor_{i \in [d]} \psi_i(G[C]))$.
\end{proof}

Hence Theorem \ref{theorem:cmso-unbreakable} and Lemma \ref{lemma:problem-in-cmso}  allow us to focus on unbreakable graphs.

\begin{theorem}\label{theorem:ind-trac-unbreakable}
\indpionetopiddeletion\ is FPT in $(s(k),k)$-unbreakable graphs for any function $s$ of $k$.
\end{theorem}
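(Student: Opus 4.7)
The plan is to combine the structural consequences of unbreakability with the assumed FPT algorithms for $\Pi_i$-Vertex Deletion to locate a bounded modulator, and then branch over this modulator to identify the solution.

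First, I would establish the following structural fact: if $Z$ is a hypothetical solution of size at most $k$, then since $G$ is $(s(k),k)$-unbreakable and $|Z|\le k$, the graph $G-Z$ has at most one connected component of size at least $s(k)$. Indeed, two such components together with $Z$ as separator would witness breakability, contradicting the assumption. A similar regrouping argument applied to the small components of $G-Z$ shows that their total size is bounded by a function of $k$ as well: otherwise one could split the small components into two bags each of size at least $s(k)$, again contradicting unbreakability. Hence either $|V(G)|$ is already bounded by a function of $k$, in which case brute-force enumeration over $k$-subsets finishes the problem in FPT time, or $G-Z$ contains a unique large component $L$ with $|V(G)\setminus L|\le s(k)+k$.

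Second, I would guess the class of $L$. Since $L$ is connected and is a component of $G-Z$, it belongs to some $\Pi_i$ with $i\in[d]$; there are at most $d$ guesses. For the fixed $i$, observe that $V(G)\setminus L$ is a $\Pi_i$-modulator of $G$ of size at most $s(k)+k$, so the assumed FPT algorithm for $\Pi_i$-Vertex Deletion, invoked with budget $s(k)+k$, produces in FPT time a $\Pi_i$-modulator $M$ of $G$ with $|M|\le s(k)+k$. If no such modulator is returned, the current guess of $i$ is discarded.

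Third, using $M$, I would branch to reveal $Z$. For every connected component $C$ of $G[M]$ that does not belong to $\Pi_i$: if $C\cap Z=\emptyset$, then $C$ lies in some component of $G-Z$; by heredity of $\Pi_i$, $C\subseteq L$ would give $C\in\Pi_i$, a contradiction, so $C$ must lie inside a small component of $G-Z$. In other words, $Z$ separates $C$ from the ``bulk'' $L\cap (V(G)\setminus M)$ in $G$ with a cut of size at most $k$. Hence either I branch on one of the at most $|M|$ vertices of $C$ to include in $Z$, or I branch on a vertex of an important $(C, L\cap(V(G)\setminus M))$-separator of size at most $k$ (of which only a bounded number of candidates exist by standard results on important separators). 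Iterating until no offending component of $G[M]$ remains leaves an instance where all components of $G[M]$ are in $\Pi_i$; the residual part of $Z$ can then be identified by enumerating the at most $\binom{|M|}{k}$ subsets of $M$ of size at most $k$, and checking each by computing the connected components of $G-Z$ and testing membership in the $\Pi_j$'s.

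The main obstacle I expect is making the third step rigorous. The branching must correctly separate the cases ``hit $C$ directly'' and ``separate $C$ from the bulk by a bounded cut'', choose a bounded family of candidate cut vertices, and argue that the resulting branching tree has depth $O(k)$ with branching factor bounded by a function of $k$. Another subtlety is handling components of $G[M]$ that do belong to some $\Pi_j$ with $j\neq i$ but not to $\Pi_i$: for those, case (b) may legitimately occur without any vertex of $Z$ directly hitting $C$, so the branching must carefully include the separator-based option. Once that bookkeeping is in place, the total running time is $f(k)\cdot n^{O(1)}$ for some function $f$, establishing the theorem.
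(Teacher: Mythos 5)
Your first two steps coincide with the paper's proof: at most one component of $G-Z$ is large, the small case is solved by brute force, and in the large case you guess the class $\Pi_i$ of the big component and compute a $\Pi_i$-modulator $M$ of size at most $s(k)+k$ with the assumed FPT algorithm. The divergence, and the gap, is in your third step. First, your branching is phrased against the ``bulk'' $L\cap(V(G)\setminus M)$, but $L$ is the large component of $G-Z$ for the \emph{unknown} solution $Z$, so the set you want to separate from is not available to the algorithm; and even if you replace it by a computable proxy, the claim that some solution can be assumed to contain (a vertex of) an \emph{important} $(C,\text{bulk})$-separator needs an exchange argument showing that swapping in an important separator preserves the scattered-solution property. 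That argument is exactly the nontrivial content the paper only establishes later (Lemma~\ref{lemma:non-sep-important-sep}), in the restricted finite-forbidden-family setting and under extra structure; you cannot invoke it as a ``standard result'' for the general CMSO case, and Theorem~\ref{theorem:ind-trac-unbreakable} in the paper deliberately avoids separators altogether. Second, your terminal step, enumerating the $\binom{|M|}{k}$ subsets of $M$ as candidates for the residual solution, presupposes that some solution lies inside $M$. Nothing guarantees this: $M$ is an arbitrary $\Pi_i$-modulator returned by a black box, and the actual solution may consist entirely of vertices outside $M$ (think of a minimum solution that is disjoint from the particular modulator the black box happens to output). You give no argument, and unbreakability alone does not supply one.

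The paper closes both holes differently, and this is the idea your plan is missing. It guesses the partition of the found modulator $S$ into $S_{CX}=S\cap(C\cup X)$ and $S_R=S\cap R$, argues via heredity that every component of $G-X$ outside $\Pi_i$ must meet $S_R$, and then, for each $v\in S_R$, enumerates the candidate small component $Q_v$ of $G-X$ containing $v$ directly: since $|Q_v|\le s(k)$ and $N(Q_v)\subseteq X$ has size at most $k$, Proposition~\ref{prop-2} (enumeration of connected sets with bounded size and bounded neighborhood, at most $\binom{b+f}{b}$ of them) yields a bounded branching that deletes the whole offending component in one shot (Branching Rule~\ref{branching-rule-1}); no important separators are needed. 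After all such components are removed, the remaining solution is a pure $\Pi_i$-modulator of the reduced graph of size at most the remaining budget, so the algorithm simply re-runs the FPT algorithm for $\Pi_i$-Vertex Deletion on the reduced instance, rather than searching inside $M$. To repair your proposal you would either have to prove the important-separator exchange lemma and a containment-in-$M$ statement (neither of which is immediate), or adopt the paper's enumeration of small components plus a final black-box $\Pi_i$-deletion call.
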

\begin{proof}

Let $G$ be an $(s(k),k)$-unbreakable graph and $X$ be a solution of size $k$. Look at the connected components of $G-X$. Since $X$ is a separator of size at most $k$, at most one connected component %$C$
 of $G-X$ has size more than $s(k)$.
 
Let us first look at the case where no connected component of $G-X$ has size more than $s(k)$. In this case, we can bound the number of connected components by $2s(k)$. Suppose not. Then we can divide vertex sets of connected components into two parts $C_1$ and $C_2$, each having at least $s(k)$ vertices. Then the partition $(C_1,X,C_2)$ of $V(G)$ contradicts that $G$ is $(s(k),k)$-unbreakable. 

Since each component has size at most $s(k)$, we have $|V(G) \setminus X| \leq 2(s(k))^2$. Hence $|V(G)|  \leq 2(s(k))^2+k$. We can solve the problem by going over all subsets of size $k$ in $G$ and checking if every connected component of $G-X$ is in some graph class $\Pi_i$ for $i \in [d]$. This gives us an algorithm with  running time ${h(k) \choose k} h(k)^{\OO(1)}$ where $h(k) = 2(s(k))^2+k$.

%We can conclude that in the graph $G$, there is a modulator of size $g(k) = 2(s(k))^2+k$ to the graph class $\Pi_j$.. 
% and continue with the algorithm.

Let us now look at the case where there is a component $C$ of $G-X$ of size more than $s(k)$.  Let $\Pi_j$ be the graph class which $C$ belongs to. Let $R = V -(X \cup C)$. Since $X$ is a separator of size at most $k$ with separation $(C,R)$, we can conclude that $|R| \leq s(k)$. Hence we can conclude that $X \cup R$ is a modulator of size at most $s(k)+k$ such that $G-(X \cup R)$ is a graph in graph class $\Pi_i$.  Hence we can conclude that $G$ has a modulator of size at most $g(k) = s(k)+k$ to the graph class $\Pi_j$.

Our algorithm first guesses the graph class $\Pi_j$ and then uses the FPT algorithm for $\Pi_j$-Vertex Deletion to find a modulator $S$ of size $g(k)$ such that $G-S$ is in the graph class $\Pi_j$.

We know that $(C,X,R)$ is a partition of $V(G)$. Let $(S_{CX}, S_R)$ be the partition of $S$ where $S_{CX} = S \cap (C \cup X)$ and $S_R = S \cap R$. The algorithm goes over all 2-partitions of $S$ to guess the partition $(S_{CX}, S_R)$.

\begin{claim} For every component $Q$ in the graph $G-X$ such that $Q$ is not in the graph class $\Pi_j$, we have $S_R \cap V(Q) \neq \emptyset$.
\end{claim}
\begin{proof}
Suppose $S_R \cap V(Q) = \emptyset$. By definition, we have $V(Q) \subseteq R$. Hence if $S_R \cap V(Q) = \emptyset$, we have $S \cap V(Q) = \emptyset$. But then this contradicts the fact that $G-S$ is in the graph class $\Pi_j$ as $Q$ is not in the graph class $\Pi_j$ and $\Pi_j$ is a hereditary graph class.
\end{proof}

For every vertex $v \in S_R$, let $Q_v$ denote the component in $G-X$ that contains $v$. Note that the neighborhood of $V(Q_v)$ in the graph $G$ is a subset of $X$ which is of size at most $k$. We now use the following proposition that helps us to guess the subset $V(Q_v)$.
%Proposition 2. 
\begin{proposition}(\cite{fomin2012treewidth}) \label{prop-2}
Let $G = (V, E)$ be a graph. For every $v \in V$ , and $b, f \geq 0$, the number of connected vertex subsets $B \subseteq V$ such that 
\begin{itemize}
\item $v \in B$
\item $|B| = b+1$
\item $|N(B)| = f$
\end{itemize}
%$(i) v \in B,(ii) |B| = b + 1,$ and $(iii) |N(B)| = f$
is at most ${b+f \choose b}$ and can be enumerated in time $O(n {b+f \choose b})$ by making use of polynomial space.
\end{proposition}

We have the following Branching Rule.

\begin{branching rule}\label{branching-rule-1}
 Let $v \in S_R$. Using the enumeration algorithm from Proposition \ref{prop-2}, go over all connected vertex subsets $B \subseteq V$ such that $ v \in B, |B| = b + 1,$ and $|N(B)| = f$ where $ 1 \leq b \leq s(k)$ and $ 1 \leq f \leq k$ and return the instance $(G-B, k- |N(B)|)$.
\end{branching rule}

The branching rule is safe because in one of the branches, the algorithm rightfully guesses $B = V(Q_v)$. The algorithm repeats the branching rule for all vertices $v \in S_R$. Hence we can assume that the current instance is such that $S_R = \emptyset$. We update the sets $X$ and $R$ by accordingly deleting the removed vertices. Let $(G',k')$ be the resulting instance. We have the following claim.

\begin{claim} The set $X$ is such that $|X| \leq k'$ and $G-X$ is in the graph class $\Pi_j$.
\end{claim}
The proof of the claim comes from the fact as $S \cap R = \emptyset$, every component other than $C$ does not intersect with $S$. Hence these components have to be in the graph class $\Pi_j$ as $G-S$ is in the graph class $\Pi_j$.
%Let $Q$ be a connected component of $G-X$ containing $v$. Suppose $G[Q]$ is not in the graph class $\Pi_1$. Then  we have $|Q| \leq s(k)$ and $N(Q) \subseteq X$ and hence of size at most $k$. We also know that $S_Y \cap Q \neq \empty$.
%
%For a vertex vertex $v \in S_Y \cap Q$, We use Proposition \ref{prop-2} to enumerate all subsets $B$ of size a most $s(k)$ and $|N(B)| \leq k$ to guess $Q$. Since we know that $N(Q)$ is in the solution $X$, we delete $Q$ from $G$ and drop $k$ by $N(Q)$.
%
%We do this for every component in $G-X$ that is not in $\Pi_1$. Hence we remain in the situation where the solution is also a modulator to the graph class $\Pi_1$ with budget $k' \leq k$. 

The algorithm now again uses the FPT algorithm for the graph class $\Pi_j$ to obtain the solution of size $k'$ thereby solving the problem. We summarize the algorithm below.

\begin{enumerate}
\item For any of the given graph classes check whether the given graph $G$ has a modulator of size at most $g(k)$. If none of them has, then return NO. Otherwise, let $\Pi_j$ be such a graph class with $S$ being the modulator.
\item  Go over all 2-partitions $(S_{CX},S_R)$. For each $v \in S_R$, apply Branching Rule \ref{branching-rule-1}. Let $(G',k')$ be the resulting instance.
\item Check whether the graph $G'$ has a $\Pi_j$-deletion set of size at most $k'$. If yes, return YES. Else return NO.
\end{enumerate}

Running Time: Let $f_j(k)n^{O(1)}$ be the running time for $\Pi_j$-Vertex Deletion. We use $j \cdot f_j(g(k)) \cdot 2^{g(k)} n^{O(1)}$ time to obtain set $S$ and its 2-partition where $g(k) = s(k)+k$. We use overall $O(n(g(k)+1))^{k+1})$ time to enumerate the connected vertex sets in Branching Rule \ref{branching-rule-1}. The branching factor is bounded by $(g(k)+1))^{k+1}$ and the depth is bounded by $k$. Since choices of $v$ is bounded by $g(k)$, exhaustive application of Branching Rule takes at most $(g(k))^{k(k+2)}n^{O(1)}$ time. Finally we apply the algorithm for $\Pi_j$-Vertex Deletion again taking at most  $f_j(k)n^{O(1)}$ time.

Hence the overall running time is bounded by $j \cdot f_j(g(k)) \cdot 2^{g(k)} (g(k)+1))^{k(k+2)}n^{O(1)}$.
\end{proof}

The proof of Theorem \ref{theorem:ind-trac-general-fpt} follows from from Theorem \ref{theorem:cmso-unbreakable} and Theorem \ref{theorem:ind-trac-unbreakable}.

%\begin{theorem} \label{theorem:ind-trac-general-fpt}
%\indpionetopiddeletion\ is FPT in general graphs.
%\end{theorem}

\section{Deletion to scattered classes with finite forbidden families} \label{section:finite-forbidden-classes}

%We assume that every finite forbidden set $\FF_i$ does not have any graphs with degree one vertices from here on.

%\defproblem{\pionetopiddeletion}{A graph $G$, an integer $k$ and finite forbidden sets $\FF_1, \FF_2, \dotsc, \FF_d$ for graph classes $\Pi_1, \Pi_2, \dotsc, \Pi_d$}{Is there a subset $Z \subseteq V(G), |Z| \leq k$ so that the graph $G \setminus Z$ is such that every connected component of $G \setminus Z$ is in one of the graph class $\Pi_i$ for $i \in d$?}

Unfortunately, the algorithm for \indpionetopiddeletion\ in Theorem \ref{theorem:ind-trac-general-fpt} has a huge running time due to the gargantuan overhead from applying Theorem \ref{theorem:cmso-unbreakable}.
We now look into a special case of \indpionetopiddeletion\ where every graph class $\Pi_i$ with $i \in [d]$ can be characterized by a finite forbidden family. Note that {\sc $\Pi_i$ Vertex Deletion} is FPT for each $i \in [d]$ from the simple branching algorithm over vertices of the induced subgraphs $H$ of the input graph $G$ that is isomorphic to members of the finite forbidden family $\FF_i$. Also, the properties that "graph $G$ is in $\Pi_i$" can be expressed in CMSO logic as we can hard code the graphs in $\FF_i
$ in the formula. Hence the problem is indeed a special case of \indpionetopiddeletion. In this section, we give an algorithm for this case with running time much better when compared to that in Theorem \ref{theorem:ind-trac-general-fpt}.

We have the following definition.
%Should this come in the earlier section? 
\begin{definition}
We call a set $Z$ a \textbf{\pionetopidmodulator} if every connected component of $G \setminus Z$ is in one of the graph classes $\Pi_i$ for $i \in d$.
\end{definition}

\noindent
{\bf Brief Outline of the section:}

In Section \ref{section:iterative-compression}, we first use the standard technique of iterative compression to obtain a tuple $(G,k,W)$ of the input instance {\disjointpionetopiddeletioncomp} where $W$ is a \pionetopidmodulator\ of size at most $k+1$ and the aim is to obtain a solution of size at most $k$ disjoint from $W$. We also add an additional requirement to the problem that some of the vertices cannot be in the solution which will be useful later.
%{\sc Disjoint ($\Pi_1, \Pi_2, \dotsc , \Pi_d$) Vertex Deletion Compression} (\disjointpionetopiddeletioncomp\ for short) using the standard technique of iterative compression. In \disjointpionetopiddeletioncomp, we can assume that a \pionetopidmodulator\ $W$ of the graph of size $k-i$ is also given to us as input for some $i \leq k$ ($i$ is the number of vertices from the modulator we have guessed to be in the solution) and the aim is to check if there is a \pionetopidmodulator\ of the graph of size $k-i-1$ disjoint from $W$. This is formally described in Subsection \ref{section:iterative-compression}.

In Subsection \ref{section:non-separating}, we give an FPT algorithm for \disjointpionetopiddeletioncomp\ in the special case when the solution that we are looking for leaves $W$ in a single component. The algorithm uses the standard technique of important separators \cite{marx2006parameterized}.

Finally in Subsection \ref{section:general-instance}, we handle general instances of {\disjointpionetopiddeletioncomp}. We focus on instances where the solution separates $W$. We guess $W_1 \subset W$ as the part of $W$ that occurs in some single connected component after deleting the solution. The algorithm finds a set $\mathcal{R}$ of $2^{poly(k)}$ vertices one of which intersects the solution and do a branching on vertices of $\mathcal{R}$. Finding $\mathcal{R}$ involves a recursive subprocedure. 

Since, the solution separates $W$, we know that it contains a $W_1 - (W \setminus W_1)$ separator $X$. It can the proven that $X$ is a `special' kind of important separator (whose definition is tailored to the problem).
 The algorithm uses the technique of tight separator sequences \cite{lokshtanov2012parameterized}. 
%Informally, an important separator sequence partitions the graph into slices with small boundaries which allows us to look for solutions local to the slices.
It guesses the integer $\ell$ which is the size of the part of the solution present in the graph containing $W_1$ after removing $X$. The algorithm then constructs the important separator sequence corresponding to $\ell$ and finds the separator $P$ furthest from $W_1$ in the sequence such that there is a {\pionetopidmodulator} of size $\ell$ in the graph containing $W_1$ after removing $P$. If separator $X$ either intersects $P$ or dominates the other, then a recursive smaller instance is easily constructable. In the case when the two separators are incomparable, the algorithm identifies a set of vertices $Y$ that is reachable from $W_1$ after deleting $P$. The algorithm then constructs a graph gadget of $k^{\OO(1)}$ vertices whose appropriate attachment to the boundary $P$ of the graph $G[Y]$ gives a graph $G'$ which preserves the part of the solution of $G$ present in $G[Y]$. %Also this part  solution of size strictly smaller than that in the initial graph $G$. Also, using the solution in the graph $G'$, we can construct the solution in the original graph $G$. 
 Since this part of the solution is strictly smaller in size, the algorithm can find the set of vertices hitting the solution $\mathcal{R}$ for $G$ by recursively finding a similar set in $G'$.

%\noindent
%{\bf Organization of the section:}
%This section is divided into three subsections. In the first section, we use iterative compression to transform the \pionetopiddeletion\ problem into a compressed version of the problem named {\disjointpionetopiddeletioncomp}. In the second section, we give an algorithm for \disjointpionetopiddeletioncomp\ for the special case of having a non-separating solution which we will define later. Finally, in the third section, we give a general algorithm for \disjointpionetopiddeletioncomp\ and subsequently \pionetopiddeletion\ using the algorithm in the second section as a subroutine.

\subsection{Iterative Compression}\label{section:iterative-compression}

We use the standard technique of iterative compression to transform the \pionetopiddeletion\ problem into the following problem {\sc Disjoint ($\Pi_1, \Pi_2, \dotsc , \Pi_d$) Vertex Deletion Compression}(\disjointpionetopiddeletioncomp) such that an FPT algorithm with running time $\OO^*(f(k))$ for the latter gives a $\OO^*(2^{k+1}f(k))$ time algorithm for the former. %The details are moved to the full version.

\defparproblem{\disjointpionetopiddeletioncomp}{A graph $G$, an integer $k$, finite forbidden sets $\FF_1, \FF_2, \dotsc, \FF_d$ for graph classes $\Pi_1, \Pi_2, \dotsc, \Pi_d$ and a subset $W$ of $V(G)$ such that $W$ is a ($\Pi_1, \Pi_2, \dotsc , \Pi_d$)- modulator of size $k+1$.}{$k$}{Is there a subset $Z \subseteq V(G) \setminus W, |Z| \leq k$ such that $Z$ is a \pionetopidmodulator of the graph $G$?}

We now define an extension of \disjointpionetopiddeletioncomp\ to incorporate the notion of undeletable vertices. The input additionally contains a set $U \subseteq V(G)$ of undeletable vertices and we require the solution $Z \subseteq V(G)$ to be disjoint from $U$.

\defparproblem{{\disjointpionetopiddeletioncompu}}{A graph $G$, an integer $k$, finite forbidden sets $\FF_1, \FF_2, \dotsc, \FF_d$ for graph classes $\Pi_1, \Pi_2, \dotsc, \Pi_d$ a subset $W$ of $V(G)$ such that $W$ is a ($\Pi_1, \Pi_2, \dotsc , \Pi_d$)- modulator of size $k+1$ and a subset $U \subseteq V(G)$.}{$k$}{Is there a subset $Z \subseteq V(G) \setminus (W \cup U), |Z| \leq k$ such that $Z$ is a \pionetopidmodulator of the graph $G$?}

%We begin the description of the algorithm with the 
%have the 
We have the following reduction rule.
\begin{reduction rule}\label{rr1}
If a connected component of $G$ belongs to some graph class $\Pi_i$, then remove all the vertices of this connected component.
\end{reduction rule}

\begin{lemma}\label{lemma:rr1-safety}%%($\star$)\footnote{The proofs of theorems and lemmas marked ($\star$) are moved to the full version due to lack of space}
 Reduction Rule \ref{rr1} is safe.
\end{lemma}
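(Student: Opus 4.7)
The plan is to prove the two directions of equivalence between the original instance $(G,k,W)$ and the reduced instance $(G', k, W')$, where $C$ is the connected component removed (with $C \in \Pi_i$ for some $i \in [d]$), $G' = G - V(C)$, and $W' = W \setminus V(C)$. First I would note that $\Pi_i$ is hereditary: since it is characterized by a finite forbidden family $\FF_i$, any induced subgraph of a $\Pi_i$-graph has no induced copy of any $H \in \FF_i$ either, hence lies in $\Pi_i$. I would also observe that $W'$ remains a ($\Pi_1,\ldots,\Pi_d$)-modulator of the new graph of size at most $k+1$, which is what iterative compression needs.

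For the forward direction, suppose $Z \subseteq V(G) \setminus W$ with $|Z| \le k$ is a solution. I would set $Z' := Z \setminus V(C)$. Clearly $|Z'| \le k$ and $Z' \subseteq V(G') \setminus W'$. I need every component of $G' - Z'$ to belong to some $\Pi_j$. Observe that $G' - Z' = (G - Z) - V(C)$, so any component $D'$ of $G' - Z'$ is an induced subgraph of some component $D$ of $G - Z$, obtained by deleting $V(C) \cap V(D)$. Since $D \in \Pi_j$ for some $j$ and $\Pi_j$ is hereditary, $D' \in \Pi_j$.

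For the backward direction, suppose $Z' \subseteq V(G') \setminus W'$ with $|Z'| \le k$ is a solution for the reduced instance. I take $Z := Z'$ in the original instance. Since $Z' \subseteq V(G') = V(G) \setminus V(C)$, we have $Z \cap V(C) = \emptyset$, and $Z$ is disjoint from $W' = W \setminus V(C)$ by assumption, so $Z$ is disjoint from $W$ as well. The components of $G - Z$ are exactly $C$ together with the components of $G' - Z'$: indeed, $C$ is a component of $G$ untouched by $Z$, and removing $Z$ from the rest of $G$ yields the components of $G' - Z'$. Each component of $G' - Z'$ lies in some $\Pi_j$ by hypothesis, and $C$ itself is in $\Pi_i$ by the premise of the rule. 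Therefore $Z$ is a solution for the original instance.

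The proof is essentially routine; the only substantive ingredient is the heredity of each $\Pi_j$, which is immediate from the finite-forbidden-family characterization, and the observation that deleting whole components commutes with the modulator being present in the remaining graph. Hence Reduction Rule \ref{rr1} is safe. An entirely analogous argument applies to the variant \disjointpionetopiddeletioncompu{}, where the set $U$ of undeletable vertices simply becomes $U \setminus V(C)$ under the reduction, and neither direction of the equivalence is affected since $Z'$ and $Z$ are disjoint from $V(C)$.
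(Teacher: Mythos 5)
Your proof is correct and follows essentially the same route as the paper: drop the removed component in the forward direction and re-attach it (using that it already lies in some $\Pi_i$) in the backward direction. Your forward direction is in fact slightly more careful than the paper's, since you explicitly restrict the solution to $Z \setminus V(C)$ rather than asserting that $Z$ itself works in $G'$, but the argument is the same in substance.
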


\begin{proof}
Let $\mathcal{X}$ be the connected component of $G$ removed to get an instance $(G',k,W')$. We claim that $(G,k,W)$ is a YES-instance if and only if $(G',k,W')$ is also a YES-instance. Let $Z$ be a solution of $G$  of size at most $k$. Since $G'$ is an induced subgraph of $G$, $Z$ is also a solution of $G'$ as well. Conversely, suppose $Z'$ is the solution of size $k$ for graph $G'$. Then every connected component of the graph $G' \setminus Z'$ belongs to some graph class $\Pi_i$ for $i \in [d]$. Since $\mathcal{X}$ also belongs to some graph class $\Pi_i$ for some $i \in [d]$, we have that $Z'$ is also a solution for the graph $G$.
\end{proof}

We now develop the following notion of forbidden sets which can be used to identify if a connected component of a graph belongs to any of the classes $\Pi_i$ for $i \in [d]$.

\begin{definition}
%For $J \subseteq [d]$, 
We say that a subset of vertices $C \subseteq V(G)$ is a \textbf{forbidden set} %\textit{$J$-forbidden}
 of $G$ if $C$ occurs in a connected component of $G$ and
 there exists a subset $C_i \subseteq C$ such that $G[C_i] \in \FF_i$ for all $i \in [d]$ and $C$ is a minimal such set. %If $C$ is $[d]$-forbidden, we simply call it a forbidden set of $G$.
\end{definition}

Clearly, if a connected component of $G$ contains a forbidden set, then it does not belong to any of the graph classes $\Pi_i$ for $i \in [d]$. We note that even though the forbidden set $C$ is of finite size,
%the notion of forbidden sets looks similar to those in standard branching algorithms for deletion to graph classes with finite forbidden sets,
 the lemma below rules out the possibility of a simple algorithm involving just branching over all the vertices of $C$.

\begin{lemma}\label{lemma:hit_or_disconnect}%($\star$)
Let $G$ be a graph and $C \subseteq V(G)$ be a forbidden set of $G$. Let $Z$ be a \pionetopidmodulator of $G$. Then $Z$ disconnects $C$ or $Z \cap C \neq \emptyset$.
\end{lemma}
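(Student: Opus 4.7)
The plan is to prove the statement by contradiction. Assume for contradiction that $Z$ is a \pionetopidmodulator\ of $G$ with $Z \cap C = \emptyset$ and such that $Z$ does not disconnect $C$. Since, by the definition of a forbidden set, the vertices of $C$ all lie in a single connected component of $G$, and removing $Z$ does not separate any two vertices of $C$ (because $Z$ does not disconnect $C$, and $Z \cap C = \emptyset$ ensures no vertex of $C$ is itself removed), all vertices of $C$ continue to lie together in a single connected component $Q$ of $G \setminus Z$.

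Since $Z$ is a \pionetopidmodulator, the component $Q$ belongs to some graph class $\Pi_j$ with $j \in [d]$. Now invoke the definition of a forbidden set: there is a subset $C_j \subseteq C$ such that $G[C_j] \in \FF_j$, the finite forbidden family characterizing $\Pi_j$. Because $C_j \subseteq C \subseteq V(Q)$, the graph $G[C_j]$ is an induced subgraph of $Q$. But each $\Pi_i$ is hereditary (indeed, it is characterized by the forbidden family $\FF_i$), so $Q$ cannot contain any member of $\FF_j$ as an induced subgraph. This contradicts $G[C_j] \in \FF_j$ being an induced subgraph of $Q$, and completes the proof.

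The argument is essentially a one-step observation; no step is genuinely hard. The only minor subtlety is being careful about what "disconnects" means in the two edge cases: if $|C|=1$, then $Z$ cannot disconnect $C$, so the lemma forces $Z \cap C \neq \emptyset$ (which is still consistent with the argument above, since in that case $C_j = C$ is a single vertex forbidden graph, so $Q$ containing this single vertex already violates membership in $\Pi_j$); and the assumption $Z \cap C = \emptyset$ is exactly what we need so that $Z$ removing vertices outside $C$ is the only way it could separate vertices of $C$. These observations do not require a separate case analysis, and the contrapositive argument covers them uniformly.
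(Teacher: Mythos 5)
Your proof is correct and follows essentially the same route as the paper's: assuming $Z$ is disjoint from $C$ and does not disconnect it, the component of $G \setminus Z$ containing $C$ has an induced subgraph from each (equivalently, from its own) forbidden family $\FF_i$, so it belongs to no $\Pi_i$, a contradiction. The extra remarks about the $|C|=1$ edge case are harmless but unnecessary.
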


\begin{proof}
Suppose $Z$ is disjoint from $C$. We know that $C$ cannot occur in a connected component $\mathcal{X}$ of $G \setminus Z$ as $\mathcal{X}$ cannot belong to any graph class $\Pi_i$ for $i \in [d]$ due to the presence of subsets $C_i \subseteq C$ such that $G[C_i] \in \FF_i$. Hence $Z$ disconnects $C$.
\end{proof}

\subsection{Finding non-separating solutions}\label{section:non-separating}

\sloppy In this section, we focus on solving instances of {\disjointpionetopiddeletioncompu} which have a non-separating property defined as follows.

\begin{definition}
Let $(G,k, W)$ be an instance of {\disjointpionetopiddeletioncompu} and $Z$ be a solution for this instance. Then $Z$ is called a \textbf{non-separating} solution if $W$ is contained in a single connected component of $G \setminus Z$ and \textbf{separating} otherwise. If an instance has only separating solutions, we call it a separating instance. Otherwise, we call it non-separating.
\end{definition}

We now describe the following lemma on important separators which is helpful in our algorithm to compute non-separating solutions with undeletable vertices. To get separators of size at most $k$ which is disjoint from an undeletable set $U$, we replace each vertex $u \in U$ with $k+1$ copies of $u$ that forms a clique.

\begin{lemma}\label{lemma:imp-sep} \cite{chen2009improved} For every $k \geq 0$ and subsets $X,Y, U \subseteq V(G)$, there are at most $4^k$ important $X-Y$ separators of size at most $k$ disjoint from $U$. Furthermore, there is an algorithm that runs in $\OO(4^k k n)$ time that enumerates all such important $X-Y$ separators and there is an algorithm that runs in $n^{\OO(1)}$ time that outputs one arbitrary component-maximal $X-Y$ separator disjoint from $U$. 
\end{lemma}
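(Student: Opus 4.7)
My plan is to reduce the variant with an undeletable set $U$ to the classical setting of Chen, Liu and Lu (without undeletable vertices), and then invoke the known bounds for important separators. Following the hint in the excerpt, I construct an auxiliary graph $G'$ from $G$ by replacing each vertex $u \in U$ with a clique $K_u$ of $k+1$ fresh vertices, where every vertex of $K_u$ inherits $N_G(u)$ as its external neighborhood; the vertices of $X$ and $Y$ are left unchanged.

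The core claim I would prove is that the important $X$--$Y$ separators of size at most $k$ in $G'$ are in bijection with the important $X$--$Y$ separators of size at most $k$ in $G$ that are disjoint from $U$. One direction is immediate: such a separator sits entirely in $V(G) \setminus U$ and interacts with $G'$ exactly as with $G$, so it is still minimal and not dominated. For the converse, I would argue that no important separator $S$ in $G'$ with $|S| \le k$ can contain a clique vertex $w \in K_u$: since $|K_u| = k+1 > k \ge |S|$, some $w' \in K_u \setminus S$ exists. As $w$ and $w'$ share the same external neighborhood and both lie in the clique $K_u$, in the graph $G' \setminus (S \setminus \{w\})$ the vertex $w$ is adjacent only to $w'$ and to vertices of $N_G(u) \setminus S$, all of which are already in the component of $w'$. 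Hence $S \setminus \{w\}$ is still an $X$--$Y$ separator, contradicting the minimality of $S$. Thus every important separator of $G'$ of size at most $k$ avoids every $K_u$ and lifts to an important separator of $G$ disjoint from $U$ of the same size.

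Having established the correspondence, I apply the Chen--Liu--Lu enumeration algorithm directly to $G'$ to get at most $4^k$ important $X$--$Y$ separators of size $\le k$, enumerable in time $\OO(4^k k n)$. The branching used in that algorithm selects a vertex $v$ of the component-maximal min $X$--$Y$ cut and recurses on the two cases "place $v$ in the separator" (budget drops by one) and "forbid $v$ from the separator" (add $v$ to $X$, which strictly increases the required min cut), yielding the $4^k$ recursion. Rather than pay the $k+1$ blowup in the input size, one can implement the same branching on $G$ directly by treating each vertex of $U$ as having infinite capacity in the underlying max-flow computations, which preserves the stated $\OO(4^k k n)$ running time. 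For the last claim, a single component-maximal $X$--$Y$ separator disjoint from $U$ is produced in polynomial time by a standard max-flow computation with $U$-vertices set to infinite capacity and then taking the cut whose source side is $V(G) \setminus R$, where $R$ is the set of vertices reachable from $Y$ in the residual graph; this is the farthest min $X$--$Y$ cut and, by construction, is not covered by any other min cut.

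The main obstacle I expect is the minimality argument above: verifying carefully that removing $w$ from $S$ does not inadvertently open a new $X$--$Y$ path through $w$, which hinges on the fact that $w$ and $w'$ have identical external neighborhoods inside $G'$.
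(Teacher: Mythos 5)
Your handling of the first two claims follows the paper's route exactly: the paper offers no proof of its own beyond the citation to Chen, Liu and Lu together with the one-line remark that every $u \in U$ is replaced by $k+1$ copies forming a clique, which is precisely your graph $G'$, and your argument that an important separator of size at most $k$ in $G'$ cannot contain a clique vertex (swap $w$ for an unused copy $w'$ with the same external neighbourhood and contradict minimality) is sound. One small caveat: what you obtain is not literally a bijection with the important $X$--$Y$ separators of $G$ that avoid $U$. An important separator of $G'$ corresponds to a $U$-avoiding separator of $G$ that is undominated by $U$-avoiding separators; it may still be dominated in $G$ by a separator meeting $U$. Since the ordinary important separators of $G$ disjoint from $U$ do embed into the important separators of $G'$, and since the relativized notion is in fact the one the paper uses downstream (e.g.\ in Lemma \ref{lemma:non-sep-important-sep}), the $4^k$ bound and the enumeration are unaffected, but the word ``bijection'' should be weakened.

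The genuine gap is in the last claim. ``Component-maximal'' here has to be read with respect to the family of all $U$-avoiding $X$--$Y$ separators of size at most $k$: that is exactly how Lemma \ref{lemma:tight-separator-sequence} consumes the subroutine, since the maximality of the tight separator sequence is proved by playing an arbitrary addable separator of size at most $k$ against the component-maximality of a member, not against another minimum cut. The farthest minimum cut you propose is component-maximal only among minimum cuts, as you yourself note. Concretely, let $X$ be adjacent to $a$, let $a$ be adjacent to $b_1,b_2,b_3$, let each $b_i$ be adjacent to $Y$, and take $U=\emptyset$, $k=3$: the farthest minimum cut is $\{a\}$, but $\{b_1,b_2,b_3\}$ has size $3 \le k$ and covers it, so $\{a\}$ is not component-maximal; starting the tight-separator-sequence construction from it would even yield a non-maximal sequence. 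The standard repair is a greedy augmentation rather than a single max-flow: grow a set $X' \supseteq X$ by testing, for one candidate vertex $v$ at a time and with the $U$-vertices made undeletable as in your reduction, whether $X' \cup \{v\}$ still admits a separator of size at most $k$ from $Y$, adding $v$ whenever the test succeeds; after at most $n$ minimum-cut computations no vertex can be added, and the separator extracted from the final $X'$ is component-maximal in the required family, because any size-at-most-$k$ separator covering it would exhibit a vertex that could still have been added. This keeps the overall running time polynomial, as the lemma asserts, but it is a different (and necessary) algorithm from the farthest minimum cut.
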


We now have the following lemma which connects the notion of important separators with non-separating solutions to our problem. 

\begin{lemma}\label{lemma:non-sep-important-sep}%($\star$)
Let $(G,k,W,U)$ be an instance of \disjointpionetopiddeletioncompu\ obtained after exhaustively applying Reduction Rule~\ref{rr1} and $Z$ be a non-separating solution. Let $v$ be a vertex such that $Z$ is a $\{v\}-W$ separator. Then there is a solution $Z'$ which contains an important $\{v\}-W$ separator of size at most $k$ in $G$ and disjoint from $U$.
\end{lemma}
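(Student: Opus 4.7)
The plan is to apply the classical important-separator shifting technique. Since $Z$ is a $\{v\}$--$W$ separator of size at most $k$ avoiding $W \cup U$, I would first extract a minimal $\{v\}$--$W$ separator $S \subseteq Z$, so that $|S| \le k$ and $S \cap U = \emptyset$. If $S$ is already an important $\{v\}$--$W$ separator, take $Z' := Z$ and $S^{*} := S$ and we are done. Otherwise, invoke Lemma~\ref{lemma:imp-sep}, strengthened to forbid the undeletable set $U$ by the standard trick of replacing each $u \in U$ by $k+1$ copies forming a clique (which forces any separator of size at most $k$ to miss $U$): this produces an important $\{v\}$--$W$ separator $S^{*}$ of size at most $|S| \le k$, disjoint from $U$, that dominates $S$. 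In particular $R(\{v\},S) \subseteq R(\{v\},S^{*})$ and $S \subseteq R[\{v\},S^{*}]$.

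I would then define the candidate $Z' := (Z \setminus R[\{v\}, S^{*}]) \cup S^{*}$. Three routine checks go through immediately: $S^{*} \subseteq Z'$ by construction; $Z' \cap (W \cup U) = \emptyset$ since both $Z$ and $S^{*}$ avoid $W \cup U$; and $|Z'| \le k$ because $|Z \cap R[\{v\}, S^{*}]| \ge |S| \ge |S^{*}|$ yields $|Z'| \le |Z| - |S| + |S^{*}| \le k$. The easy half of checking that ``$Z'$ is a solution'' uses the identity $Z' \cap NR(\{v\},S^{*}) = Z \cap NR(\{v\},S^{*})$: since $S^{*} \subseteq Z'$ separates $R(\{v\},S^{*})$ from $NR(\{v\},S^{*})$, every connected component of $G \setminus Z'$ lying entirely in $NR(\{v\},S^{*})$ coincides with the corresponding subgraph of $G \setminus Z$ in that region, and thus already belongs to some $\Pi_i$ by hypothesis.

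The main obstacle is verifying that the remaining component on the $v$-side, namely $G[R(\{v\}, S^{*})]$ (which arises because $Z' \cap R(\{v\},S^{*}) = \emptyset$), also lies in some $\Pi_i$. This enlarged region contains the original $v$-component $C_v \in \Pi_{i_0}$ together with the vertices absorbed when the boundary is pushed from $S$ out to $S^{*}$: namely vertices of $Z \cap R(\{v\},S^{*})$ (in particular $S \setminus S^{*}$) and whole non-$W$ components of $G \setminus Z$ that fall inside $R(\{v\},S^{*})$. The plan for this step is to perform the push incrementally, one cover-step at a time, and to argue — using the hereditary nature of each $\Pi_i$, the non-separating hypothesis (so that no $W$-vertex is ever absorbed into the $v$-side), the finite forbidden families $\FF_i$, and Reduction Rule~\ref{rr1} (which has already removed any component globally in some $\Pi_i$) — that after each cover-step the growing $v$-region remains in one of the classes $\Pi_i$, so that after the full push $G[R(\{v\}, S^{*})]$ lies in some $\Pi_i$. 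Closing this step is the technically delicate heart of the proof and is where the finite-forbidden-set structure is essentially used.
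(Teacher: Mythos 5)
Your setup is sound up to the last step: extracting a minimal separator $S \subseteq Z$, passing to an important $\{v\}$--$W$ separator $S^{*}$ dominating it and avoiding $U$ (via the clique-copies trick), defining $Z' = (Z \setminus R[\{v\},S^{*}]) \cup S^{*}$, the size bound, and the analysis of components inside $NR(\{v\},S^{*})$ (note these are induced subgraphs of components of $G\setminus Z$, not equal to them, so you need heredity of the $\Pi_i$ there, which holds). But the proof is not finished: the one claim everything hinges on --- that the component $G[R(\{v\},S^{*})]$ of $G\setminus Z'$ belongs to some $\Pi_i$ --- is only announced as a plan, and the plan you sketch (an incremental ``cover-step'' push, invoking the finite forbidden families, Reduction Rule~\ref{rr1} and non-separation) is aimed at the wrong target: there is no natural discrete sequence of intermediate separators to push along, and no reason is given why absorbing deleted $Z$-vertices together with whole components of $G\setminus Z$ lying in \emph{different} classes should keep the growing region in one class. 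What actually closes the step is the hypothesis you never use: $W$ is a \pionetopidmodulator of $G$. Since $S^{*}$ is a $\{v\}$--$W$ separator, $R(\{v\},S^{*})$ is connected and disjoint from $W$, hence contained in a single connected component of $G\setminus W$; that component is in some $\Pi_i$, and by heredity $G[R(\{v\},S^{*})] \in \Pi_i$. With this one observation (and no use of finiteness of the $\FF_i$, of Reduction Rule~\ref{rr1} beyond guaranteeing a non-empty separator inside $Z$, or of non-separation) your argument is complete.

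For comparison, the paper shifts less: it sets $Z' = (Z\setminus A)\cup B$, where $A\subseteq Z$ is the minimal separator and $B$ an important separator dominating it, keeping all other $Z$-vertices (even those inside $R(\{v\},B)$) deleted, and then argues globally that no forbidden set survives: a component of $G\setminus Z'$ containing a forbidden set must intersect $W$ (otherwise it lies inside a component of $G\setminus W$, contradicting that $W$ is a modulator), but a component intersecting $W$ avoids $R(\{v\},B)$ and hence avoids $A\setminus B = Z\setminus Z'$, so it is contained in a component of $G\setminus Z$, a contradiction. So both routes ultimately rest on the modulator property of $W$: yours uses it to certify the whole $v$-side region in one stroke, the paper's uses it to rule out forbidden sets away from $W$. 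Your variant is fine once the missing observation is spelled out.
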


\begin{proof}
Since we have applied Reduction Rule \ref{rr1} as long as it is applicable, there is no connected component $\mathcal{X}$ of $G$ that is disjoint from $W$. 
% as $\mathcal{X}$ has to belong to some graph class $\Pi_i$ as $W$ is a \pionetopidmodulator. 
 Hence every component of $G$, in particular the component containing $v$ intersects with $W$. Therefore, since the solution $Z$ disconnects $v$ from $W$, it must contain a minimal non-empty $\{v\}-W$ separator $A$ which is disjoint from $U$. If $A$ is an important $\{v\}-W$ separator, we are done. Else there is an important $\{v\}-W$ separator $B$ dominating $A$ which is also disjoint from $U$. We claim that $Z' = (Z \setminus A) \cup B$ is also a solution. Clearly $|Z'| \leq |Z|$. Suppose that there exists a forbidden set $C$ in the graph $G \setminus Z'$. Let $\mathcal{X}$ be the connected component of $G \setminus Z'$ containing $C$. Suppose $\mathcal{X}$ is disjoint from $W$. Then there exists a connected component $\mathcal{Y}$ of $G \setminus W$ containing $\mathcal{X}$, contradicting that $W$ is a {\pionetopidmodulator}. Hence $\mathcal{X}$ must intersect $W$. Since $B \subseteq Z'$ disconnects $v$ from $W$, we can conclude that $\mathcal{X}$ is not contained in $R_{G}(v,B)$ as if so it cannot intersect with $W$. 

By the definition of $Z'$, any component of the graph $G \setminus Z'$ which intersects $Z \setminus Z' = A \setminus B$ has to be contained in the set $R_{G}(v,B)$. Hence the component $\mathcal{X}$ is disjoint from $Z \setminus Z'$. Thus, there exists a component $\mathcal{H}$ of the graph $G \setminus Z$ containing $\mathcal{X}$. But this contradicts that $Z$ is a {\pionetopidmodulator}.
\end{proof}

We use the above lemma along with Lemma \ref{lemma:imp-sep} to obtain our algorithm for non-separating instances. The algorithm finds a minimal forbidden set $C$ in polynomial time which by definition is of bounded size. Then it branches on the set $C$ and also on $\{v\}-W$ important separators of size at most $k$ of $G$ for all $v \in C$.

\begin{lemma}\label{lemma:non-sep}
Let $(G,k,W,U)$ be a non-separating instance of \disjointpionetopiddeletioncompu. Then the problem can be solved in $2^{\OO(k)}n^{\OO(1)}$ time.
\end{lemma}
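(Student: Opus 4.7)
The plan is to branch on a forbidden set of bounded size, augmented with important separators from Lemma~\ref{lemma:imp-sep}. First, exhaustively apply Reduction Rule~\ref{rr1} so that no connected component of the current graph already lies in any $\Pi_i$; if the resulting graph is empty, return YES with the empty solution. Otherwise, compute a minimal forbidden set $C$ in polynomial time by trying all induced subgraphs of size at most $\beta := \sum_{i=1}^{d}\max_{H \in \FF_i}|V(H)|$, a constant depending only on the input forbidden families. By Lemma~\ref{lemma:hit_or_disconnect}, every non-separating solution $Z$ either intersects $C$ or disconnects $C$. Since $Z$ is non-separating, $W$ sits in a single component of $G\setminus Z$, so in the disconnection case at least one vertex $v \in C$ is separated from $W$ by $Z$; Lemma~\ref{lemma:non-sep-important-sep} then lets us assume $Z$ contains an important $\{v\}$-$W$ separator of size at most $k$ that is disjoint from $U$.

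This motivates the following branching. For each $v \in C \setminus (W\cup U)$, recurse on $(G-v,\,k-1,\,W,\,U)$ with $v$ added to the partial solution; and for each $v \in C$ and each important $\{v\}$-$W$ separator $S$ of size at most $k$ disjoint from $U$ (enumerated in $\OO(4^k k n)$ time by Lemma~\ref{lemma:imp-sep}), recurse on $(G-S,\,k-|S|,\,W,\,U)$. Correctness is by induction on $k$: by the two cases above, some branch correctly identifies either a single solution vertex in $C$ or a complete important separator contained in $Z$; the residual instance is still non-separating, since $(G\setminus S)\setminus(Z\setminus S) = G\setminus Z$ leaves $W$ in a single component, so the inductive hypothesis applies. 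The base case is when no forbidden set remains, at which point the current partial solution is already a modulator.

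The main obstacle is bounding the total time by $2^{\OO(k)} n^{\OO(1)}$ rather than the $2^{\OO(k^2)}$ that a naive count of ``$4^k$ branches per node times depth $k$'' would give. The fix is to exploit the sharper fact that the number of important $\{v\}$-$W$ separators of size exactly $j$ is at most $4^j$ (since such a separator is also an important separator of size at most $j$). The branching tree therefore satisfies
\[
T(k) \;\leq\; \beta\,T(k-1) \;+\; \beta\sum_{j=1}^{k} 4^{j}\,T(k-j) \;+\; n^{\OO(1)},
\]
and trying the ansatz $T(k) = \OO(\alpha^k)$ reduces this to the condition $\beta/\alpha + \beta\sum_{j\ge 1}(4/\alpha)^j \le 1$, which holds for any constant $\alpha \ge 5\beta$. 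Multiplying by the polynomial factors from searching for $C$ and from enumerating important separators inside each node yields the claimed $2^{\OO(k)} n^{\OO(1)}$ running time.
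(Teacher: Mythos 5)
Your proposal is correct and follows the paper's own proof essentially step for step: exhaustive application of Reduction Rule~\ref{rr1}, computing a minimal forbidden set $C$ in polynomial time, invoking Lemma~\ref{lemma:hit_or_disconnect} and Lemma~\ref{lemma:non-sep-important-sep}, branching on $C \setminus (W \cup U)$ and on all important $\{v\}$-$W$ separators disjoint from $U$ via Lemma~\ref{lemma:imp-sep}, and bounding the same recurrence $T(k) \leq \OO(d)\,T(k-1) + \sum_{i=1}^{k} 4^{i}\,T(k-i) + n^{\OO(1)}$. The only deviations are presentational (an explicit ansatz for the recurrence, and the remark that the residual instance remains non-separating, which strictly concerns the replacement solution $Z'$ supplied by Lemma~\ref{lemma:non-sep-important-sep} rather than the original $Z$ -- a point the paper itself leaves implicit at the same level of detail).
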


\begin{proof}
We first apply Reduction Rule \ref{rr1} exhaustively. 
%After this, we check if every connected component of the graph $G$ belongs to one of the graph classes $\Pi_i$ for $i \in [d]$.
If the graph is empty, we return YES. %If $k<0$, return NO.
 Else, there is a connected component of $G$ which does not belong to any graph classes $\Pi_i$ for $i \in [d]$. Therefore, there exists a forbidden set $C \subseteq V(G)$ of $G$ present in this connected component. We find $C$ as follows. %by checking going over all possible tuples of size $d$ of subgraphs from $\FF_i$ and checking via brute-force 
We check for each graph class $\Pi_i$, if a graph in $\FF_i$ exists as an induced subgraph for a particular connected component $\mathcal{X}$ of $G$. If so, we take the union of the vertices of these induced graphs. We then make the set minimal by repeating the process of removing a vertex and seeing if the set remains a forbidden set. 

We branch in $|C \setminus (W \cup U)|$-many ways by going over all the vertices $v \in C \setminus (W \cup U)$ and in each branch, recurse on the instance $(G - v, k-1, W, U)$. Then for all $v \in C$, we branch over % all the vertices in the union of 
all important $\{v\}-W$ separators $X$ of size at most $k$ in $G$ disjoint from $U$ and recurse on instances $(G \setminus X, k- |X|, W, U)$. % where we recurse by adding $X$ to $S$.

We now prove the correctness of the algorithm. Let $Z \subseteq V(G) \setminus (W \cup U)$ be a solution of the instance. From Lemma \ref{lemma:hit_or_disconnect}, we know that a forbidden set $C$ of $G$ is disconnected by $Z$ or $Z \cap C \neq \emptyset$. In the latter case, we know that $Z$ contains a vertex $x \in C \setminus (W \cup U)$ giving us one of the branched instances obtained by adding $x$ into the solution.

Now we are in the case where $C$ is disconnected by $Z$. Since Reduction rule \ref{rr1} is applied exhaustively, the connected component containing $C$ also contains some vertices in $W$. Since $Z$ is a non-separating solution, $W$ goes to exactly one connected component of $G \setminus Z$ and there exists some non-empty part of $C$ that is not in this component. Hence, there exists some vertex $x \in C$ that gets disconnected from $W$ by $Z$. From Lemma \ref{lemma:non-sep-important-sep}, we know that there is also a solution $Z'$ which contains an important $\{x\}-W$ separator of size at most $k$ in $G$ disjoint from $U$. Since we have branched over all such $\{x\}-W$ important separators disjoint from $U$, we have correctly guessed on one such branch.

We now bound the running time. Since $|C| = \OO(d)$, any forbidden set in $G$ can be obtained via brute force in $n^{\OO(d)}$ time. For each $i \in [k]$, we know that there are at most $4^i$ important separators of size $1 \leq i \leq k$ disjoint from $U$ which can be enumerated using Lemma \ref{lemma:imp-sep} in $\OO(4^i \cdot i \cdot n)$ time. For the instance $(G,k,W)$, if we branch on $v \in  C$ , $k$ drops by 1 and if we branch on a $\{v\}-W$ separator of size $i$, $k$ drops by $i$. Hence if $T(k)$ denotes the time taken for the instance $(G,k,W)$, we get the recurrence relation $T(k) = \OO(d) T(k-1) + \sum\limits_{i=1}^{k} 4^i T(k-i)$. Solving the recurrence taking into account that $d$ is a constant, we get that $T(k) = 2^{\OO(k)}n^{\OO(1)}$.

% total number of branchings is $\OO(d+ k \cdot 4^k)$. The depth of the search tree is bounded by $k$ as at each step $S$ strictly increases its size. Since we spend $\OO^*(4^k)$ time (due to Lemma \ref{lemma:imp-sep}) in each node of the search tree, the bound on the running time follows.
\end{proof}

\subsection{Solving  general instances} \label{section:general-instance}

We now solve general instances of {\disjointpionetopiddeletioncompu} using the algorithm for solving non-separating instances as a subroutine. Hence we focus on solving separating instances of {\disjointpionetopiddeletioncompu}. We guess a subset $W_1 \subset W$ such that for a solution $Z$, $W_1$ is exactly the intersection of $W$ with a connected component of $G \setminus Z$. %$W_1$ is the set that exactly occurs in a connected component of $G \setminus Z$.  
 For $W_2 = W \setminus W_1$, we are looking for a solution $Z$ containing a $W_1 - W_2$ separator. Formally, let $W = W_1 \uplus W_2$ be a set of size $k+1$ which is a ($\Pi_1, \Pi_2, \dotsc , \Pi_d$)-modulator. We look for a set $Z \subseteq V(G) \setminus (W \cup U)$ of size at most $k$ such that $Z$ is a ($\Pi_1, \Pi_2, \dotsc , \Pi_d$)-modulator, $Z$ contains a minimal $(W_1, W_2)$-separator $X$ disjoint from $U$ and $W_1$ occurs in a connected component of $G \setminus Z$.

From here on, we assume that the separating instance $(G,k,W,U)$ of {\disjointpionetopiddeletioncompu} is represented as $(G,k,W_1,W_2,U)$ where $W = W_1 \uplus W_2$. We branch over all partitions of $W$ into $W_1$ and $W_2$ which adds a factor of $2^{k+1}$ to the running time. 

\subsubsection{Disconnected case}

We first focus on the particular case when the input instance is such that $W_1$ and $W_2$ are already disconnected in the graph $G$. We have the following lemma that allows us to focus on finding a non-separating solution in the connected component containing $W_1$ to reduce the problem instance.

\begin{lemma}\label{lemma:S-split}%($\star$)
Let $\II = (G,k,W_1,W_2,U)$ be an instance of {\disjointpionetopiddeletioncompu} where $W_1$ and $W_2$ are in distinct components of $G$. Let $Z$ be its solution such that $W_1$ exactly occurs in a connected component of $G \setminus Z$. Also let $R(W_1)$ be the set of vertices reachable from $W_1$ in $G$. Let $Z'  = Z \cap R(W_1)$. Then $(G[R(W_1)],|Z'|,W_1, U \cap R(W_1))$ is a non-separating YES-instance of \disjointpionetopiddeletioncompu\ and conversely for any non-separating solution $Z''$ for $(G[R(W_1)],|Z'|,W_1, U \cap R(W_1))$, the set $\hat{Z} = (Z \setminus Z') \cup Z''$ is a solution for the original instance such that $W_1$ exactly occurs in a connected component of $G \setminus Z''$.
\end{lemma}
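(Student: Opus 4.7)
The plan is to exploit the structural observation that, since $W_1$ and $W_2$ lie in different connected components of $G$, the set $R(W_1)$ is itself a union of connected components of $G$. In particular, there are no edges in $G$ between $R(W_1)$ and $V(G)\setminus R(W_1)$, so every connected component of $G\setminus \hat Z$ (for any deletion set $\hat Z$) is either entirely contained in $R(W_1)$ or entirely disjoint from it. This will be the single technical fact driving both directions.

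For the forward direction, I would verify the four conditions that make $Z'=Z\cap R(W_1)$ a non-separating solution of $(G[R(W_1)],|Z'|,W_1,U\cap R(W_1))$. Disjointness from $W_1$ and $U\cap R(W_1)$ follows immediately from $Z\cap (W\cup U)=\emptyset$, and $|Z'|\le|Z'|$ is trivial. For the modulator property, I would take any component $\mathcal{C}$ of $G[R(W_1)]\setminus Z'$ and argue, using the edge-separation property of $R(W_1)$, that $\mathcal{C}$ is exactly a connected component of $G\setminus Z$, hence lies in some $\Pi_i$. For the non-separating property, note that the component of $G\setminus Z$ containing $W_1$ is a subset of $R_G(W_1,Z)\subseteq R(W_1)$, so it remains a single component when the graph is restricted to $R(W_1)\setminus Z'$.

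For the backward direction, set $\hat Z=(Z\setminus Z')\cup Z''$. The size bound $|\hat Z|\le|Z|\le k$ is immediate since $Z\setminus Z'$ and $Z''$ are disjoint (the former lies outside $R(W_1)$, the latter inside) and $|Z''|\le|Z'|$. Disjointness of $\hat Z$ from $W\cup U$ uses that $W_2\cap R(W_1)=\emptyset$ to see $Z''\cap W_2=\emptyset$, plus the given disjointness of $Z$ from $W\cup U$. For the modulator property, I would case-split on each component $\mathcal{C}$ of $G\setminus\hat Z$: if $\mathcal{C}\subseteq R(W_1)$, then because $\hat Z\cap R(W_1)=Z''$, $\mathcal{C}$ is a component of $G[R(W_1)]\setminus Z''$ and belongs to some $\Pi_i$ by hypothesis; if $\mathcal{C}\subseteq V(G)\setminus R(W_1)$, then $\hat Z$ coincides with $Z$ on this side, so $\mathcal{C}$ is a component of $G\setminus Z$ and again lies in some $\Pi_i$. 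Finally, since $W_1$ lies in a single component of $G[R(W_1)]\setminus Z''$ and $R(W_1)$ is edge-separated in $G$, this component is a component of $G\setminus\hat Z$, establishing the non-separating condition.

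I do not anticipate any real obstacle here: the argument is essentially bookkeeping that leverages the one structural observation that $R(W_1)$ is a union of $G$-components. The only subtlety worth flagging is the identity $\hat Z\cap R(W_1)=Z''$, which relies on $Z\setminus Z'$ being disjoint from $R(W_1)$ by definition of $Z'$, and on $Z''\subseteq R(W_1)$ coming from the smaller instance.
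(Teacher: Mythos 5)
Your proof is correct and follows essentially the same route as the paper: both directions rest on the single observation that $R(W_1)$ is a union of connected components of $G$ (no edges cross its boundary), so the deletion sets on the two sides act independently. The only cosmetic difference is that you argue directly that components of $G[R(W_1)]\setminus Z'$ (resp.\ of $G\setminus\hat{Z}$) coincide with components of $G\setminus Z$ (resp.\ of the corresponding side), whereas the paper phrases the same case analysis as a contradiction via a surviving forbidden set; you also explicitly verify the non-separating clause in the converse direction, which the paper leaves implicit.
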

\begin{proof}
Suppose $Z'$ is not a \pionetopidmodulator for the graph $G'$. Then some component of $G' \setminus Z'$ contains a forbidden set $C$. The sets $Z'$ and $Z \setminus Z'$ are disjoint as $W_1$ and $W_2$ are disconnected in $G$. Hence $C$ is also in a connected component of $G \setminus Z$ giving a contradiction. Hence $Z'$ is a solution for the instance $(G',|Z'|,W_1)$. Since the solution $Z$ is such that $W_1$ is contained in a connected component of $Z$ and $Z \setminus Z'$ is disconnected from from $Z'$, $Z'$ is a non-separating solution.

Conversely, suppose $\hat{Z}$ is not a solution for the graph $G$. Then there exists a forbidden $C$ in a connected component of $G \setminus \hat{Z}$. %Since $S \subseteq \hat{Z}$, 
 Either $C$ is contained in the set $R(W_1)$ or in the set $NR(W_1) = V(G) \setminus R(W_1)$. If $C \subseteq R(W_1)$, $C$ is also present in a connected component of the graph $G' \setminus Z''$ giving a contradiction that $Z''$ is a \pionetopidmodulator of $G'$. If $C \subseteq NR(W_1)$, then $C$ is contained in some connected component of the graph $G[NR(W_1)] \setminus (Z \setminus Z')$. Since $Z'$ is disjoint from $C$, we conclude that $C$ is a forbidden set in the graph $G \setminus ( Z'\cup (Z \setminus Z')) = G \setminus Z$, giving a contradiction.
\end{proof}

We have the following reduction rule.

\begin{reduction rule}\label{rr2}
Let $\II = (G,k,W_1,W_2,U)$ be an instance of \disjointpionetopiddeletioncompu\ where $W_1$ and $W_2$ are disconnected in $G$.  Compute a non-separating solution $Z'$ for the instance $(G',k',W_1, U')$ where $G' = G[R(W_1)]$, $U' = U \cap R(W_1)$ and $k'$ is the least integer $i \leq k$ for which $(G',i,W_1, U')$ is a YES-instance. Delete $Z'$ and return the instance $(G \setminus Z', k - |Z'|, W_2, U)$.
\end{reduction rule}

The safeness of Reduction Rule \ref{rr2} follows from Lemma \ref{lemma:S-split}. The running time for the reduction is $2^{\OO(k)}n^{\OO(1)}$ which comes from that of the algorithm in Lemma \ref{lemma:non-sep}. \\ %Henceforth, we assume that Reduction Rule \ref{rr2} is no longer applicable. We know that every solution $Z$ of $\II = (G,k,W_1,W_2)$ contains an $\ell$-good non-empty $W_1-W_2$ separator $X$ in the graph $G$. \\

We now introduce the notion of tight separator sequences and $t$-boundaried graphs which are used to design the algorithm.

\subsubsection{Good Separators and Tight Separator Sequences}

We first look at a type of $W_1 - W_2$ separators such that the graph induced on the vertices reachable from $W_1$ after removing the separator satisfies the property as defined below.

\begin{definition} \sloppy Let $(G,k,W_1,W_2,U)$ be an instance of {\disjointpionetopiddeletioncompu}. For integer $\ell$, we call a $W_1 - W_2$ separator $X$ in $G$ \textbf{($\ell$,$U$)-good} if there exists a set $K$ of size at most $\ell$ such that $K \cup X$ is a {\pionetopidmodulator} for the graph $G[R[W_1,X]]$ with $(K \cup X) \cap U = \emptyset$. Else we call it \textbf{($\ell$,$U$)-bad}. If $U = \emptyset$, we call it $\ell$-good and $\ell$-bad respectively.
\end{definition}

We now show that ($\ell$,$U$)-good separators satisfy a monotone property.

\begin{lemma}\label{lemma:monotonicity}%($\star$)
\sloppy Let $(G,k,W_1,W_2,U)$ be an instance of \disjointpionetopiddeletioncompu\ and let $X$ and $Y$ be disjoint $W_1-W_2$ separators in $G$ such that $X$ covers $Y$ and $(X \cup Y) \cap U = \emptyset$. If the set $X$ is ($\ell$,$U$)-good, then $Y$ is also ($\ell$,$U$)-good. 
\end{lemma}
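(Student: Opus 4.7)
The plan is to reuse the witness for $X$ unchanged as a witness for $Y$. Let $K$ be a set of size at most $\ell$, disjoint from $U$, such that $K \cup X$ is a {\pionetopidmodulator} of the graph $G[R[W_1,X]]$; I claim that $K \cup Y$ itself is a {\pionetopidmodulator} of $G[R[W_1,Y]]$. Once this is shown, the remaining bookkeeping --- $|K| \leq \ell$ and $(K \cup Y) \cap U = \emptyset$ (immediate from $K \cap U = \emptyset$ and the hypothesis $(X \cup Y) \cap U = \emptyset$) --- witnesses that $Y$ is $(\ell, U)$-good.

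The one substantive step is to verify that every connected component of the graph $H := G[R[W_1,Y]] \setminus (K \cup Y) = G[R(W_1,Y) \setminus K]$ belongs to some $\Pi_i$. Here I plan to invoke the covering hypothesis in the form $R(W_1,Y) \subseteq R(W_1,X)$, which gives $R(W_1,Y) \setminus K \subseteq R(W_1,X) \setminus K$ and hence exhibits $H$ as an induced subgraph of $G[R(W_1,X) \setminus K] = G[R[W_1,X]] \setminus (K \cup X)$. Any connected component $D$ of $H$ remains connected in this larger graph, so it is contained in a unique component $D^\star$ of it. By the choice of $K$, $D^\star$ lies in some $\Pi_i$, and since each $\Pi_i$ is defined by a finite forbidden family (hence is hereditary), the membership descends to the induced subgraph $D$, completing the verification.

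I do not anticipate a genuinely hard step. The one nuance worth flagging is that $Y$ itself need not sit inside $R[W_1,X]$, so $G[R[W_1,Y]]$ is \emph{not} a literal induced subgraph of $G[R[W_1,X]]$. This is precisely why the comparison must be done after deleting $Y$: once $Y$ is gone, the residual vertex set $R(W_1,Y) \setminus K$ does lie in $R(W_1,X) \setminus K$ by the covering assumption, and hereditariness of the $\Pi_i$, which the finite-forbidden-family setting of {\pionetopiddeletion} provides for free, handles the rest.
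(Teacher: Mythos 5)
Your proof is correct and follows essentially the same route as the paper: both reuse the witness for $X$ (the paper trims it to $K \cap R[W_1,Y]$, which is immaterial) and transfer component membership through the covering inclusion $R(W_1,Y) \subseteq R(W_1,X)$. The only cosmetic difference is that the paper argues by contradiction via a forbidden set surviving in $G[R[W_1,X]] \setminus (K \cup X)$, whereas you argue directly from hereditariness of the classes $\Pi_i$; these are the same idea, and your handling of the nuance that $Y$ need not lie in $R[W_1,X]$ is sound.
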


\begin{proof}
Let us define graphs $G_X = G[R[W_1,X]]$ and $G_Y = G[R[W_1,Y]]$. Let $K$ be a subset of size at most $\ell$ such that $K \cup X$ is a {\pionetopidmodulator} for the graph $G_X$ with $(K \cup X) \cap U = \emptyset$. Let $K' = K \cap R[W_1,Y]$. Note that since $K' \subseteq K$, we have $K' \cap U = \emptyset$. We claim that $K' \cup Y$ is a \pionetopidmodulator\ for the graph $G_Y$ proving that $Y$ is ($\ell$,$U$)-good.

Suppose $K' \cup Y$ is not a \pionetopidmodulator. Then there exists a forbidden set $C \subseteq R[W_1,Y]$ contained in a single component of $G_Y \setminus (K' \cup Y)$. Since $C \subseteq R[W_1,Y] \subset R[W_1,X]$ and $X$ and $Y$ are disjoint, $C$ does not intersect $X$. Also $C$ does not contain any vertices in $K \setminus K'$ as $Y$ disconnects the set from $C$. Hence $C$ is disjoint from $K \cup X$.   Since $C$ lies in a single connected component of $G_Y \setminus (K' \cup Y)$ 
%and $(K \setminus K') \cup X$ is disjoint from $R[W_1,Y]$,
 we can conclude that $C$ occurs in a  single connected component of the graph $G_X \setminus (K \cup X)$ giving a contradiction that $X$ is ($\ell$,$U$)-good.
\end{proof}

\begin{definition} \sloppy Let $(G,k,W_1,W_2,U)$ be an instance of \disjointpionetopiddeletioncompu\ and let $X$ and $Y$ be $W_1 - W_2$ separators in $G$ such that $Y$ dominates $X$ and $(X \cup Y) \cap U = \emptyset$. Let $\ell$ be the smallest integer $i$ for which $X$ is ($i$,$U$)-good. If $Y$ is ($\ell$,$U$)-good, then we say that $Y$ \textbf{well-dominates} $X$. If $X$ is ($\ell$,$U$)-good and there is no $Y \neq X$ which well-dominates $X$, then we call $X$ as \textbf{($\ell$,$U$)-important}.
\end{definition}

%The following lemma allows us to focus on solutions containing $\ell$-important separators.
The following lemma allows us to assume that the solution of the instance $(G,k,W_1,W_2,U)$ contains an ($\ell$,$U$)-important $W_1 - W_2$ separator for some appropriate value of $\ell$.

\begin{lemma}\label{lemma:l-imp-sep}%($\star$)
\sloppy Let $(G,k,W_1,W_2,U)$ be an instance of \disjointpionetopiddeletioncompu\ and $Z$ be a solution.  Let $P \subseteq Z$ be a non-empty minimal $W_1 - W_2$ separator in $G$ and let $P'$ be a $W_1 - W_2$ separator in $G$ well-dominating $P$. Then there is also a solution $Z'$ for the instance containing $P'$.
\end{lemma}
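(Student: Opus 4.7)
The plan is to construct $Z'$ by excising the portion of $Z$ lying on the $W_1$-side of $P'$ and replacing it with the certificate set $K' \cup P'$ provided by $P'$ being $(\ell^*, U)$-good for the appropriate $\ell^*$.

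First I would set $K := Z \cap R(W_1, P)$ and argue that $K \cup P$ is a \pionetopidmodulator\ for the induced subgraph $G[R[W_1, P]]$: any forbidden set $C$ of this subgraph avoiding $K \cup P$ would lie in $R(W_1, P) \setminus K$, which is untouched by $Z$, so $C$ would survive in a single component of $G \setminus Z$, contradicting that $Z$ is a modulator of $G$. Since $K \subseteq Z$ is disjoint from $U$, this certifies that $P$ is $(|K|, U)$-good; let $\ell^*$ be the smallest integer for which $P$ is $(\ell^*, U)$-good, so $\ell^* \leq |K|$. Because $P'$ well-dominates $P$, there exists $K'$ with $|K'| \leq \ell^*$ such that $K' \cup P'$ is a modulator of $G[R[W_1, P']]$ and $(K' \cup P') \cap U = \emptyset$; without loss of generality $K' \subseteq R(W_1, P')$ (and, being a solution-like set in the compression context, disjoint from $W$).

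Next I would define $Z' := (Z \setminus R[W_1, P']) \cup K' \cup P'$ and verify the size bound. The crucial inclusion is $R[W_1, P] \subseteq R[W_1, P']$: covering gives $R(W_1, P) \subseteq R(W_1, P')$, and for any $v \in P \setminus P'$, minimality of $P$ guarantees a neighbor of $v$ in $R(W_1, P) \subseteq R(W_1, P')$, forcing $v$ into the same connected component of $G \setminus P'$ as $W_1$, i.e.\ $v \in R(W_1, P')$. Consequently $|Z \cap R[W_1, P']| \geq |Z \cap R[W_1, P]| = |K| + |P|$, while $|K' \cup P'| \leq \ell^* + |P'| \leq |K| + |P|$ (using $\ell^* \leq |K|$ and $|P'| \leq |P|$), so $|Z'| \leq |Z| \leq k$. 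Disjointness of $Z'$ from $W \cup U$ is inherited from the corresponding properties of $Z$, $P'$, and $K'$, and $P' \subseteq Z'$ by construction.

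The remaining step is to show that $Z'$ is a \pionetopidmodulator\ of $G$, which I would do by contradiction. Suppose some forbidden set $C$ lies in a single connected component of $G \setminus Z'$. Since $P' \subseteq Z'$, that component is contained either in $R(W_1, P')$ or in $NR(W_1, P')$. In the first case, $C \subseteq R[W_1, P']$ becomes a forbidden set of $G[R[W_1, P']]$ avoiding $K' \cup P'$ and sitting in one component of $G[R[W_1, P']] \setminus (K' \cup P')$, contradicting the choice of $K'$. In the second case, the equality $Z' \cap NR(W_1, P') = Z \cap NR(W_1, P')$ (which holds because $K' \cup P' \subseteq R[W_1, P']$) implies $C \cap Z = \emptyset$ and that $C$ lies in a single component of $G \setminus Z$, contradicting that $Z$ is a modulator of $G$. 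I expect the main obstacle to be the containment $R[W_1, P] \subseteq R[W_1, P']$ underlying the size bound: without exploiting the minimality of $P$, a vertex of $P$ could fail to lie in $R(W_1, P')$ and break the budget accounting, so using minimality is exactly what closes the argument.
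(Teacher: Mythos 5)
Your proof is correct and follows essentially the same exchange argument as the paper: replace the portion of $Z$ on the $W_1$-side with $P'$ together with a certificate of its $(\ell^*,U)$-goodness, bound $|Z'|$ via well-domination ($|P'|\leq|P|$ and the certificate size $\leq \ell^* \leq |K|$), and verify the modulator property by a two-case analysis across the separator $P'$. The only minor difference is cosmetic: you excise all of $Z \cap R[W_1,P']$ while the paper excises $Q = Z \cap R[W_1,P]$ and keeps the $Z$-vertices strictly between $P$ and $P'$, and your explicit use of the minimality of $P$ to establish $R[W_1,P] \subseteq R[W_1,P']$ is a detail the paper's argument uses only implicitly.
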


\begin{proof}
Let $Q = Z \cap R[W_1,P]$. Note that $Q$ is a ($\Pi_1, \Pi_2, \dotsc , \Pi_d$)-modulator for the graph $G[R[W_1,P]]$ with $Q \cap U = \emptyset$. Let $Q' \supseteq P'$ be a smallest ($\Pi_1, \Pi_2, \dotsc , \Pi_d$)- modulator for the graph  $G[R[W_1,P']]$ extending $P'$ with $Q' \cap U = \emptyset$. We claim that $Z' = (Z \setminus Q) \cup Q'$ is a solution for the instance $(G,k,W_1,W_2,U)$. Since $P'$ well-dominates $P$, $|Z'| \leq |Z|$ and $Z' \cap U = \emptyset$. Also note that $Z' \cap U = \emptyset$. We now show that $Z'$ is a ($\Pi_1, \Pi_2, \dotsc , \Pi_d$)-modulator. Suppose not. Then there exists a forbidden subset $C$ present in a connected component $\mathcal{X}$ of $G \setminus Z'$.

We first consider the case when $\mathcal{X}$ is disjoint from the set $Z \setminus Z'$. Then there is a component $\mathcal{H}$ in  $G \setminus Z$ which contains $\mathcal{X}$ and hence $C$, contradicting that $Z$ is a solution. We now consider the case when $\mathcal{X}$ intersects $Z \setminus Z'$. By definition of $Z'$, $\mathcal{X}$ is contained in the set $R(W_1,P')$. Since $Z' \setminus Q'$ is disjoint from  $R(W_1,P')$ and is separated from $R(W_1,P')$ by just $P'$, we can conclude that $\mathcal{X}$ and hence $C$ is contained in a single connected component of $G[R[W_1,P']] \setminus Q'$. But this contradicts that $Q'$ is a ($\Pi_1, \Pi_2, \dotsc , \Pi_d$)-modulator in the graph $G[R[W_1,P']]$.
\end{proof}

We now define the notion of a tight separator sequence. It gives a natural way to partition the graph into parts with small \textit{boundaries}. % This helps us focus our problem on local parts of the graph which eases the task of solving it.

\begin{definition} An $X-Y$ \textbf{tight separator sequence of order $k$ with undeletable set $U$} of a graph $G$ with $X, Y, U \subseteq V(G)$ is a set $\mathcal{H}$ of $X-Y$ separators such that 
\begin{itemize}
\item every separator has size at most $k$, 
\item the separators are pairwise disjoint, 
\item every separator is disjoint from $U$, 
\item for any pair of separators in the set, one covers another and 
\item the set is maximal with respect to the above properties.
\end{itemize}
\end{definition}

See figure \ref{fig:separator-sequence} for an example of a tight separator sequence.

\begin{lemma}\label{lemma:tight-separator-sequence}%($\star$)
 Given a graph $G$, disjoint vertex sets $X,Y$ and integer $k$, a tight separator sequence $\mathcal{H}$ of order $k$ with undeletable set $U$ can be computed in $|V(G)|^{\OO(1)}$ time.
\end{lemma}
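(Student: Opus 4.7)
The plan is to build $\mathcal{H}$ greedily, sweeping from $X$ toward $Y$. The main engine will be a standard vertex-capacitated max-flow subroutine: given a current source set $X'$, target $Y$, and undeletable set $U'$, it should either report that no $X'-Y$ separator of size at most $k$ avoiding $U'$ exists, or return one that is component-minimal with respect to $X'$. I would implement this in the usual way -- replace each vertex of $X' \cup Y \cup U'$ by a clique on $k+1$ copies (so it must survive any cut of size $\leq k$), split every other vertex into two nodes joined by a unit-capacity arc to encode vertex capacities, run Ford--Fulkerson, and read off the component-minimal min-cut from the set of vertices reachable from the source in the residual graph. This step runs in $|V(G)|^{\OO(1)}$ time.

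The main loop will initialise $X_0 := X$, $U_0 := U$, $\mathcal{H} := \emptyset$, and at step $i \geq 1$ invoke the subroutine on $(X_{i-1}, Y, U_{i-1})$. A successful call yields a separator $S_i$, which I will add to $\mathcal{H}$; I then set $X_i := X_{i-1} \cup R_G[X_{i-1}, S_i]$ (so the vertices newly swallowed up by the separator join the source side) and $U_i := U_{i-1} \cup S_i$ (so $S_i$ cannot be chosen again). Since $|X_i| > |X_{i-1}|$ at each successful iteration, the loop halts after at most $|V(G)|$ rounds, giving the stated polynomial running time.

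Checking the first four conditions of being a tight separator sequence is mostly bookkeeping. The size bound and disjointness from $U$ will be immediate; pairwise disjointness will follow because $S_i \subseteq U_{j-1}$ whenever $i < j$; and pairwise comparability will follow because for $i < j$ we have $R_G[X, S_i] \subseteq X_{j-1}$, so any $X_{j-1}-Y$ separator $S_j$ must leave all of $R_G[X, S_i]$ on its source side, meaning $S_j$ covers $S_i$.

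The hard part will be verifying maximality. I plan to argue by contradiction: suppose some additional $X-Y$ separator $T$ of size $\leq k$, disjoint from $U \cup \bigcup_i S_i$ and comparable to every $S_i$, could still be added. Comparability with each $S_i$ forces $T$ to occupy some slot in the covering chain, so there is an index $i$ with $T$ covering $S_i$ and, if $i$ is not the last index, $S_{i+1}$ covering $T$. Then at iteration $i+1$ the set $T$ is a valid $X_i-Y$ separator of size $\leq k$ disjoint from $U_i$ (using that $R_G[X, S_i] \subseteq R_G(X, T)$, which forces $T$ to be disjoint from $X_i$ and to separate $X_i$ from $Y$). In the middle case the strict inclusion $R_G(X_i, T) \subsetneq R_G(X_i, S_{i+1})$ will contradict the component-minimal choice of $S_{i+1}$; in the terminal case the subroutine at the final step should not have failed, contradicting termination. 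Either way the assumption collapses, and $\mathcal{H}$ is maximal.
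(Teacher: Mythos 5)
There is a genuine gap, and it sits in your max-flow subroutine. You specify that the subroutine should return a separator that is component-minimal with respect to $X'$, but the object you actually compute --- the minimum cut read off from residual reachability --- is only closest-to-source among \emph{minimum} cuts. A separator of size at most $k$ that is larger than the minimum cut can lie strictly closer to the source, and then nothing in your construction ever places a separator in that slot. Concretely, take $V(G)=\{x,a,b,c,y\}$ with edges $xa$, $xb$, $ac$, $bc$, $cy$, and let $X=\{x\}$, $Y=\{y\}$, $U=\emptyset$, $k=2$. The unique minimum $x$--$y$ vertex cut is $\{c\}$, so your first iteration returns $S_1=\{c\}$, sets $X_1=\{x,a,b,c\}$, and the second call fails because $c$ is adjacent to $y$; you output $\mathcal{H}=\{\{c\}\}$. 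But $\{a,b\}$ has size $2\le k$, is disjoint from $U$ and from $\{c\}$, and is covered by $\{c\}$, so it can be added: your $\mathcal{H}$ is not maximal. The same defect invalidates the key step of your maximality argument: the competing separator $T$ need not be a minimum cut, so ``$R_G(X_i,T)\subsetneq R_G(X_i,S_{i+1})$'' does not contradict your choice of $S_{i+1}$. (You also silently omit the case in which $T$ is covered by $S_1$, i.e.\ sits strictly closer to $X$ than everything you chose, though that is the same phenomenon at $i=0$.)

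The architecture itself is fine and is essentially a mirror image of the paper's proof: the paper also handles $U$ by the clique-replacement trick and also builds the chain greedily, but it sweeps from the $Y$-side, at each step invoking the cited important-separator machinery (Lemma~\ref{lemma:imp-sep}) to obtain a separator that is component-\emph{maximal} among \emph{all} $X$--$Y$ separators of size at most $k$ disjoint from $U$ --- not merely among minimum cuts --- and then resetting $Y$ to that separator; maximality then follows exactly along the lines you sketch. To repair your version you must replace the residual-graph min-cut by a procedure that returns a separator of size at most $k$ that no other size-$\le k$ separator avoiding $U'$ covers from below (equivalently, a component-maximal separator with the roles of $X'$ and $Y$ exchanged), e.g.\ by greedily testing, for vertices on the boundary of the current reachable set, whether they can be pulled across while a size-$\le k$ separator avoiding $U'$ still exists. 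With that subroutine your sweep from the $X$-side and your maximality argument (including the $i=0$ case) go through.
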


\begin{proof}
\begin{figure}[t]
\centering
	\includegraphics[scale=0.3]{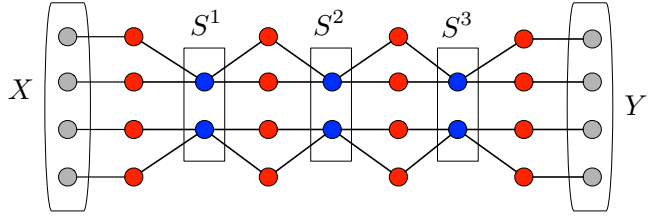}
	\caption{An $X-Y$ tight separator sequence of order two and $U= \emptyset$}
\label{fig:separator-sequence}
\end{figure}

We first replace every vertex $u \in U$ in our graph $G$ with $k+1$ copies of $u$ forming a clique. Note that any $X-Y$ separator of size at most $k$ in the new graph must be disjoint from the vertices of the clique corresponding to every $u \in U$.

In this new graph, we check using the minimum cut algorithm if there is an $X-Y$ separator of size at most $k$. If not, we stop the procedure. Else we compute an arbitrary component-maximal $X-Y$ separator $S$ of size at most $k$ using the polynomial time algorithm in Lemma \ref{lemma:imp-sep}. We add $S$ to the family $\mathcal{H}$, set $Y$ to $S$, and repeat the process. We claim that $\mathcal{H}$ is a tight separator sequence of order $k$ with an undeletable set $U$ after the procedure terminates. It is 
clear that the first four properties of tight separator sequence are satisfied by $\mathcal{H}$ in any iteration. Suppose $\mathcal{H}$ is not maximal and hence an $X-Y$ separator $P$ disjoint from $U$ can be added. If $P$ covers one of the separators $S'$ in $\mathcal{H}$, it contradicts the component-maximality of $S'$ at the time it was added to $\mathcal{H}$. Else $P$ is covered by all the separators in $\mathcal{H}$ which contradicts the termination of the procedure after the last separator in $\mathcal{H}$ was added. This completes the proof.
\end{proof}

In the proof, it can be seen that the separators $S$ in $\mathcal{H}$ can be totally ordered by the subset relation of the reachability sets $R(X, S)$. Hence $\mathcal{H}$ is rather called a sequence than a family of separators.

%From Lemma \ref{lemma:tight-separator-sequence}, it can be seen that the separators $S$ in $\mathcal{H}$ can be totally ordered by the subset relation of the reachability sets $R(X,S)$. Hence $\mathcal{H}$ is rather called a sequence than a set.

\subsubsection{Boundaried graphs}
%Add some precursor

\begin{definition} A \textbf{$t$-boundaried graph} $G$ is a graph with $t$ distinguished labelled vertices. We call the set of labelled vertices $\partial(G)$ the boundary of $G$ and the vertices in $\partial(G)$ terminals. Let $G_1$ and $G_2$ be two $t$-boundaried graphs with the graphs $G_1[\partial(G_1)]$ and $G_2[\partial(G_2)]$ being isomorphic. Let $\mu : \partial(G_1) \rightarrow \partial(G_2)$ be a bijection which is an isomorphism of the graphs $G_1[\partial(G_1)]$ and $G_2[\partial(G_2)]$. We denote the graph $G_1 \otimes_{\mu} G_2$ as a $t$-boundaried graph obtained by the following gluing operation. We take the union of graphs $G_1$ and $G_2$ and identify each vertex $x \in \partial(G_1)$ with vertex $\mu(x) \in \partial(G_2)$. % and unify the two vertices $x$ and $\mu(x)$.
 The $t$-boundary of the new graph is the set of vertices obtained by unifying. %\textbf{(Check Fomin $F$-Deletion paper for better definition of gluing)}
\end{definition}

\begin{definition}
A \textbf{$t$-boundaried graph with an annotated set} is a $t$-boundaried graph with a second set of distinguished but unlabelled vertices disjoint from the boundary. The set of annotated vertices is denoted by $\Delta(G)$.
\end{definition}

\subsubsection{Algorithm}

%\noindent{\textbf{General case}}

We design a recursive algorithm {\sc Main-Algorithm} which takes as input the instance $\II = (G,k,W_1,W_2,U)$ and outputs YES if there exists a solution $Z \subseteq V \setminus (W_1 \cup W_2 \cup U)$ such that every connected component of $G-Z$ belongs to some graph class $\Pi_i$ for $i \in [d]$. %Note that for $U = \emptyset$, it solves our problem.
%Maybe dont mention the W_1-W_2 separator case here as in the recursive case this might not be true
%Let us denote {\sc Main-Algorithm}$(G,k,W_1,W_2)$ as the main algorithm procedure. 
%We can assume that the input instance $\II$ is obtained after Reduction Rule \ref{rr2} is applied exhaustively. 

\noindent{\textbf{Description of {\sc Main-Algorithm} procedure:}}
The {\sc Main-Algorithm} procedure initially checks if Reduction Rule \ref{rr1} is applicable for $\II$. Then it checks if $(G,k,W_1 \cup W_2, U)$ is a non-separating YES-instance using the algorithm from Lemma \ref{lemma:non-sep}. % (Though the initial instance is a separating instance, this check seems unnecessary here. But we need this for the recursive calls of {\sc Main-Algorithm} procedure). %This is weird with the assumption that the instance here is separating, but is needed for the recursive calls of the instance.
%This is still weird
%This now needs to accommodate the undeletable case?
 If not, it checks if Reduction Rule \ref{rr2} is applicable.

After these steps, we know that any solution $Z$ of $\II$ contains an $(\ell, U)$-good $W_1-W_2$ separator $X$ in the graph $G$ for some integer $0 \leq \ell \leq k$ with $|X| = \lambda > 0$. 
Using Lemma \ref{lemma:l-imp-sep}, we can further assume that the separator $X$ is $(\ell, U)$-important. % for some integer $\ell$.
 Since $\lambda > 0$, we have $Z \cap R(W_1,X) \subset Z$ as $X$ is not part of the set $Z \cap R(W_1,X)$. Hence $\ell = |Z \cap R(W_1,X)| < |Z| \leq k$.
Hence we can conclude that $0 \leq \ell < k$ and $1 \leq \lambda \leq k$.

The {\sc Main-Algorithm} procedure %initially checks if Reduction Rules \ref{rr1} and \ref{rr2} is applicable for $\II$.
now calls a subroutine {\sc Branching-Set} with input as $(\II,\lambda,\ell)$ for all values $0 \leq \ell < k$ and $1 \leq \lambda \leq k$.
% which is used in {\sc Main-Algorithm} for this case with $0 \leq \ell\leq k$ and $1 \leq \lambda \leq k$.
 The {\sc Branching-Set} subroutine returns a vertex subset $\mathcal{R} \subseteq V(G)$ of size  $2^{poly(k)}$ such that for every solution $Z \subseteq (V(G) \setminus U)$ of the given instance $\II$ containing an $(\ell, U)$-important $W_1-W_2$ separator $X$ of size at most $\lambda$ in $G$, the set $\mathcal{R}$ intersects $Z$. The {\sc Main-Algorithm} procedure then branches over all vertices $v \in \mathcal{R}$ and recursively run on the input $\II' = (G-v,k-1,W_1,W_2, U)$. \\

\noindent{\textbf{Description of {\sc Branching-Set} procedure:}}

We first check if there is a $W_1-W_2$ separator of size $\lambda$ in the graph $G$ with the vertices contained in the set $V \setminus U$. %This can be done by computing an auxiliary graph with each vertex of $ u \in U$ replaced by a clique of $\lambda+1$ vertices and checking if there is a $W_1-W_2$ separator of size $\lambda$ in this graph.
 If there is no such separator, we declare the tuple invalid. Else we execute the algorithm in Lemma \ref{lemma:tight-separator-sequence} to obtain a tight $W_1-W_2$ separator sequence $\mathcal{T}$ of order $\lambda$ and undeletable set $U$. %Again, we can get a sequence such that every $W_1-W_2$ separator in the sequence does not contain any vertex in $Q$ by replacing each vertex in $Q$ with a clique of size $\lambda+1$. 

Let $\mathcal{T}= O_1, O_2, \dotsc, O_q$ for some integer $q$. We partition $\mathcal{T}$ into $(\ell,U)$-good and $(\ell,U)$-bad separators as follows. Recall Lemma \ref{lemma:monotonicity} where we proved that if $X$ and $Y$ are disjoint $W_1-W_2$ separators in $G$ such that $X$ covers $Y$ and $X$ is $(\ell,U)$-good, then $Y$ is also $(\ell,U)$-good. From this we can conclude that the separators in the sequence $\mathcal{T}$ are such that if they are neither all $(\ell,U)$-good nor all $(\ell,U)$-bad,  there exist an $i \in [q]$ where $O_1, \dotsc , O_i$ are $(\ell,U)$-good and $O_{i+1}, \dotsc , O_q$ are $(\ell,U)$-bad. We can find $i$ in $\lceil \log q \rceil$ steps via binary search if at each step, we know of a way to check if for a given integer $j \in [q-1]$ if $O_j$ is $(\ell,U)$-good and $O_{j+1}$ is $(\ell,U)$-bad. In the case $j=q$, we only check if $O_j$ is $(\ell,U)$-good and if so conclude that all the separators in the sequence are $(\ell,U)$-good. In the case where $j=0$, we only check if $O_j$ is $(\ell,U)$-bad and if so conclude that all the separators in the sequence are $(\ell,U)$-bad. %What is P_1,P_2 here

In any case, we need a procedure to check whether a given separator $P$ is $(\ell,U)$-good or not. From the definition of $(\ell,U)$-good separator, this translates to checking if there is a {\pionetopidmodulator} of size at most $\ell$ in the graph $G[R(W_1,P)]$ such that the solution is disjoint from $W_1 \cup U$. Note that since $P$ separates $W_1$ from $W_2$, $W_1$ is a {\pionetopidmodulator} in the graph $G[R(W_1,P)]$. Hence the problem translates to checking whether $\II_1 = (G[R(W_1,P)], \ell, W_1, U)$ is a YES-instance of {\disjointpionetopiddeletioncompu}.  %This can be done by first calling the algorithm from Lemma \ref{lemma:non-sep} to see if $\II_1$ has a non-separating solution. 
This can be done by calling the {\sc Main-Algorithm} procedure for the instance $(G[R(W_1,P)], \ell, W_1, U)$.   %Since the solution of size $\ell$ returned by the procedure witnesses that $P$ is $\ell$-good, the algorithm correctly recognizes whether $P$ is an $\ell$-good separator or not. 
Note that this is a recursive call in the initial {\sc Main-Algorithm} procedure with $\II$ as input where we called the {\sc Branching-Set} procedure with $\ell$ being strictly less than $k$. 

If we do not find an integer $i$ such that $O_j$ is $(\ell,U)$-good and $O_{j+1}$ is $(\ell,U)$-bad, or conclude that all the separators in the sequence are either $(\ell,U)$-good or all are $(\ell,U)$-bad, we declare that the tuple is not valid. Otherwise, we have a separator $P_1$ which is component maximal among all the good separators in $\mathcal{T}$ if any exists, and separator $P_2$ which is component minimal among all the bad separators in $\mathcal{T}$ if any exists. We initialize the set $\mathcal{R}:= P_1 \cup P_2$. For $ i \in \{1,2\}$, we do the following.

We go over every subset $P_i^r \subseteq P_i$. For each such subset, we compute a family $\mathcal{H}$ of $|P_1^r|$-boundaried graphs which consists of all graphs of size at most $k^{5(pd)^2}$ of which at most $k$ are annotated. Note that the total number of such graphs is bounded by $2^{{k^{5(pd)^2} \choose 2}} {k^{5(pd)^2} \choose k+1}$ and these can be enumerated in time  $2^{{k^{5(pd)^2} \choose 2}} {k^{5(pd)^2} \choose k+1} k^{\OO(1)}$. %2^{k^{\OO(1)}}k^{O(1)}$.

For every choice of $P_i^r \subseteq P_i$, for every annotated boundaried graph $\hat{G} \in \mathcal{H}$ with $|P_i^r|$ terminals and every possible bijection $\delta : \partial(\hat{G}) \rightarrow P_i^r$, we construct the glued graph $G_{P_i^r, \delta} = G[R[W_1,P_i]] \otimes_\delta \hat{G}$, where the boundary of $G[R[W_1,P_i]]$ is $P_i^r$. We then recursively call {\sc Branching-Set}$((G_{P_i^r, \delta} \setminus \tilde{S}, k-j, W_1, P_i \setminus P_i^r,  U \cup V(\hat{G}) \setminus P_i^r),\lambda',\ell')$ for every $0 \leq \lambda' < \lambda$, $1 \leq j \leq k-1$ and $0 \leq \ell' \leq \ell$, where $\tilde{S}$ is the set of annotated vertices in $\hat{G}$. We add the union of all the vertices returned by these recursive instances to $\mathcal{R}$ and return the resulting set.

This completes the description of the {\sc Branching-Set} procedure. We now proceed to the proof of correctness. \\

\noindent{\textbf{Correctness of {\sc Main-Algorithm} and {\sc Branching-Set} procedure:}}
We prove the correctness of {\sc Main-Algorithm} by induction on $k$. The case when $k=0$ is correct as we can check if $\II$ is a YES-instance in polynomial time by checking if every connected component of $G$ belongs to one of the graph classes $\Pi_i$ for $i \in [d]$. We now move to the induction step with the induction hypothesis being that the {\sc Main-Algorithm} procedure correctly runs for all instances $\II$ where $k < \hat{k}$ for some $\hat{k} \geq 1$ and identifies whether $\II$ is a YES-instance. We now look at the case when the algorithm runs on an instance with $k = \hat{k}$.

%Assume that {\sc Main-Algorithm} runs correctly for all instances $\II'$ with $k' <k$. We aim to prove that {\sc Main-Algorithm} runs correctly for all instances with $k'=k$.

 The correctness of the initial phase follows from the safeness of Reduction Rules \ref{rr1}, \ref{rr2} and the correctness of the algorithm in the non-separating case. Let us now assume that the {\sc Branching-Set} procedure is correct.
 % the calls of {\sc Main-Algorithm} in this procedure has input instances checking for solutions strictly less than $k$ which in turn runs correctly from the Induction Hypothesis. 
 Hence the set $\mathcal{R}$ returned by {\sc Branching-Set} procedure is such that it intersects a solution $Z$ if it exists. Therefore $\II = (G,k,W_1,W_2,U)$ is a YES-instance if and only if $\II' = (G-v,k-1,W_1,W_2,U)$ is a YES-instance for some $v \in \mathcal{R}$. Applying the induction hypothesis for {\sc Main-Algorithm} with input instance $\II'$, we prove the correctness of {\sc Main-Algorithm}.

It remains to prove the correctness of the {\sc Branching-Set} procedure. Note that all the calls of {\sc Main-Algorithm} in this procedure have input instances checking for solutions strictly less than $k$. Hence these calls run correctly from the induction hypothesis when {\sc Branching-Set} is called in the {\sc Main-Algorithm} procedure. Hence we only need to prove that {\sc Branching-Set} procedure is correct with the assumption that all the calls of {\sc Main-Algorithm} in the procedure run correctly.
We prove this by induction on $\lambda$. Recall that the sets $P_1$ and $P_2$ were identified via a binary search procedure described earlier using the calls of {\sc Main-Algorithm} with values strictly less than $k$. Since we assume that the calls of {\sc Main-Algorithm} runs correctly, the sets $P_1$ and $P_2$ were correctly identified if present.

 We first consider the base case when $\lambda = 1$ where there is a $W_1- W_2$ $(\ell,U)$-good separator $X \subseteq Z$ of size one. Since $X$ has size one, it cannot be incomparable with the separator $P_1$. Hence the only possibilities are $X$ is equal to $P_1$, is covered by $P_1$ or covers $P_1$. In the first case, we are correct as $P_1$ is contained in $\mathcal{R}$. The second case contradicts that $X$ is $(\ell,U)$-important $W_1- W_2$ separator. %Would we have to worry if P_1 in undeletable set?
 We note that in the third case, we can conclude that $X$ is covered by $P_2$. This is because the other cases where $X$ is equal to be $P_2$ or $X$ covers $P_2$ cannot happen as $P_2$ is $(\ell, U)$-bad and $X$ is incomparable to $P_2$ cannot happen as both are of size one. Hence $X$ covers $P_1$ and is covered by $P_2$. But then $X$ must be contained in the tight separator sequence $\mathcal{T}$ contradicting that $P_1$ is component maximal. Hence the third case cannot happen.

We now move to the induction step with the induction hypothesis being that {\sc Branching-Set} procedure correctly runs for all tuples where $\lambda < \hat{\lambda}$ for some $\hat{\lambda} \geq 2$ and returns a vertex set that hits any solution for its input instance that contains an $(\ell, U)$-important separator of size $\lambda$ and not containing any vertices from $U$. We now look at the case when the algorithm runs on a tuple with $\lambda = \hat{\lambda}$.

Let $Z \subseteq (V(G) \setminus U)$ be a solution for the instance $\II$ containing an $(\ell,U)$-important separator $X$. If $X$ intersects $P_1 \cup P_2$ we are done as $\mathcal{R} \supseteq P_1 \cup P_2$ intersects $X$. Hence we assume that $X$ is disjoint from $P_1 \cup P_2$. Suppose $X$ is covered by $P_1$. Then we can conclude that $P_1$ well-dominates $X$ contradicting that $X$ is $(\ell, U)$-important $W_1- W_2$ separator.

By Lemma  \ref{lemma:monotonicity}, since $X$ is $(\ell,U)$-good and $P_2$ is not, $X$ cannot cover $P_2$. Suppose $X$ covers $P_1$ and itself is covered by $P_2$. Then $X$ must be contained in the tight separator sequence $\mathcal{T}$ contradicting that $P_1$ is component maximal. Hence this case also does not happen. \\

\noindent{\textbf{Incomparable Case:}} 

Finally we are left with the case where $X$ is incomparable with $P_1$ or with $P_2$ if $P_1$ does not exist. Without loss of generality, assume $X$ is incomparable with $P_1$. The argument in the case when $P_1$ does not exist follows by simply replacing $P_1$ with $P_2$ in the proof.

Let $K \subseteq Z$ be the \pionetopidmodulator for the graph $G[R[W_1,X]]$ extending $X$, i.e $X \subseteq K$. In other words, $K = Z \cap R[W_1,X]$.  Since $X$ is an $(\ell, U)$-good separator of $G$, we have $|K \setminus X| \leq \ell$. If $P_1 \cap K$ is non-empty, we have that $P_1 \cap Z$ is non-empty. Since $P_1$ is contained in $\mathcal{R}$, the algorithm is correct as $\mathcal{R}$ intersects $Z$.  Hence we can assume that $P_1$ and $K$ are disjoint.

Let $X^r = R(W_1,P_1) \cap X$ and $X^{nr} = X \setminus X^r$. Similarly, define $P_1^r = R(W_1,X) \cap P_1$ and $P_1^{nr} = P_1 \setminus P_1^r$. Since $X$ and $P_1$ are incomparable, the sets $X^r, X^{nr}, P_1^r$ and  $P_1^{nr}$ are all non-empty. Let $K^r = K \cap R[W_1,P_1]$ and $K^{nr} = K \setminus K^r$. Note that $X^r \subseteq K^r$ and $X^{nr} \subseteq K^{nr}$. See Figure \ref{fig:incomparable-case}.

\begin{figure}[t]
\centering
	\includegraphics[scale=0.3]{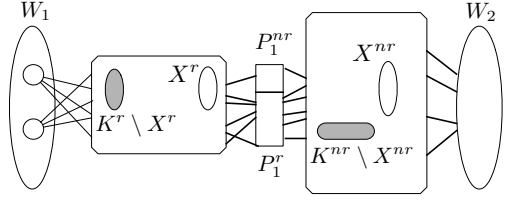}
	\caption{The case where $X$ is incomparable with $P_1$}
\label{fig:incomparable-case}
\end{figure}

We intend to show in the case when $X$ and $P_1$ are incomparable, the set returned by one of the recursive calls of the {\sc Branching-Set} procedure hits the solution $Z$.

We now prove the following crucial lemma where we show that by carefully replacing parts outside of $R[W_1, P_1]$ with a small gadget, we can get a smaller graph $G'$ such that $K^r$, the part of $K$ inside the set $R[W_1, P_1]$ is an optimal \pionetopidmodulator containing the $(|K^r \setminus X^r|, U')$-important separator $X^r$ in this graph for an appropriate subset $U' \subseteq V(G')$. This allows us to show that the instance of {\sc Branching-Set} corresponding to $G'$ would return a set that intersects the optimal \pionetopidmodulator $K^r$ in $G'$ and thereby the solution $Z$ in $G$. Later, we will see that one of the recursive calls of {\sc Branching-Set} indeed correspond to $G'$.

\begin{lemma}\label{lemma:the-big-one}
Let $G_1 = G[R[W_1,P_1]]$ be a boundaried graph with $P_1^r$ as the boundary. There exists a $|P_1^r|$-boundaried graph $\hat{G}$ which is at most $k^{5(pd)^2}$ 
%$k^{\mathcal{O}(1)}$
 in size with an annotated set of vertices $\Delta(\hat{G})$ of size at most $k$, and a bijection $\mu : \partial(\hat{G}) \rightarrow P_1^r$ such that the glued graph $G' = G_1 \otimes_\mu \hat{G}$ has the property that {\sc Branching-Set} procedure with input as $((G' \setminus \Delta(\hat{G}),|K^r|, W_1, P_1^{nr}, U \cup V(\hat{G}) \setminus P_1^r),|X^r|,|K^r \setminus X^r|)$ returns a set $\mathcal{R}'$ that intersects $K^r$.
\end{lemma}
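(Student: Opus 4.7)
The plan is to build $\hat{G}$ by starting from the outer graph $G_2 := G[\mathrm{NR}[W_1,P_1]]$ (which contains $P_1$ on its boundary), marking a small ``important'' subset of vertices inside it, and then contracting all degree-$2$ paths of length more than some threshold so that what remains has only polynomially many vertices in $k$, $p$, and $d$, where $p$ is the maximum size of any graph in $\bigcup_i \mathcal{F}_i$. The annotated set $\Delta(\hat G)$ will consist of exactly the vertices of $K^{nr} = K \setminus K^r$ (which has size at most $\ell \le k$) that survive into $\hat G$, i.e.\ the ``far'' part of the solution that we cannot see from inside $R[W_1,P_1]$. The bijection $\mu$ will be the identity on $P_1^r$. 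The obvious candidate set to mark, before contracting, is: $P_1^r$, the solution fragment $K^{nr}$, and every vertex lying in some forbidden set of $G$ that meets $R(P_1^r, K^{nr})$ and could be relevant to certifying minimality of $K^r$; a standard pigeonhole bound over forbidden graphs of at most $p$ vertices, $d$ classes, and at most $k$ candidate substitutions gives a marked set of size $(pdk)^{O(1)}$, after which repeatedly contracting degree-$2$ paths of length $>c(p)$ yields a $|P_1^r|$-boundaried graph of size at most $k^{5(pd)^2}$.

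Having fixed the construction, the main verification has three parts. First, I check that $K^r$ is a ($\Pi_1,\ldots,\Pi_d$)-modulator of the glued graph $G' \setminus \Delta(\hat G)$: any forbidden set $C'$ in $G' \setminus (K^r \cup \Delta(\hat G))$ either lies entirely inside $R[W_1,P_1]$, in which case it already contradicts $K^r$ being a modulator of $G_1$, or it meets $V(\hat G)\setminus P_1^r$; but the contraction was chosen so that any forbidden pattern of at most $p$ vertices in $\hat G$ lifts to an analogous forbidden pattern in $G \setminus K$, contradicting that $K$ was a modulator of $G$. Here the fact that we contract only long degree-$2$ paths is crucial, since forbidden graphs of bounded size have no long induced $P_\ell$'s (or can be replaced by isomorphic copies after short contractions). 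Second, I argue that $K^r$ is a \emph{minimum} such modulator in $G' \setminus \Delta(\hat G)$ disjoint from $W_1 \cup P_1^{nr} \cup U \cup (V(\hat G) \setminus P_1^r)$: any strictly smaller modulator $K^\star$ could be combined with $K^{nr}$ to produce a modulator of $G$ strictly smaller than $K$ on $R[W_1,P_1]$, contradicting $(\ell,U)$-importance of $X = X^r \cup X^{nr}$ via Lemma~\ref{lemma:monotonicity} and Lemma~\ref{lemma:l-imp-sep}. Third, I verify that $X^r$ is an $(|K^r \setminus X^r|,U')$-important $W_1$--$P_1^{nr}$ separator of $G' \setminus \Delta(\hat G)$; well-domination of $X^r$ in $G'$ by some $Y^r$ would lift (via $\hat G$'s gadget side) to well-domination of $X$ by $Y^r \cup X^{nr}$ in the original $G$, again contradicting importance of $X$.

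Once these three properties are established, the lemma follows by induction. The tuple $(|X^r|,|K^r\setminus X^r|)$ used in the recursive call of \textsc{Branching-Set} satisfies $|X^r|\le \lambda-1$ (strictly less than $\lambda$, because $X^{nr}\neq\emptyset$ in the incomparable case) and $|K^r\setminus X^r|\le \ell$, so the outer induction on $\lambda$ applies and guarantees that the recursive call returns a set $\mathcal R'$ hitting $K^r$. Since $\mathcal R'$ is in turn folded into $\mathcal R$ by the main procedure, this completes the argument.

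The hard step—and the one I expect to dominate the write-up—is the first property above: controlling which forbidden subgraphs can possibly appear inside $\hat G$ after contraction, and showing that none is ``spurious.'' The difficulty is exactly the one flagged in the ``Our Techniques'' section: unlike in the CSP setting of Ganian--Ramanujan--Szeider, we may not freely add vertices or edges, since doing so could create a forbidden subgraph that has no preimage in $G$. Restricting to contractions of long degree-$2$ paths sidesteps this, but makes the marking step delicate: we must mark enough vertices so that every forbidden pattern witnessing the optimality of $K^r$ (in particular every minimal forbidden set used in the minimality argument for the second property above) survives with its structure preserved, which is why the marked set has to absorb a $(pd)$-power blow-up and produces the gadget-size bound $k^{5(pd)^2}$.
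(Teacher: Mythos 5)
Your high-level architecture matches the paper's (a gadget $\hat G$ glued to $G[R[W_1,P_1]]$ along $P_1^r$, annotated set $K^{nr}$, mark-then-contract-degree-2-paths, transfer $(\ell,U)$-importance of $X$ to importance of $X^r$, and induct on $\lambda$ using $|X^r|<\lambda$), but the step you yourself identify as the crux is left essentially unspecified, and as sketched it would not go through. The marking cannot be ``mark the forbidden sets relevant to certifying minimality of $K^r$'': the forbidden sets that must survive are those witnessing badness of an \emph{adversarial, unknown} candidate $\hat X=X'\cup X^{nr}$ and witness $Y'$ in the importance argument, so they are not identifiable at marking time. The paper instead marks one representative forbidden set per \emph{type}, where a type records not only the chosen graphs $H_i\in\FF_i$, their induced traces on the far side $V_2$, and their exact intersections with $P_1^r$, but crucially also, for every pair of vertices of the candidate set, the first and last $P_1^r$-vertices on a connecting path in $G\setminus K^{nr}$ (the tuples $\mathcal{T}_{\mathcal H}$). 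Without this pairwise ``portal'' data the replacement argument fails: given a forbidden set $C$ in $G[R[W_1,\hat X]]\setminus(X'\cup Y'\cup K^{nr})$ that uses unmarked far vertices, one must exhibit a marked $C_M$ of the same type and re-route its connecting paths through $P_1^r$ inside the gadget so that $C_M$ is \emph{connected} in $G'\setminus(X'\cup Y'\cup K^{nr})$. That re-routing is done in the paper via a minimum forest $F$ in $G[V_2]$ preserving connectivity among marked vertices and $P_1^r$ (whose size is bounded by counting leaves and degree-$\geq 3$ vertices before contracting only long degree-2 paths of $F$); your proposal contracts degree-2 paths of the marked graph directly and never introduces such a connectivity-preserving structure, so the case analysis that finishes the importance proof has nothing to stand on.

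Two further slips are worth fixing. First, the gadget must be built from $V_2=(NR(W_1,P_1)\cap R(W_1,X))\cup P_1^r\cup K^{nr}$, not from all of $NR[W_1,P_1]$ (nor from everything reachable from $P_1^r$ in $G\setminus K^{nr}$, which can leak past $X$ through $X^r$): $K=K^r\cup K^{nr}$ is only a \pionetopidmodulator of $G[R[W_1,X]]$, not of $G$, so forbidden sets living beyond $X$ are handled by $Z\setminus K$ and have no counterpart in the gadget; marking them would destroy the claim that $K^r$ is a modulator of $G'\setminus\Delta(\hat G)$ (your property-one argument ``contradicting that $K$ was a modulator of $G$'' contains the same slip). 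Second, your ``second property'' (that $K^r$ is a \emph{minimum} modulator of the glued graph) is neither needed nor correctly justified: what {\sc Branching-Set} requires is that $K^r$ is a modulator within the budget $|K^r|$ containing the separator $X^r$, and that $X^r$ is $(|K^r\setminus X^r|,U')$-important; a strictly smaller modulator of $G'\setminus\Delta(\hat G)$ would not by itself contradict $(\ell,U)$-importance of $X$, since importance (via Lemma~\ref{lemma:l-imp-sep}) concerns separators covering $X$ of no larger size, not global modulator sizes.
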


\begin{proof}

\sloppy To develop the intuitions behind the proof, we first prove that for the graph $G'' = G[R[W_1, X]]$,  {\sc Branching-Set} with input as {$((G'' \setminus K^{nr},|K^r|, W_1, P_1^{nr}, U),|X^r|,|K^r \setminus X^r|)$} returns a set $\mathcal{R}''$ that intersects $K^r$.

Suppose $\mathcal{R}''$ does not intersect $K^r$. The set $\mathcal{R}''$ by definition intersects any \pionetopidmodulator $Z'' \subseteq V(G'') \setminus U$ of size $|K^r|$ for the graph $G'' \setminus K^{nr}$ containing an $(|K^r \setminus X^r|,U)$-important separator $X''$. 
% in the graph $G'' \setminus K^{nr}$.
 The set $K^r \subseteq V(G'') \setminus U$ is also a \pionetopidmodulator of size $|K^r|$ for the graph $G'' \setminus K^{nr}$ and it contains a $W_1-P_1^{nr}$ separator $X^r$. Hence, if we can prove that the set $X^r$ is a $(|K^r \setminus X^r|,U)$-important separator in the graph $G'' \setminus K^{nr}$, we are done.

Suppose this is not the case. Then there exists a  $(|K^r \setminus X^r|,U)$-good separator $X'' \subseteq V(G'') \setminus U$ in the graph $G'' \setminus K^{nr}$ that well-dominates $X^r$. Note that since $G''$ is the graph $G[R[W_1, X]]$, the set of vertices reachable from $W_1$ after deleting $X^r$ in the graph $G'' \setminus K^{nr}$ denoted by $R_{G'' \setminus K^{nr}}(W_1,X^r)$ is the set $R_{G \setminus K^{nr}}(W_1,X)$. If $X'' \neq X^r$, the set $R_{G \setminus K^{nr}}(W_1,X) = R_{G'' \setminus K^{nr}}(W_1,X^r)  \subset R_{G'' \setminus K^{nr}}(W_1,X'')$ which cannot happen. %Look again

Note that the graph $G''$ can be viewed as the graph obtained by gluing two boundaried graphs $G_1$ and $G_2$ both having boundary $P_1^r$ where $G_1 = G[R[W_1, P_1]]$ and $G_2 = G[NR(W_1, P_1) \cap R(W_1, X) \cup P_1^r \cup K^{nr}]$ with the bijection being an identity mapping from $P_1^r$ into itself. Unfortunately the graph $G_2$ is not of size $k^{\mathcal{O}(1)}$ size and hence doesn't satisfy the conditions required for the lemma. We now aim to construct a graph $\hat{G}$ by keeping some $k^{\mathcal{O}(1)}$ vertices of $G_2$.

%We now describe the $|P_1^r|$-boundaried graph $\hat{G}$. 
Let $V_2 = (NR(W_1,P_1) \cap R(W_1,X)) \cup P_1^r \cup K^{nr}$. The set $V_2 \setminus (P_1^r \cup K^{nr})$ contains the vertices which are disconnected by $P_1$ from $W_1$ but are not disconnected from $W_1$ by $X$.  We have $G_2 = G[V_2]$. \\%We say that two vertices $v,w$ of $G_2 \setminus K^{nr}$ are {\em  $V_2$-connected} if they are connected in the graph $G_2 \setminus K^{nr}$ ($V_2$-connectivity of sets is defined analogously). \\

\noindent{\textbf{Marking Vertices of Forbidden Sets:}} 

We now perform the following marking scheme on the graph $G$ where we mark some vertices of $V_2$ to construct a smaller graph $G'$. Before this though, we need to define the following notations.

Let $p$ denote the size of the maximum sized subgraph present among all the families $\mathcal{F}_i$.
Let $\mathbb{H} = \{(H_1, \dotsc, H_d): H_i \in \mathcal{F}_i, i \in [d]\}$. For $\mathcal{H} = (H_1, \dotsc, H_d) \in \mathbb{H}$, let $\mathbb{B}_{\mathcal{H}} = \{(B_{1}, \dotsc, B_{d}) : B_i \subseteq V(H_i), i \in [d]\}$. Let $\mathbb{P}_1^r =\{(Q_{1}, \dotsc, Q_{d}) : Q_i \subseteq P_1^r, |Q_i| \leq p, i \in [d]\}$. 

Let $\mathbb{T}_{\mathcal{H}}$ be the collection of tuples  $(t_1, \dotsc, t_r)$ where $t_i$ is a pair of elements $t_i^1, t_i^2 \in P_1^r \cup \{\emptyset \}$ and $r = {\sum\limits_{i \in [d]}^{}|H_i| \choose 2}, i \in [r]\}$. We use the bijection $\rho :{\bigcup_{i \in [d]}H_i \choose 2} \rightarrow [r]$ so that $\rho(\{a,b\})$ denotes the index associated to the pair of vertices $a, b \in \bigcup_{i \in [d]}H_i$.

For all tuples $\langle \mathcal{H}, \mathcal{B}_{\mathcal{H}}, \mathcal{P}_1^r, \mathcal{T}_{\mathcal{H}} \rangle$ where $\mathcal{H} = (H_1, \dotsc, H_d) \in \mathbb{H},  \mathcal{B}_{\mathcal{H}}  = (B_{1}, \dotsc, B_{d}) \in \mathbb{B}_{\mathcal{H}}, \mathcal{P}_1^r = (Q_{1}, \dotsc, Q_{d}) \in \mathbb{P}_1^r$ and $\mathcal{T}_{\mathcal{H}} = (t_1, \dotsc, t_r) \in \mathbb{T}_{\mathcal{H}}$, if there exists a forbidden set $C \subseteq (V_2 \cup V(G_1))$ of the graph $G \setminus K^{nr}$ such that
\begin{itemize}
\item For all $i \in [d]$, there exists a subset $C_i \subseteq C$ such that $G[C_i]$ is isomorphic to $H_i$,
\item for sets $C_i^+ = V_2 \cap C_i$, we have graphs $G[C_i^+]$ isomorphic to  $H_i[B_i]$,
\item the set $P_1^r \cap C_i = Q_{i}$ and
\item for vertices $a_i \in C_i$ and $a_j \in C_j$ with $i,j \in [d]$, there is path $P'$ from $a_i$ to $a_j$ in the graph $G \setminus K^{nr}$ such that 
%the order of the set $P' \cap P_1^r$ in $P'$ is the permutation $t_{\rho(a_i,a_j)}$, (Note that when $P' \cap P_1^r = \emptyset$, we have $t_{\rho(a_i,a_j)} = \emptyset$
the first and last vertex of $P'$ in the set $P_1^r$ that has a neighbor to the set $R(W_1, P_1)$ is $t_{\rho(a_i,a_j)}^1$ and $t_{\rho(a_i,a_j)}^2$ respectively, (When the path $P'$ has only one such vertex $v$, we denote it by the pair $\{v,\emptyset\}$. If the path has no such vertex, then we denote it by the pair $\{\emptyset, \emptyset \}$, . Also note that the existence of such paths for all pair of vertices in $C$ shows that $G[C]$ is connected in the graph $G \setminus K^{nr}$).
\end{itemize}

then for one such forbidden set $C$, we mark the set  $C^+ = C \cap V_2$.  Let $M'$ be the set of vertices marked in this procedure. We call the corresponding forbidden sets $C$ as marked forbidden sets.

\begin{figure}[t]
\centering
	\includegraphics[scale=0.3]{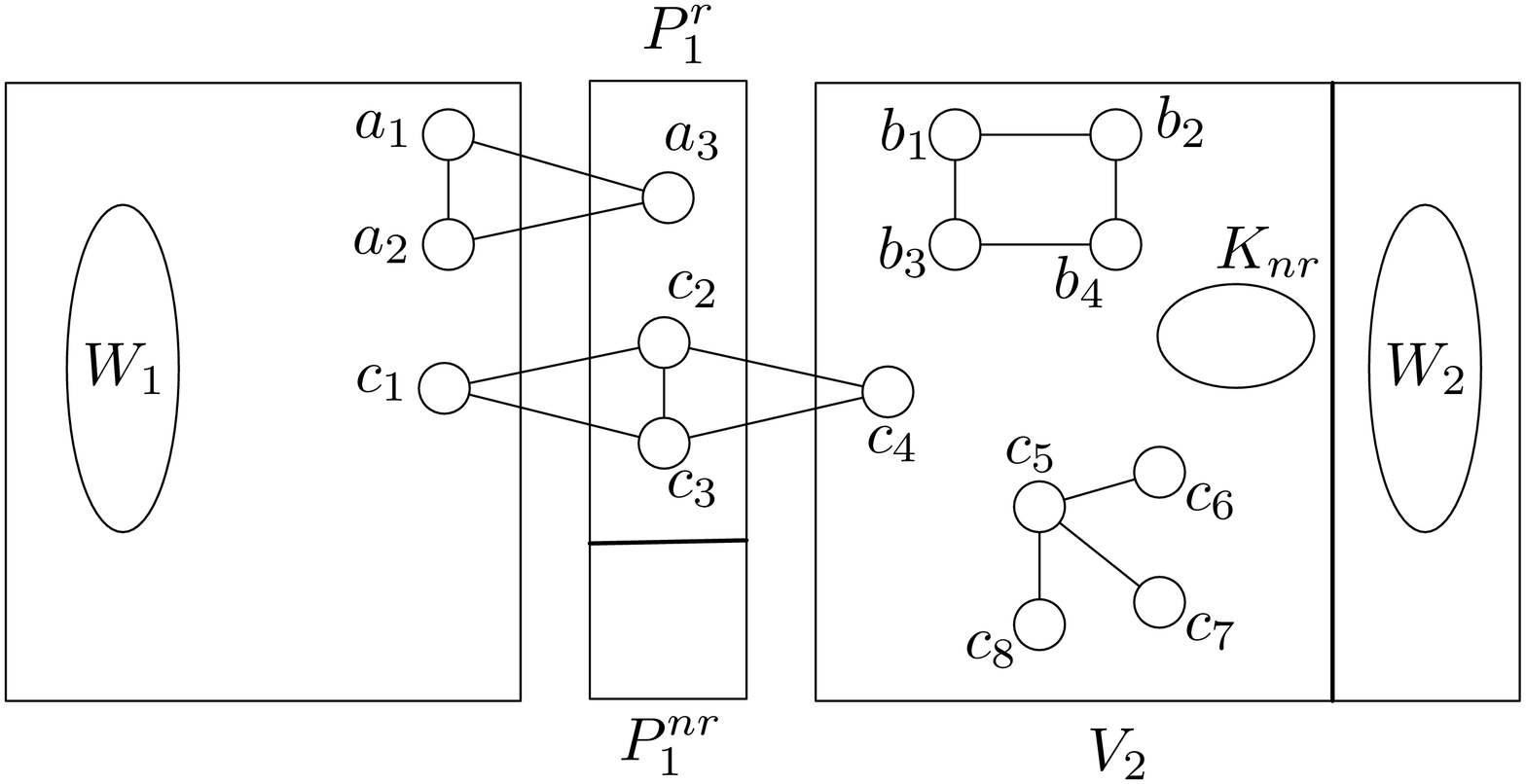}
	\caption{Example of a marked forbidden set}
\label{fig:marked-set}
\end{figure}

See figure \ref{fig:marked-set} for an example of a marked forbidden set. We have $\mathcal{H} = (H_1, H_2, H_3)$ where $H_1$ is a triangle, $H_2$ is a $C_4$ and $H_3$ has two connected components, one of which is $K_4$ after removal of an edge and the other is a claw $K_{1,3}$. The graph induced by the set of vertices $\{a_1, a_2, a_3\}$  is isomorphic to $H_1$, the one induced by the set of vertices $\{b_1, b_2, b_3, b_4\}$ is isomorphic to $H_2$ and the one induced by the set of vertices $\{c_1, c_2, c_3, c_4, c_5, c_6, c_7, c_8\}$ is isomorphic to $H_3$. We have $\mathcal{B}_{\mathcal{H}}  = (B_{1}, B_2 , B_{3})$ where $B_1$ is a singleton vertex, $B_2$ is the cycle graph $C_4$ and $B_3$ has two connected components, one of which is a triangle and the other is a claw $K_{1,3}$. Notice that there are the graphs induced by $H_1$, $H_2$ and $H_3$ when we restrict the set of vertices to $V_2$. We have $\mathcal{P}_1^r = (Q_{1}, Q_2 , Q_{3})$ as $(\{a_3\}, \{\emptyset\}, \{c_2, c_3\})$.

\begin{figure}[t]
\centering
	\includegraphics[scale=0.3]{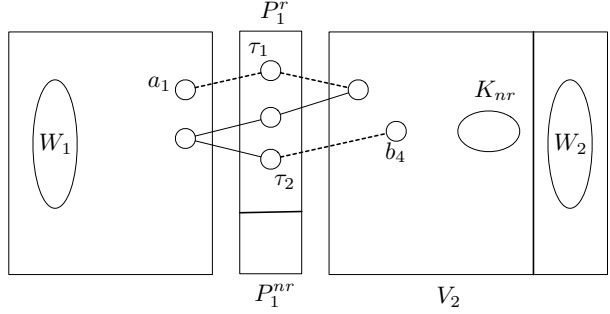}
	\caption{Example showing paths between vertices of a marked forbidden set}
\label{fig:marked-set-2}
\end{figure}

See figure \ref{fig:marked-set-2} where we look at a path between $a_1$ and $b_4$ of the same marked forbidden set. The first and last vertex of such a path is $\tau_1$ and $\tau_2$ respectively. Hence 
the entry of $\mathcal{T}_{\mathcal{H}}$ corresponding to the pair $(a_1, b_4)$ is $\{\tau_1, \tau_2\}$.

%look at the following collection of forbidden sets %%k+1 disjoint idea 
% $C^1 \cup C^2 \cup \dotsc C^\beta$ where $\beta \leq k+1$ such that $C^i$ satisfies the properties of the forbidden set $C$ described above, the set $C' = C^i cap R[W_1,P_1]$ is the same for all $i \in [\beta]$ and the sets $C^i \setminus C' = C^i \cap V_2$ are disjoint. In case $\beta > k+1$ for any forbidden set collection, we arbitrarily pick any $k+1$ such forbidden sets. 
% We mark the vertices $C^i \cap V_2$ for $i \in [\beta]$. Let $M'$ be the set of vertices marked in this procedure. We call the forbidden sets $C^i$ corresponding to these as marked forbidden sets.

%\begin{figure}[t]
%\centering
%	\includegraphics[scale=0.3]{}
%	\caption{}
%\label{fig:marking-procedure}
%\end{figure}
%
%See Figure \ref{} for an example for a marked forbidden set. 

We now bound the size of $M'$. We know that each graph class $\mathcal{F}_i$ has a finite number of finite sized graphs. Let $f = \max_{i \in d} |\mathcal{F}_i|$. The size of $\mathbb{H}$ is the number of tuples $\mathcal{H} = (H_{1}, \dotsc, H_{d})$ which is at most $f^d$. Since $|H_i| \leq p$, the size of $\mathbb{B}_{\mathcal{H}}$ is bounded by the number of tuples $(B_{1}, \dotsc, B_{d})$ which is at most $2^{pd}$.  Since the set $Q_i$ is of size at most $p$, the size of $\mathbb{P}_1^r$  is bounded by $k^{(p+1)d}$. Each vertex in a pair in $\mathbb{T}_{\mathcal{H}}$ is a pair of vertices of $P_1^r$. The number of such pairs is bounded by $(k+1)^2$. Since $r = {\sum\limits_{i \in [d]}^{}|H_i| \choose 2} \leq {pd \choose 2}$, the size of $\mathbb{T}_{\mathcal{H}}$ is bounded by $((k+1)^2)^{{pd \choose 2}}$. Overall, we can conclude that the number of tuples $\langle \mathcal{H}, \mathcal{B}_{\mathcal{H}}, \mathcal{P}_1^r, \mathcal{T}_{\mathcal{H}} \rangle$ is at most $\eta = f^d 2^{pd} k^{(p+1)d}k^{2(pd)^2}$. For each of these tuples we mark the set $C \cap V_2$ which is of size at most $pd$. Hence we can conclude that $|M'| \leq \eta pd$. The same bound holds for the vertices corresponding to the marked forbidden sets which we denote by $M_F$. \\

\noindent{\textbf{Preserving Connectivity of the Marked Forbidden Sets:}} 
We now aim to keep some vertices in $V_2$ other than those in $M'$ so that for every marked forbidden set $C$, the graph $G[C]$ remains connected in the resulting graph. We also add the requirement that the connectivity between every vertex in $C^+$ and vertices in $P_1^r$ and also between pairs of vertices in $P_1^r$ in the graph $G[V_2]$ are preserved. Let $F$ be the forest of minimum size in the graph $G$ such that it satisfies these connectivity requirements. Note that $M' \subseteq V(F)$. % of pairs of vertices in marked forbidden sets.

We now try to bound the size of the forest $F$. Note that any leaf of the forest $F$ corresponds to some vertex in the marked forbidden set $M_F \cup P_1^r$. This is because it is not the case, then for some leaf vertex $u \in V(F)$, the forest $F - \{u\}$ also preserves the connectivities required for marked forbidden sets contradicting that $F$ is the forest of minimum size. Hence the number of leaves is bounded by $\eta pd + k$.

By properties of any forest, the number of vertices of degree 3 or more is at most the number of leaves. Hence such vertices of $F$ are also bounded by $\eta pd + k$. Hence it remains to bound the number of degree 2 vertices in $F$.

\begin{figure}[t]
\centering
	\includegraphics[scale=0.3]{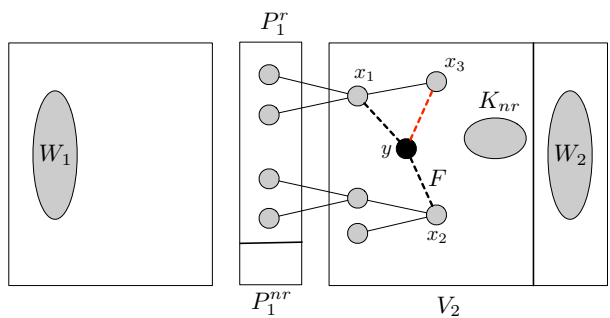}
	\caption{Forest $F$ that provides required connectivities of marked forbidden set vertices. The vertices colored grey correspond to marked vertices and white correspond to other vertices of $F$. The forest $F$ has a degree two path between $x_1$ and $x_2$ with all the internal vertices unmarked. If an unmarked vertex $y$ in this path has an edge to some marked vertex $x_3$, then the forest obtained by replacing an edge adjacent to $y$ in the path with $(x_3,y)$ also preserves the connectivities but has an unmarked vertex as a leaf giving a contradiction.}
\label{fig:forest-size-bound}
\end{figure}

Let us focus on a degree 2 path $P$ of $F$ with endpoints either a leaf of $F$ or degree at least 3 or any vertex in $M_F$ or $P_1^r$. Suppose $P$ has at least 3 internal vertices. We claim that all the internal vertices of $P$ except the first and last internal vertices are not adjacent to any vertices of $F$ in the graph $G$ induced on $V(F)$ other than its neighbors in the path $F$. Suppose this is not the case for some internal vertex $u$ with $u_1$ and $u_2$ being the two neighbors of $u$ in $P$. Hence $u$ is adjacent to some other vertex $v$ of $F$. Note that the edge $(u,v)$ is not in the forest $F$. Let us add this edge to the forest $F$ creating a unique cycle $C$ containing $(u,v)$. Without loss of generality, let $u_1$ be the other neighbor of $u$ in $C$. Let $F_1$ be the forest created by adding the edge $(u,v)$ and removing the edge $(u,u_1)$. Note that we now have a forest $F_1$ where $u_1$ is a leaf vertex that is not marked. Then  $F_1 - \{u_1\}$ is also a forest that preserves the connectivities that $F$ did with a fewer number of vertices. This contradicts that $F$ is the forest with the minimum number of vertices. See Figure \ref{fig:forest-size-bound} for an illustration regarding this proof.

Since $P$ does not have edges from the internal vertices to other vertices of $F$, we can contract these paths up to a certain length and preserve connectivities of $F$.

%Recall that we aim to construct a graph $G'$ such that {\sc Branching Set} on input $((G' \setminus \Delta(\hat{G}),|K^r|, W_1, P_1^{nr}),|X^r|,|K^r \setminus X^r|)$ returns a set $\mathcal{R}'$ that intersects $K^r$. We claim that $G'$ obtained by gluing the graph $G_1 = G[R[W_1, P_1]]$ and $\hat{G} = G[V(F)]$ satisfies this. 

Let $G_2' = G[V(F)]$. Let $G_2$ be the graph obtained from $G_2'$ by contracting all the degree 2 paths in the graph of length more than $4pd+2$ to length $4pd+2$. Since the number of degree 2 paths in $F$ is bounded by $(2(|M_F| + |P_1|))^2$ and each path is bounded by size $4pd+2$, we have $V(G_2) \leq 2(\eta pd + k)^2 (4pd+2)$ which is at most $k^{5(pd)^2}$. \\ % ignoring the constants. 

\noindent{\textbf{Construction of $G'$:}} 
Let $G'$ be the graph obtained by gluing the boundaried graphs $G[R[W_1, P_1]]$ and $G_2$ both having $P_1^r$ as the boundary with the bijection corresponding to the gluing being an identity mapping from $P_1^r$ to itself. See figure \ref{fig:smaller-graph}. We also similarly define $G'''$ as the graph obtained by gluing the graphs $G[R[W_1, P_1]]$ and $G_2'$ both having $P_1^r$ as the boundary with the bijection corresponding to the gluing being an identity mapping from $P_1^r$ to itself.% for future references. 

\begin{figure}[t]
\centering
	\includegraphics[scale=0.3]{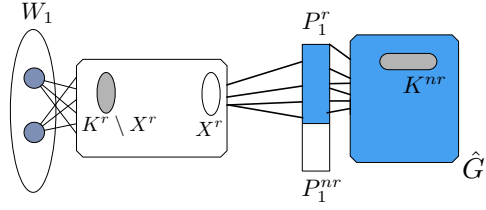}
	\caption{The graph $G' \setminus K^{nr}$ obtained from gluing the graphs $G[R[W_1,P_1]$ and $\hat{G} \setminus K^{nr}$ along $P_1^r$ where $K^r$ is an optimal {\pionetopidmodulator}}
\label{fig:smaller-graph}
\end{figure}

We claim that for the graph $G'$,  {\sc Branching-Set} with input $((G' \setminus K^{nr},|K^r|, W_1, P_1^{nr}, U'),|X^r|,|K^r \setminus X^r|)$ returns a set $\mathcal{R}'$ that intersects $K^r$ where $U' = U \cup V(\hat{G}) \setminus P_1^r$. Suppose not. 
The set $\mathcal{R}'$ by the definition of {\sc Branching-Set} procedure intersects any \pionetopidmodulator $Z' \subseteq (V(G') \setminus U')$ of size $|K^r|$ in the graph $G' \setminus K^{nr}$ containing an $(|K^r \setminus X^r|,U')$-important $W_1-P_1^{nr}$ separator. % in the graph $G' \setminus K^{nr}$.
 Since $K$ is a \pionetopidmodulator in the graph $G[R[W_1,X]]$ with $K \cap U' = K^{nr}$, the set $K^r = K \setminus K^{nr}$ is a \pionetopidmodulator of size $|K^r|$ for the graph $G[R[W_1,X]] \setminus K^{nr}$ with no vertices from $U'$. %The graph $G' \setminus K^{nr}$ can be viewed as contracting some edges of the graph $G''' \setminus K^{nr}$ which itself is an induced subgraph of the graph $G[R[W_1,X]] \setminus K^{nr}$
Let us turn the focus to the graph $G'''$ where the degree $2$ paths are not contracted. Since $G''' \setminus K^{nr}$ is an induced subgraph of the graph $G[R[W_1,X]] \setminus K^{nr}$, the set $K^r$ is a \pionetopidmodulator of size $|K^r|$ for the graph $G''' \setminus K^{nr}$ as well. In other words, every connected component of the graph $G''' \setminus K$ belongs to at least one of the graph class $\Pi_i$ with $i \in [d]$.

\begin{claim}\label{claim-kr-mod}
$K^r$ is a \pionetopidmodulator of size $|K^r|$ for the graph $G' \setminus K^{nr}$. 
\end{claim}

 The graph $G' \setminus K$ can be viewed as obtained from  $G''' \setminus K$ by contracting some of the degree 2 paths of length more than $4pd+2$ to $4pd+2$. If we can prove that after doing this contraction in $G''' \setminus K$, the connected components still belongs to at least one of the graph class $\Pi_i$ with $i \in [d]$, we are done.
 
 Suppose not. Then there is a forbidden set $C$ in one of the connected components of the graph $G' \setminus K$. Let us now uncontract the edges that we contracted. We will show that in the resulting graph $G''' \setminus K$, there exists a forbidden set $C'$ which is isomorphic to $G[C]$. This contradicts our assumption that $G''' \setminus K$ is such that each of its components belongs to at least one of the graph classes $\Pi_i$ for $i \in [d]$.
 
 Let $C = \bigcup_{i \in [d]} C_i$ where $C_i$ isomorphic to the graph $H_i \in \FF_i$. Let $\alpha$ be one of the paths in the graph $G''' \setminus K$ with degree 2 vertices which was contracted to a path $\alpha'$ of length $4pd+2$ in the graph $G' \setminus K$. The graph induced by $C_i \cap V(\alpha')$ is such that each connected component is a path of length at most $p$. 
 %do we need to prove that C' does not have any undeletable vertices? I think no as we are saying forbidden set, not the solution. FOrbidden sets could have undeletable vertices, you are just not allowed to use them to hit them.
 %Work with connected component of each forbidden graph, or could use endpoints of P, no conn comp contain both endpoints
 In the path $\alpha$ too we can find a subset of vertices such that the graph induced by those vertices is isomorphic to the graph induced by $C_i \cap V(\alpha')$. 
 %Since the length of $\alpha$ is at least $4pd+2$, we can find such subsets of vertices for each $C_i$ with $i \in [d]$ which are disjoint. 
Since $\alpha'$ has size $4pd+2> p$, no connected component of $C_i$ for any $i \in [d]$ has both the endpoints of $\alpha'$.  Hence, if we replace $C_i \cap V(\alpha')$ with the corresponding subsets we identified in $\alpha$, we get subsets $C_i'$ so that the set $C' = \bigcup_{i \in [d]} C_i'$ in the graph $G''' \setminus K$ is a forbidden set isomorphic to $C$. The connectivity of $C'$ is preserved as we only uncontract some edges. This contradicts that  $G''' \setminus K$ is such that each of its components belongs to at least one of the graph classes $\Pi_i$ for $i \in [d]$. This concludes the proof of the claim.
 %its the opposite way

%Hence $K^r$ is a \pionetopidmodulator of size $|K^r|$ in the graph $G' \setminus K^{nr}$. 
We now prove that $X^r$ is a $(|K^r \setminus X^r|,U')$-important separator in $G' \setminus K^{nr}$. This  implies that the set returned from the recursive procedure $\mathcal{R}'$ intersects $K^r$ completing the proof of the lemma.

\begin{claim}$X^r$ is a $(|K^r \setminus X^r|,U')$-important separator in $G' \setminus K^{nr}$.
\end{claim}

 %%%%%%%
 %For $C$ to not be a forbidden set in $G''' \setminus K$, there exists vertices $c_1, c_2 \in C$ such that $(c_1,C_2)$ is an edge in $G' \setminus K$ and an a non-edge in $G''' \setminus K$.
 
%  does not create any new forbidden sets, we can claim that $K^r$ is a \pionetopidmodulator of size $|K^r|$ for the graph $G' \setminus K^{nr}$ as well.
 
% Hence, if we can prove that the set $X^r$ is a $|K^r \setminus X^r|$-important separator in the graph $G''' \setminus K^{nr}$, we are done.

%Suppose this is not the case. Then 
%\begin{itemize}
%\item either the set $X^r$ is not a $|K^r \setminus X^r|$-good $W_1-P_1^{nr}$ separator in the graph $G''' \setminus K^{nr}$ or 
%\item there exists a  $|K^r \setminus X^r|$-good $W_1-P_1^{nr}$ separator $X'''$ in the graph $G''' \setminus K^{nr}$ that well-dominates $X^r$. 
%\end{itemize}

We first prove that the set $X^r$ is $(|K^r \setminus X^r|,U')$-good $W_1-P_1^{nr}$ separator in the graph $G''' \setminus K^{nr}$. We know that $X$ is a $W_1-P_1^{nr}$ separator in the graph $G$. Hence $X^r$ is $W_1-P_1^{nr}$ separator in the graph $G \setminus X^{nr}$. Since $G''' \setminus K^{nr}$ is an induced subgraph of $G \setminus X^{nr}$, we can conclude that $X^r$ is a $W_1-P_1^{nr}$ separator in the graph $G''' \setminus K^{nr}$. Since the graph $G' \setminus K^{nr}$ can be seen as obtained from $G''' \setminus K^{nr}$ by contracting some degree 2 paths with none of the edges with endpoints in $X^r$ contracted, $X^r$ is also a $W_1-P_1^{nr}$ separator in the graph $G' \setminus K^{nr}$.

Suppose $X^r$ is not a $(|K^r \setminus X^r|,U')$-good $W_1-P_1^{nr}$ separator in the graph $G' \setminus K^{nr}$. Then the graph $G' \setminus (K^{nr} \cup X^r)$ does not contain a  \pionetopidmodulator of size $|K^r \setminus X^r|$ with undeletable set $U'$. We claim that the set $K^r \setminus X^r$ is indeed such a \pionetopidmodulator . Suppose not. Then there exists a forbidden set $C$ in the graph $G' \setminus (K^{nr} \cup X^r \cup (K^r \setminus X^r)) = G' \setminus K$. But this contradicts Claim \ref{claim-kr-mod}.
%we have already earlier shown that $G' \setminus K$ is such that each of its components belongs to at least one of the graph classes $\Pi_i$ for $i \in [d]$ leading to a contradiction.

Hence it remains to show that the set $X^r$ is a $(|K^r \setminus X^r|,U')$-important $W_1-P_1^{nr}$ separator in the graph $G' \setminus K^{nr}$. Suppose this is not the case.
%%06/10/2021
Then there exists a  $(|K^r \setminus X^r|,U')$-good $W_1-P_1^{nr}$ separator $X'$ in the graph $G' \setminus K^{nr}$ that well-dominates $X^r$. We claim that if so, the set $\hat{X} = X' \cup X^{nr}$ is an $(\ell,U)$-good $W_1-W_2$ separator in the graph $G$ with undeletable set $U$ well-dominating $X$. This would contradict that $X$ is an $(\ell,U)$-important $W_1-W_2$ separator in the graph $G$.

Let $Y'$ be the set witnessing that $X'$ is a $(|K^r \setminus X^r|,U')$-good $W_1-P_1^{nr}$ separator in the graph $G' \setminus K^{nr}$. Note that $X' \cup Y'$ is contained in the set $R[W_1,P_1]$ as it cannot contain vertices from the set $(V(\hat{G}) \setminus P_1^r) \subseteq U'$.

We now claim that $Y' \cup (K^{nr} \setminus X^{nr})$ is the set witnessing that  $\hat{X}$ is an $(\ell,U)$-good $W_1-W_2$ separator in the graph $G$. Suppose this is not the case. Then there exists a forbidden set $C$ in the graph $G[R[W_1,\hat{X}]] \setminus (X' \cup X^{nr} \cup Y' \cup (K^{nr} \setminus X^{nr})) = G[R[W_1,\hat{X}]] \setminus K'$ where $K' = X' \cup Y' \cup K^{nr}$.

%undeletable

%We now prove that $X' \subseteq R(W_1,P_1)$. Suppose there exist a vertex $x \in X' \cap V_2$. Since $V_2 \subseteq R(W_1,X) = R_{G \setminus K^{nr}}(W_1,X^r)$, we can conclude that $x \in R_{G \setminus K^{nr}}(W_1,X^r)$. But then since $x \in X'$, we have  $x \notin R_{G \setminus K^{nr}}(W_1,X')$. Hence $R_{G \setminus K^{nr}}(W_1,X^r)$ is not a subset of  $R_{G \setminus K^{nr}}(W_1,X')$ contradicting that $X'$ dominates $X^r$ in the graph $G' \setminus K^{nr}$.

%It is also the case that $Y' \subseteq R(W_1,P_1)$ as $Y'$ is the modulator in the graph $G'[R(W_1,X')]$.

If $C \subseteq V(G') \setminus K'$, then the forbidden set $C$ occurs in the graph $G' \setminus K'$ contradicting that $X'$ is an $(|K^r \setminus X^r|,U')$-good $W_1-P_1^{nr}$ separator in the graph $G' \setminus K^{nr}$. Hence $C \cap ((R[W_1,\hat{X}] \setminus K') \setminus V(G'))$ is non-empty.
%%07/10/2021
Let $C = \bigcup_{i \in d} C_i$ where $G[C_i]$ is isomorphic to graphs $H_i \in \FF_i$. Let $C_i^+ = C_i \cap NR[W_1,P_1]$ and $C^+ = \bigcup_{i \in d} C_i^+$. We have graphs $G[C_i^+]$ isomorphic to graphs $H_i[B_i]$ for subsets $B_i \subseteq V(H_i)$. Let $C_i^{P_1^r} = C_i \cap P_1^r$. Since $G[C]$ is connected, for vertices $a_i \in C_i$ and $a_j \in C_j$ with $i, j \in [d]$, there is a path $P_{a_i,a_j}$ from $a_i$ to $a_j$ in the graph $G[R[W_1,\hat{X}] \setminus K'$. Let the first and last vertices of $P_1^r$ in $P_{a_i,a_j}$ be the pair $t_{\rho(a_i,a_j)}$.
%Let the first and last vertex of $P_{a_i,a_j}$ in the set $P_1^r$ that has a neighbor to the set $R(W_1, P_1)$ be $p_{\rho(a_i,a_j)}^1$ and $p_{\rho(a_i,a_j)}^2$. 
Then for the tuple  $\langle \mathcal{H}, \mathcal{B}_{\mathcal{H}}, \mathcal{P}_1^r, \mathcal{T}_{\mathcal{H}} \rangle$ where $\mathcal{H} = (H_1, \dotsc, H_d) \in \mathbb{H},  \mathcal{B}_{\mathcal{H}}  = (B_{1}, \dotsc, B_{d}) \in \mathbb{B}_{\mathcal{H}}, \mathcal{P}_1^r = (C_1^{P_1^r}, \dotsc, C_d^{P_1^r}) \in \mathbb{P}_1^r$ and $\mathcal{T}_{\mathcal{H}} = (t_1, \dotsc, t_{|C| \choose 2}) \in \mathbb{T}_{\mathcal{H}}$, there exists a forbidden set $C_M \subseteq (V_2 \cup V(G_1))$
 of the graph $G \setminus K^{nr}$ which is marked. %Let us focus on one of such forbidden sets $C_M$.

Let $C_{M,i}$ be the set such that $G[C_{M,i}]$ is isomorphic to $H_i$. Also let $C_{M,i}^+ = C_{M,i} \cap (V(F))$. The set $C_M$ can be viewed as replacing the vertices $C^+$ of $C$ with $C_M^+$. %It is easy to see that $C_M$ is isomorphic to $C$.
 
We first claim that the set of vertices of $C_M$ is present in the graph $G' \setminus K'$. Recall that $G'$ is obtained by contracting some degree 2 vertices of $G'''$ which in turn is obtained by gluing the graphs $G[R[W_1, P_1]]$ and $G[V(F)]$. All the vertices of $C_M$ is present in the graph $G''' \setminus K^{nr}$ as it contains all the marked vertices. In particular, all the vertices of $C_M$ in the set $V_2$ are contained in the set $M'$, the set of vertices of all the marked forbidden sets contained in $V_2$. When we transform $G'''$ to $G'$, we only contract degree 2 paths in $V_2$ none of whose vertices belong to $M'$ and hence $C_M$. 
%Also the connectivities between vertices are preserved in the contraction procedure. 
Hence $C_M$ is a present in the graph $G' \setminus K'$ as well.

If we can prove that $C_M$ is in a connected component of $G' \setminus K'$, we can conclude that $C_M$ is a forbidden set in the graph $G' \setminus K'$ contradicting that $X'$ is a $(|K^r \setminus X^r|,U')$-good $W_1-P_1^r$ separator in the graph $G' \setminus K^{nr}$.

Suppose $C_M$ is not in a connected component of $G' \setminus K'$. Then there exist a pair of vertices $u_1, u_2 \in C_M$ such that there is no path between $u_1$ and $u_2$ in the graph $G' \setminus K'$. But since $C_M$ corresponds to the tuple $\langle \mathcal{H}, \mathcal{B}_{\mathcal{H}}, \mathcal{P}_1^r, \mathcal{T}_{\mathcal{H}} \rangle$ with $\mathcal{T}_{\mathcal{H}} = (t_1 \dotsc t_{|C| \choose 2}) \in \mathbb{T}_{\mathcal{H}}$, there exists a path $P_{u_1,u_2}$ in the graph $G \setminus K^{nr}$ between $u_1$ and $u_2$ such that the first and last vertices of the path $P_{u_1,u_2}$ intersecting $P_1^r$ is the pair $t_{\rho(u_1, u_2)} = (\tau_1, \tau_2)$. %Since $C_M$ is constructed from the forbidden set $C$ which is disjoint from $X' \cup Y'$, 

Let us also look at vertices $u_1', u_2' \in C$ such that in the isomorphism from $C$ to $C_M$, $u_i$ is mapped to  $u_i'$ for $i \in \{1,2\}$. Since $C$ is connected, there is a path $P_{u_1',u_2'}$ between $u_1'$ and $u_2'$  in the graph $G[R[W_1,\hat{X}] \setminus K'$. Note that the forbidden sets $C$ and $C_M$ are both candidates for the marking procedure corresponding to the same tuple $\langle \mathcal{H}, \mathcal{B}_{\mathcal{H}}, \mathcal{P}_1^r, \mathcal{T}_{\mathcal{H}} \rangle$. Hence we can assume that the path $P_{u_1',u_2'}$ in the graph $G[R[W_1,\hat{X}] \setminus K'$ is such that the first and last vertices of the path $P_{u_1',u_2'}$ intersecting $P_1^r$ is the pair $(\tau_1, \tau_2)$. 

We now identify all the vertices of $P_1^r$ present in the path $P_{u_1',u_2'}$ and partition them accordingly. Specifically, let us partition the path $P_{u_1',u_2'}$ into a sequence of subpaths $\alpha_1, \dotsc , \alpha_q$ where the path $\alpha_1$ is from $u_1'$ to $\tau_1$, the path $\alpha_q$ is from $\tau_2$ to $u_2'$, the path $\alpha_i$ where $ 1 < i < q$ has its endpoints in $P_1^r$ and none of the internal vertices of the paths contain vertices of $P_1^r$.
We aim to use the path $P_{u_1,u_2}$ and the connectivities provided by the forest $F$ in $V_2$ to construct a path between $u_1$ and $u_2$ in the graph $G' \setminus K'$ leading to a contradiction.

%We aim to use the paths $P_{u_1,u_2}$ in $G \setminus K^{nr}$ and $P_{u_1',u_2'}$ in $G[R[W_1,\hat{X}] \setminus K'$ to show that there is a path between $u_1$ and $u_2$ in the graph $G' \setminus K'$ leading to a contradiction.

Let us now look at the cases based on whether $u_i, u_i' \in R[W_1,P_1]$ or not.

\begin{itemize}
\item Case 1, $u_1, u_2 \in R[W_1,P_1]$: Note that since both the vertices are in $V(G')$, we have $u_i = u_i'$ for $i \in \{1,2\}$. The paths $\alpha_i$ which is not present in $G' \setminus K'$ are those whose internal vertices contains some vertices of $V_2 \setminus V(F)$. The paths $\alpha_1$ and $\alpha_q$ are present in $G' \setminus K'$ as all its internal vertices including $u_1$ and $u_2$ are not in $V_2$. Hence such paths $\alpha_i$ have both its endpoints in $P_1^r$. Also since $P_1^r$ separates $R(W_1,P_1)$ from $V_2 \setminus P_1^r$, all paths $\alpha_i$ with $1 < i < q$ are such that all its internal vertices are either in $R(W_1,P_1)$ or in $V_2 \setminus P_1^r$. The former kind of paths are also present in $G' \setminus K'$.

Hence the path $\alpha_i$ that contains vertices of $V_2 \setminus F$ are such that its endpoints are in $P_1^r$ and all its internal vertices in $V_2$. The forest $F$ preserved connectivities of vertices in $P_1^r$ within $V_2$ including the endpoints of $\alpha_i$. Hence we can replace $\alpha_i$ with the unique path between its endpoints in the forest $F$. Note that such a path is   disjoint from $K'$ and hence is present in $G' \setminus K'$.

By replacing all such paths $\alpha_i$ with those in $F$, we get a walk from $u_1$ to $u_2$ in the graph $G' \setminus K'$.

\item Case 2, $u_1 \in R[W_1,P_1], u_2 \in V_2 \setminus P_1^r $: In this case, we have $u_1 = u_1'$. But it could be the case that $u_2 \neq u_2'$. Also it could be that $u_2'$ is a vertex not in $G' \setminus K'$.

Like we did in the previous case, we could replace all the paths $\alpha_i$ to paths in $G' \setminus K'$ using the forest $F$. Hence we have a path from $u_1$ to $\tau_2$ in the graph $G' \setminus K'$. We now focus on the path $\alpha_q$. We know that there is a subpath from $\tau_2$ to $u_2$ in the path $P_{u_1,u_2}$ in the graph $G \setminus K^{nr}$. Since $\tau_2$ is the last vertex of $P_1^r$ in the path, we know that this subpath is contained in the set $V_2$. Since $F$ is a forest that preserved connectivities of vertices between $M'$ and $P_1^r$ in the set $V_2$, there is a path from $\tau_2$ to $u_2$ in the forest $F$. Again note that such a path is   disjoint from $K'$ and hence is present in $G' \setminus K'$. We replace $\alpha_q$ with this path in $F$ to get a walk from $u_1$ to $u_2$ in the graph $G' \setminus K'$.

\item Case 3, $u_2 \in R[W_1,P_1], u_1 \in V_2 \setminus P_1^r $: This case is symmetric to the previous case and the proof goes accordingly.

\item Case 4, $u_1, u_2 \in V_2 \setminus P_1^r $ : In this case,  it could be that $u_2 \neq u_2'$ and $u_2 \neq u_2'$. Also it could be that the vertices $u_1'$ and $u_2'$ are not in $G' \setminus K'$.

We replace the path $\alpha_1$ to one in the forest as we done for $\alpha_q$ in Case 2. Since $\tau_1$ is the first vertex of $P_1^r$ in the subpath between $u_1$ and $\tau_1$ in the path $P_{u_1, u_2}$, such a path is contained in the set $V_2$. The forest $F$ preserves the connectivity between $u_1$ and $\tau_1$ in $V_2$.  Hence we can replace the path $\alpha_1$ with the one in $F$ which is disjoint from $K'$.

The other paths $\alpha_i$ are replace similarly as in Case 2 to get a walk from $u_1$ to $u_2$ in the graph $G' \setminus K'$.

\end{itemize}

%The set $P_1$ separates the sets $R(W_1,P_1)$ and $NR(W_1,P_1)$ by definition. Let us divide the path 
%$P_{u_1,u_2}$ into subpaths $P_1, P_2$ and $P_3$ such that 
%\begin{itemize}
%\item $P_1$ is the path between $u_1$ and $t_{\rho(u_1, u_2)}^1$,
%\item $P_2$ is the path between $t_{\rho(u_1, u_2)}^1$ and $t_{\rho(u_1, u_2)}^2$ and
%\item $P_3$ is the path between $t_{\rho(u_1, u_2)}^2$ and $u_2$.
%\end{itemize}
%
%Note that the forbidden sets $C$ and $C_M$ are both candidates for the marking procedure corresponding to the same tuple $\langle \mathcal{H}, \mathcal{B}_{\mathcal{H}}, \mathcal{P}_1^r, \mathcal{T}_{\mathcal{H}} \rangle$. Hence we can divide the path 
%$P_{u_1',u_2'}$ in the graph $G[R[W_1,\hat{X}] \setminus K'$ into subpaths $P_1', P_2'$ and $P_3'$ such that 
%\begin{itemize}
%\item $P_1'$ is the path between $u_1'$ and $t_{\rho(u_1, u_2)}^1$,
%\item $P_2'$ is the path between $t_{\rho(u_1, u_2)}^1$ and $t_{\rho(u_1, u_2)}^2$ and
%\item $P_3'$ is the path between $t_{\rho(u_1, u_2)}^2$ and $u_2'$.
%\end{itemize}

% there is a subpath $P_1'$ of $P_{u_1',u_2'}$ in $G[R[W_1,\hat{X}] \setminus K'$ between $u_1'$ and $t_{\rho(u_1, u_2)}^1$ and also a subpath $P_3'$ of $P_{u_1',u_2'}$ in $G[R[W_1,\hat{X}] \setminus K'$ between $u_2'$ and $t_{\rho(u_1, u_2)}^2$.

\begin{figure}[t]
\centering
	\includegraphics[scale=0.3]{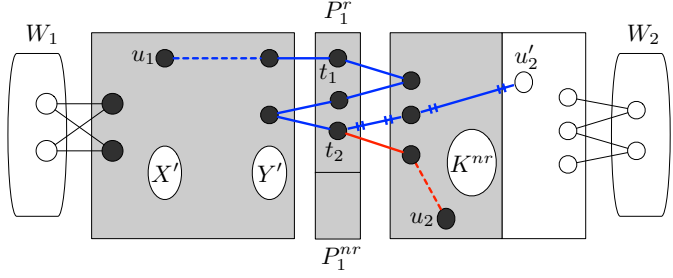}
	\caption{A demonstration of how $u_1, u_2 \in C_M$ are connected in the graph $G' \setminus K'$ (denoted by the grey region). We know from the marking procedure that both $C$ and $C_M$ are such that paths between vertices corresponding to $u_1$ and $u_2$ have its first and last vertex of $P_1^r$ as $t_1$ and $t_2$. We replace the path between $t_2$ and $u_2'$ with the path between $t_2$ and $u_2$ guaranteed from the forest $F$.}
\label{fig:new-separator-sequence}
\end{figure}

 Hence the graph $G[C_M]$ is connected in the graph $G' \setminus K'$. Hence $C_M$ is a forbidden set in the graph $G' \setminus K'$ contradicting that $X'$ is $(|K^r \setminus X^r|, U')$-good $W_1-P_1^{nr}$ separator in the graph $G' \setminus K^{nr}$. Hence $|X^r|$ is $(|K^r \setminus X^r|, U')$-important $W_1-P_1^{nr}$ separator in the graph $G' \setminus K^{nr}$. This concludes the proof the claim and thereby the lemma.

\end{proof}

From Lemma \ref{lemma:the-big-one}, we can conclude that there exists a $|P_1^r|$-boundaried graph $\hat{G}$ with an annotated set $\tilde{S}$ and an appropriate bijection $\mu : \partial(\hat{G}) \rightarrow P_1^r$ with the properties claimed in the statement of Lemma \ref{lemma:the-big-one}. Now consider the recursive instance of {\sc Branching-Set} with input $\langle(G_{P_i^r, \delta} \setminus \tilde{S}, k_1, W_1, P_i^{nr}),\lambda',\ell'\rangle$ where $G_{P_i^r, \delta}$ is the graph obtained by gluing together $G[R[W_1,P_1]]$ and $\hat{G}$ via a bijection $\mu$ %with the connecting gadget added on $W_1 \cup P_1^r
$, \lambda' = |X^r|, k_1 = |K^r|$ and $\ell' = |K^r \setminus X^r|$.

To apply induction hypothesis on the above tuple, we first show that the tuple is valid. We show this by showing that $(G_{P_i^r, \delta} \setminus \tilde{S}, k_1, W_1, P_1^{nr})$ is a valid instance of \disjointpionetopiddeletioncomp. For this we need that $W_1 \cup P_1^{nr}$ is a 
\pionetopidmodulator for the graph $G_{P_i^r, \delta} \setminus \tilde{S}$ which is true as $W_1$ itself is such a set. Hence the tuple is valid and we can apply the induction hypothesis.

%Need rewrite
Since $X$ is $\ell$-important, from Lemma \ref{lemma:the-big-one}, it follows that $X^r$ must also be $k_1$-important in the graph $G_{P_i^r, \delta} \setminus  \tilde{S}$. By induction hypothesis, the tuple returns a set $\mathcal{R}'$ which intersects $K^r$. Since $K^r \subseteq Z$, we can conclude that $\mathcal{R'}$ intersects $Z$ as well. This completes the correctness of the {\sc Branching-Set} procedure.
%either intersects $X^r$, contains a vertex which is separated $W_1 \cup P^{nr}$ by $K^r$ or intersects $K^r$.  By Lemma \ref{lemma:the-big-one}, Claim \ref{claim4:big-lemma}, it holds that in the second case, $\mathcal{R}'$ contains a vertex which is separated from $W$ by $Z$. In the third case, $\mathcal{R}'$ intersects $Z$ as $K^r \subseteq Z$. 
% and $\mathcal{R'}$ is disjoint from $S \cup \tilde{S} \supseteq S$.
 %Since $\mathcal{R}' \subseteq \mathcal{R}$, we conclude the correctness of the algorithm.

\noindent
{\bf Bounding the set $\mathcal{R}$:} Let us look at the recursion tree of the {\sc Branching Set} procedure. The value of $\lambda$ drops at every level of the recursion tree. Since $\lambda \leq k$, the depth of the tree is bounded by $k$. The number of branches at each node is at most $k^3 \cdot 2^k \cdot k! \cdot 2^{k^{5(pd)^2}}$ ($k^3$ for choice of $\lambda', j$ and $\ell'$, $2^k$ for choice of $P_i^r$, $k!$ for the choice of the bijection $\delta$ and $ 2^{k^{5(pd)^2}}$ for the size of $\mathcal{H}$). Since, at each internal node, we add at most $2k$ vertices (corresponding to $P_1 \cup P_2$), we can conclude that the size of $\mathcal{R}$ is bounded by $ 2^{k^{5(pd)^2+2}}$.
Let $\mathcal{R}_{\lambda',\ell'}$ denote the set returned by {\sc Branching-Set} procedure with input $(\II, \lambda', \ell')$. We define $\mathcal{R}$ as the union of the sets $\mathcal{R}_{\lambda',\ell'}$ for all possible values of $\lambda'$ and $\ell'$. After this, in the {\sc Main-Algorithm} procedure we simply branch on every vertex $v$ of $\mathcal{R}$ creating new instances $(G - v,k-1,W_1,W_2)$ of {\disjointpionetopiddeletioncomp}. If $k<0$, we return NO. If Reduction Rule \ref{rr2} applies, we use it to reduce the instance. If this results in a non-separating instance with $W=W_1 \cup W_2$%(notice that this is true when $v$ belongs to the second case of $\mathcal{R}$ in Lemma \ref{lemma:separating-algorithm} when $v$ is separated from $W$ by solution $Z$)
, we apply the algorithm in Lemma \ref{lemma:non-sep} to solve the instance. Else we recursively run  {\sc Main-Algorithm} on the new instance. % all of the above on the new instance.

\noindent
{\bf Bounding running time of {\sc Main-Algorithm}:} We now bound the running time $T(k)$ for {\sc Main-Algorithm}. We ignore the constants in the polynomial of $k$ in the size of $\mathcal{R}$ and the nodes in the search tree for an easier analysis to show that the running time is $2^{poly(k)} n^{\OO(1)}$.

The depth of the branching tree is bounded by $k$ and the branching factor at each node is $|\mathcal{R}| \leq 2^{k^{5(pd)^2+2}}$. The time taken at each node is dominated  by the time taken for the procedure {\sc Branching-Set} corresponding to this node instance. Let $Q(k)$ denote the time taken for {\sc Branching-Set}. We have $T(k) = 2^{k^{\OO(1)}} T(k-1) + Q(k)$. 
Let us focus on the search tree for {\sc Branching-Set}. We know that the depth of the tree is bounded by $k$ and the branching factor is bounded by $ 2^{k^{\OO(1)}}$. The time spent at each node is dominated by algorithm of to enumerate graphs of size at most $2^{k^{5(pd)^2}}$ and %\textcolor{blue}{
that of the at most $\log n$ many sub-instances of {\sc Main-Algorithm} called with strictly smaller values of $k$, which is bounded by $2^{k^{\OO(1)}}n^{O(1)}+ \log n T(k-1)$. Hence overall we have $Q(k) = 2^{k^{\OO(1)}}Q(k-1) + 2^{k^{\OO(1)}}n^{O(1)}+ \log n T(k-1)$.
%}. 

%\textcolor{blue}{
We now prove by induction that $T(k) = (2^{k^{\OO(1)}} + \log n)^kn^{\OO(1)}$ and $Q(k) = (2^{k^{\OO(1)}} + \log n)^kn^{\OO(1)}$. The base case is true as $T(1) =Q(1) = n^{\OO(1)}$. Assume the statement holds true for $2 \leq i \leq k-1$. Substituting the values for $T(k-1)$ and $Q(k-1)$ in the recurrence for $Q(k)$, we have
%}
\begin{eqnarray*}
Q(k) & = & 2^{k^{\OO(1)}} (2^{k^{\OO(1)}} + \log n)^{k-1}n^{\OO(1)} + 2^{k^{\OO(1)}}n^{\OO(1)} \\
& & + \log n (2^{k^{\OO(1)}} + \log n)^{k-1}n^{\OO(1)} \\
& = & (2^{k^{\OO(1)}} + \log n)^{k-1}n^{\OO(1)} (2^{k^{\OO(1)}} +1 + 2^{k^{\OO(1)}} + \log n) \\
& = & (2^{k^{\OO(1)}} + \log n)^{k}n^{\OO(1)}
\end{eqnarray*}
%\textcolor{blue}{
Substituting the values for $T(k-1)$ and $Q(k)$ in the recurrence for $T(k)$, we have
%}
\begin{eqnarray*}
T(k) & = & 2^{k^{\OO(1)}} (2^{k^{\OO(1)}} + \log n)^{k-1}n^{\OO(1)} + (2^{k^{\OO(1)}} + \log n)^{k}n^{\OO(1)} \\
& = & (2^{k^{\OO(1)}} + \log n)^{k-1}n^{\OO(1)} (2^{k^{\OO(1)}} + (2^{k^{\OO(1)}} + \log n))  \\
& = &(2^{k^{\OO(1)}} + \log n)^{k}n^{\OO(1)}
\end{eqnarray*}
%\textcolor{blue}{
Expanding the term $(2^{k^{\OO(1)}} + \log n)^{k}$ and observing that $(\log n)^k \leq (k \log k)^k+n $, we can conclude that $T(k) = 2^{k^{\OO(1)}}n^{\OO(1)}$.
%}.
%Substituting $Q(k)$ in the recurrence relation for the running time of {\sc Main-Algorithm} and simplifying, we have $T(k) = \sum\limits_{i=1}^{k}2^{ik^{\OO(1)}}n^{O(1)} T(k-i) \leq k \cdot 2^{k \cdot k^{\OO(1)}}n^{O(1)} T(k-1)$ on solving we get $T(k) = 2^{k^{\OO(1)}}n^{O(1)}$.

%The bound on the running time follows from the depth of the branching tree being bounded by $k$, the number of branches in each node being bounded by $|\mathcal{R}| \leq 2^{O(k^{\OO(1)})}$ and the time spent at each node being dominated by $2^{O(k^{\OO(1)})}n^{O(1)}$. 
%The correctness follows from the correctness of {\sc Branching Set} procedure, of Reduction Rule \ref{rr2} and Lemma \ref{lemma:non-sep}.
%\end{proof}

\begin{lemma} \disjointpionetopiddeletioncomp\ can be solved in $2^{k^{\OO(1)}}n^{O(1)}$ time. 
\end{lemma}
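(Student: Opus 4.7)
The plan is to combine the ingredients developed earlier in the section into a single algorithm and analyze its running time carefully. First, I would apply the standard iterative compression reduction described in Section \ref{section:iterative-compression} to transfer \pionetopiddeletion\ to \disjointpionetopiddeletioncomp, costing a factor of $2^{k+1}$. For a fixed instance $(G,k,W)$ of \disjointpionetopiddeletioncomp, I would branch over every bipartition $W = W_1 \uplus W_2$ (at most $2^{k+1}$ choices) to guess which part of $W$ ends up in the same connected component as the portion of $G \setminus Z$ containing the hypothetical solution $Z$. For each guess, first test whether the instance has a non-separating solution via Lemma \ref{lemma:non-sep} in $2^{O(k)} n^{O(1)}$ time; if it does, we are done. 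Otherwise exhaustively apply Reduction Rule \ref{rr1}, then Reduction Rule \ref{rr2} (its safeness established by Lemma \ref{lemma:S-split}), reducing to the case where $W_1$ and $W_2$ lie in the same connected component of $G$ and every solution contains a non-empty $(\ell, U)$-important $W_1$–$W_2$ separator $X$ with $0 \le \ell < k$ and $1 \le |X| \le k$.

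For this remaining separating case, invoke {\sc Main-Algorithm}, whose correctness follows by induction on $k$ using the safeness of the reduction rules and the correctness of {\sc Branching-Set}. The latter is proved by induction on $\lambda$: the easy base case $\lambda = 1$ and the cases where $X$ is comparable with $P_1$ or $P_2$ are handled by the inclusion $P_1 \cup P_2 \subseteq \mathcal{R}$ combined with Lemmas \ref{lemma:monotonicity} and \ref{lemma:l-imp-sep}; the hard incomparable case is discharged by Lemma \ref{lemma:the-big-one}, which guarantees that one of the glued graphs $G_{P_i^r,\delta}$ that {\sc Branching-Set} enumerates preserves the $(|K^r \setminus X^r|,U')$-importance of $X^r$, so that the inductive call returns a set meeting $K^r \subseteq Z$.

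The remaining task is the running-time recurrence. Let $T(k)$ denote the time of {\sc Main-Algorithm} and $Q(k)$ the time of {\sc Branching-Set}. At the top level, {\sc Main-Algorithm} branches on the at most $|\mathcal{R}| \le 2^{k^{O(1)}}$ vertices returned by {\sc Branching-Set}, decreasing $k$ by one in each branch, giving
\[
T(k) \;=\; 2^{k^{O(1)}}\, T(k-1) + Q(k).
\]
Inside {\sc Branching-Set}, the recursion on $\lambda$ branches $2^{k^{O(1)}}$ ways (choice of $P_i^r$, of the bijection $\delta$, of $\lambda', j, \ell'$, and of $\hat{G}$ from the enumerated family $\mathcal{H}$ of size $2^{k^{5(pd)^2}}$), while the binary search that separates $(\ell,U)$-good from $(\ell,U)$-bad separators in the tight separator sequence issues $O(\log n)$ calls to {\sc Main-Algorithm} with strictly smaller $k$, yielding
\[
Q(k) \;=\; 2^{k^{O(1)}}\, Q(k-1) + 2^{k^{O(1)}} n^{O(1)} + \log n \cdot T(k-1).
\]

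An induction on $k$ (base case $T(1) = Q(1) = n^{O(1)}$) shows $T(k), Q(k) \le (2^{k^{O(1)}} + \log n)^k n^{O(1)}$, exactly as displayed in the running-time computation preceding the lemma. Using $(\log n)^k \le (k\log k)^k + n$, this simplifies to $2^{k^{O(1)}} n^{O(1)}$. Multiplying by the $2^{2(k+1)}$ overhead from iterative compression and the $W_1 \uplus W_2$ guess leaves the overall bound unchanged. The main conceptual obstacle lies not in the algebraic recurrence but in Lemma \ref{lemma:the-big-one}: one must verify that the small gadget $\hat G$ of size $k^{5(pd)^2}$ genuinely lives inside the enumerated family $\mathcal{H}$ and that the boundaried gluing preserves both the optimality of $K^r$ and the $(|K^r \setminus X^r|,U')$-importance of $X^r$; once this is in hand, the lemma follows directly.
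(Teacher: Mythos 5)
Your proposal matches the paper's own argument: check for a non-separating solution via Lemma \ref{lemma:non-sep}, branch over all bipartitions $W = W_1 \uplus W_2$, invoke {\sc Main-Algorithm} (whose correctness rests on the reduction rules, the {\sc Branching-Set} induction and Lemma \ref{lemma:the-big-one}), and solve the same recurrences for $T(k)$ and $Q(k)$ to obtain $(2^{k^{\OO(1)}} + \log n)^k n^{\OO(1)} = 2^{k^{\OO(1)}} n^{\OO(1)}$. The only cosmetic difference is that you fold in the iterative-compression overhead, which the paper defers to the derivation of Theorem \ref{theorem:main-result}; this does not affect correctness.
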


\begin{proof}
Let $(G,k,W)$ be the instance of \disjointpionetopiddeletioncomp. We first apply Lemma \ref{lemma:non-sep} to see if there is a non-separating solution for the instance. If not, we branch over all $W_1 \subset W$ and for each such choice of $W_1$, %, we add a connecting gadget on $W_1$ and 
apply {\sc Main-Algorithm} procedure with input $(G,k,W_1, W_2, \emptyset)$ to check if $(G,k,W_1, W_2 = W \setminus W_1)$ has a solution containing a $W_1 - W_2$ separator. The correctness and running time follows from those of Lemma \ref{lemma:non-sep} and correctness of the {\sc Main-Algorithm} procedure. %Lemma \ref{lemma:sep-algo-final}.
\end{proof}

%\begin{theorem}
%\label{theorem:main-result}
% \pionetopiddeletion\ can be solved in $2^{k^{\OO(1)}}n^{\OO(1)}$ time. 
%\end{theorem}
%\begin{proof} 

As mentioned in Section \ref{section:iterative-compression}, the time taken to solve \pionetopiddeletion\ is $2^{k+1} \cdot 2^{k^{\OO(1)}}n^{\OO(1)} = 2^{k^{\OO(1)}}n^{\OO(1)}$. This proves Theorem \ref{theorem:main-result}.
%\end{proof}
 %(\textbf{Incomplete})

%\input{finite-set-with-path.tex}
%
%%\input{general-case-result.tex}
%\section{Deletion to Trees and Cliques}
%\label{sec:trees-cliques}
%\input{trees-and-cliques-algo.tex}
%source-scattered-deletion-main-file

\section{Conclusion}

We have initiated a study on vertex deletion problems to scattered graph classes and showed that the problem is FPT when there are a finite number of graph classes, the deletion problem corresponding to each of the finite classes is known to be FPT and the properties that a graph belongs to each of the classes is expressible in CMSO logic.  Furthermore, we show that in the case where each graph class has a finite forbidden set, the problem is fixed-parameter tractable by a $O^*(2^{k^{\OO(1)}})$ algorithm. The existence of a polynomial kernel for these cases are natural open problems. A later paper by a subset of authors \cite{jacob2021faster} gives faster algorithms when the problem is restriced to a pair of some specific graph classes.

 %Other open problems include obtaining improved algorithms at least for special cases of finite classes and when some of them have infinite forbidden sets.

%%
%% Bibliography
%%

%% Please use bibtex, 

%\bibliography{scattered-deletion-main-file.bib}

%\input{Appendix}
%\bibliography{references.bib}

\end{document}